\documentclass[11pt, reqno]{amsart}

\usepackage{amsmath}
\usepackage{amsfonts}
\usepackage{amssymb}
\usepackage{amsthm}
\usepackage[scr=dutchcal]{mathalpha}
\usepackage{mathtools}
\mathtoolsset{showonlyrefs}
\usepackage{dsfont}
\usepackage{tikz-cd}
\usepackage[italicdiff]{physics}
\usepackage{bm}
\usepackage{enumitem}
\usepackage{todonotes}
\usepackage{graphicx}
\usepackage{color}
\usepackage[colorlinks,linkcolor=blue,urlcolor=blue,citecolor=blue,destlabel=true]{hyperref}
\usepackage{comment}
\usepackage{cancel}
\numberwithin{equation}{section}

\makeatletter
\newtheorem*{rep@theorem}{\rep@title}
\newcommand{\newreptheorem}[2]{%
\newenvironment{rep#1}[1]{%
 \def\rep@title{#2 \ref{##1}}%
 \begin{rep@theorem}}%
 {\end{rep@theorem}}}
\makeatother

\newtheorem{thm}{Theorem}[section]
\newtheorem*{thm*}{Theorem}
\newtheorem{theorem}[thm]{Theorem}
\newreptheorem{theorem}{Theorem}

\newtheorem{prop}[thm]{Proposition}
\newtheorem{proposition}[thm]{Proposition}

\newtheorem{lemma}[thm]{Lemma}

\theoremstyle{definition}
\newtheorem{defn}[thm]{Definition}
\newtheorem*{defn*}{Definition}
\newtheorem{definition}[thm]{Definition}

\newtheorem{remark}[thm]{Remark}

\newtheorem{assumptions}[thm]{Assumptions}


\newcommand{\bbC}{\mathbb{C}}

\newcommand{\bbN}{\mathbb{N}}

\newcommand{\bbQ}{\mathbb{Q}}
\newcommand{\bbR}{\mathbb{R}}

\newcommand{\bbZ}{\mathbb{Z}}


\newcommand{\cB}{\mathfrak{B}}

\newcommand{\cE}{\mathcal{E}}
\newcommand{\cF}{\mathcal{F}}
\newcommand{\cG}{\mathcal{G}}


\newcommand{\sD}{\mathscr{D}}

\newcommand{\sH}{\mathscr{H}}

\newcommand{\sK}{\mathscr{K}}


\newcommand{\fB}{\mathfrak{B}}

\newcommand{\fM}{\mathfrak{M}}

\newcommand{\rr}{\mathbb{R}}

\newcommand{\defeq}{\vcentcolon=}

\DeclareMathOperator{\vecspan}{span}

\newcommand{\tn}[1]{\textnormal{#1}}

\DeclareMathOperator{\essran}{\tn{ess\,ran}}

\newcommand{\hilbH}{\sH}
\newcommand{\hilbK}{\sK}
\newcommand{\HS}{{\mathfrak{S}_2}}
\newcommand{\TC}{{\mathfrak{S}_1}}
\newcommand{\1}{\mathds{1}}
\newcommand{\B}{B}
\renewcommand{\ae}{\textnormal{a.e.\ }}
\newcommand{\inorm}[3]{\|#1\|_{#3}}
\newcommand{\supp}{\mathrm{supp}}

\title[Foundations of Double Operator Integrals]{Foundations of Double Operator Integrals with a Variant Approach to the Nonseparable Case}
\author[Ferydouni]{Robert Ferydouni\textsuperscript{1,2,$*$}}
\author[Spiegel]{Daniel D.\ Spiegel\textsuperscript{1,2,3,4,$**$}}

\thanks{\textsuperscript{1}Department of Mathematics, University of California, Davis}
\thanks{\textsuperscript{2}Center for Quantum Mathematics and Physics, University of California, Davis}
\thanks{\textsuperscript{3}Department of Mathematics, Harvard University}
\thanks{\textsuperscript{4}Simons Collaboration on Global Categorical Symmetries}
\thanks{\textsuperscript{*}rferydouni@ucdavis.edu}
\thanks{\textsuperscript{**}dspiegel@math.harvard.edu}

\begin{document}

\begin{abstract}
We aim to give a self-contained and detailed yet simplified account of the foundations of the theory of double operator integrals, in order to provide an accessible entry point to the theory. We make two new contributions to these foundations: (1) a new proof of the existence of the product of two projection-valued measures, which allows for the definition of the double operator integral for Hilbert-Schmidt operators, and (2) a variant approach to the integral projective tensor product on arbitrary (not necessarily separable) Hilbert spaces using a somewhat more explicit norm than has previously been given. We prove the Daletskii-Krein formula for strongly differentiable perturbations of a densely-defined self-adjoint operator and conclude by reviewing an application of the theory to quantum statistical mechanics.
\end{abstract}

\maketitle

\tableofcontents

\section{Introduction}

A definition of a double operator integral seeks to give precise meaning to an expression of the form
\begin{equation}\label{eq:doi}
	\iint \phi(x,y) \dd{E(x)} T \dd{F(y)}
\end{equation}
where $\phi$ is a scalar-valued function, $E$ and $F$ are projection-valued measures in Hilbert spaces $\sH$ and $\sK$, respectively, and $T$ is an operator from $\sK$ to $\sH$. Similarly, a multiple operator integral is a precise interpretation of an object of the form
\[
	\int \phi(x_1,\ldots, x_n) \dd{E_1(x_1)}T_1 \dd{E_2(x_2)}T_2 \cdots  \dd{E_{n-1}(x_{n-1})}T_{n-1} \dd{E_n(x_n)}.
\]

Double operator integrals were first considered by Daletskii and Krein \cite{DaletskiiKrein}, in a context where $E_s$ is the spectral measure of a family of bounded self-adjoint operators $A(s)$ that are continuously differentiable in the norm topology with respect to a single real variable $s$. A double operator integral is then used to make sense of the equation
\[
\frac{df(A(s))}{ds} = \iint \frac{f(x) - f(y)}{x - y} \dd{E_s(x)} A'(s) \dd{E_s(y)}
\]
for $C^2$ functions $f$. This equation is now known as the \emph{Daletskii-Krein formula}.
Double operator integrals were subsequently provided with a well-developed theory by Birman and Solomyak \cite{BirmanSolomyakI,BirmanM.Sh1968SDOI,MR348494}. By now, double and multiple operator integrals have proven to be useful tools for many problems in operator algebras and perturbation theory and various approaches to defining \eqref{eq:doi} have emerged (e.g. \cite{PagterSukochev,DoddsDoddsSukochevZaninII,Nikitopolous}). There are also a handful of survey articles \cite{Birman_Solomyak_DoubleOperatorIntegral,PellerSurvey} and textbooks \cite{SkripkaTomskova,HiaiKosaki} that cover their history, theory, and applications.

One rigorous definition of the double operator integral \eqref{eq:doi}, followed by Nikitopolous \cite{Nikitopolous}, can be outlined as follows. Starting from projection-valued measures $E:\Sigma_X \to \fB(\sH)$ and $F:\Sigma_Y \to \fB(\sK)$  on measurable spaces $(X, \Sigma_X)$ and $(Y, \Sigma_Y)$, one constructs a projection-valued measure $\cG:\Sigma_{X \times Y} \to \fB(\HS(\sK, \sH))$, where $\Sigma_{X \times Y}$ is the $\sigma$-algebra generated by rectangles with sides in $\Sigma_X$ and $\Sigma_Y$, and $\HS(\sK, \sH)$ is the Hilbert space of Hilbert-Schmidt operators $\sK \to \sH$. The projection-valued measure $\cG$ can be characterized as the unique projection-valued measure on $\Sigma_{X \times Y}$ satisfying
\[
	\cG(\Gamma \times \Delta)T = E(\Gamma)TF(\Delta)
\]
for all $\Gamma \in \Sigma_X$, $\Delta \in \Sigma_Y$, and $T \in \HS(\sK, \sH)$. With $\cG$ in hand, one may define $\int \phi \dd{\cG}$ using the conventional theory of integration with projection-valued measures for bounded measurable functions $\phi:X \times Y \to \bbC$ (e.g. \cite{Murphy}). One identifies
\[
	\iint \phi(x,y) \dd{E(x)} T \dd{F(y)} \defeq \int \phi \dd{\cG} T
\]
for all $T \in \HS(\sK, \sH)$ and bounded measurable $\phi:X \times Y \to \bbC$. To extend the definition of the double operator integral to arbitrary $T \in \fB(\sK, \sH)$, one restricts to a special class of functions $\phi$ for which $\int \phi \dd{\cG}$ maps the set of trace-class operators $\TC(\sK, \sH)$ into itself. Defining $\widetilde \phi(y,x) = \phi(x,y)$ and a projection-valued measure $\widetilde \cG : \Sigma_{Y \times X} \to \fB(\HS(\sH, \sK))$ similarly to $\cG$, we obtain an operator $\int \widetilde \phi \dd{\widetilde \cG}$ that maps $\TC(\sH, \sK)$ into itself. Taking the Banach space adjoint of $\int \widetilde \phi \dd{\widetilde \cG}$ and applying the isometric isomorphism
\[
	\fB(\sK, \sH) \to \TC(\sH, \sK)^*, \quad T \mapsto (S \mapsto \Tr(ST))
\]
gives a definition of $\int \phi \dd{\cG} T$ for $T \in \fB(\sK, \sH)$.

For most of its history the theory of double operator integrals has been confined to the case where $\sH$ and $\sK$ are separable Hilbert spaces. Only recently was the general case provided a full treatment by Nikitopolous \cite{Nikitopolous}. One of the difficulties with the nonseparable case lies in adequately describing a space of functions $\phi$ for which $\int \phi \dd{\cG}$ maps $\TC(\sK, \sH)$ into itself. In the separable case, it was proven by Peller \cite{PellerSchurMultipliers} that this condition is equivalent to $\phi$ admitting a decomposition
\begin{equation}\label{eq:decomposition}
	\phi(x,y) = \int_Z \alpha(x,z)\beta(y,z) \dd{\nu(z)} \quad \tn{for $\cG$-a.e.\ $(x,y) \in X \times Y$}
\end{equation}
for some finite positive measure space $(Z, \Sigma_Z, \nu)$ and bounded measurable functions $\alpha:X \times Z \to \bbC$ and $\beta:Y \times Z \to \bbC$. Whether these conditions are equivalent in the nonseparable case is unknown, but one can show that the existence of a decomposition \eqref{eq:decomposition} implies that $\int \phi \dd{\cG}$ maps $\TC(\sK, \sH)$ into itself \cite[Thm.~4.2.9]{Nikitopolous}.

One may therefore consider the space of functions $\phi$ satisfying \eqref{eq:decomposition} (or some generalization thereof, e.g.\ \cite[Def.~4.1.2]{Nikitopolous}). This is a vector space that one wishes to endow with a complete norm. In the separable case, Peller gives the norm
\begin{equation}\label{eq:peller_norm}
	\norm{\phi} = \inf \int_Z \norm{\alpha(\,\cdot\,, z)}_{L^\infty(E)} \norm{\beta(\,\cdot \, , z)}_{L^\infty(F)} \dd{\nu(z)}
\end{equation}
where the infimum is taken over all representations of $\phi$ as in \eqref{eq:decomposition}. In the nonseparable case, one must contend with the issue that $z \mapsto \norm{\alpha(\, \cdot \, , z)}_{L^\infty(E)}$ and $z \mapsto \norm{\beta(\, \cdot \, , z)}_{L^\infty(F)}$ need not be measurable. Nikitopolous addresses this issue by replacing the integral in \eqref{eq:peller_norm} with an upper integral.\footnote{We refer to Nikitopolous \cite[Rem.~4.1.3]{Nikitopolous} for further discussion of the subtleties surrounding the measurability of these functions.} By definition, if $(X, \Sigma_X, \mu)$ is a positive measure space and $f:X \to [0,\infty]$ is a (not necessarily measurable) function, then its \emph{upper integral} is
\[
	\overline{\int_X} f \dd{\mu} \defeq \inf\left\{ \int_X \tilde f \dd{\mu} \bigg| \tn{ $\tilde f:X \to [0,\infty]$ is measurable and $f \leq \tilde f$ $\mu$-a.e.} \right\}.
\]
One may also define a \emph{lower integral} similarly:
\[
	\underline{\int_X} f \dd{\mu} \defeq \sup\left\{ \int_X \tilde f \dd{\mu} \bigg| \tn{ $\tilde f:X \to [0,\infty]$ is measurable and $\tilde f \leq f$ $\mu$-a.e.} \right\}.
\]

The main new result of this article is that one also obtains a complete norm essentially by taking a lower integral, except that the supremum is taken over a natural and explicitly identified set of functions $\tilde f$. Precisely, we define $\fM_0$ to be the set of bounded measurable functions $\phi:X \times Y \to \bbC$ for which there exists a decomposition as in \eqref{eq:decomposition}. Then, we define
\[
	\norm{\phi}_{\fM_0} \defeq \inf_{(Z, \Sigma_Z, \nu, \alpha, \beta)} \sup_{v, w} \int_Z \norm{\alpha(\, \cdot \, ,  z)}_{L^\infty(E_{v,v})}\norm{\beta(\, \cdot \, , z)}_{L^\infty(F_{w,w})} \dd{\nu(z)}
\]
where the infimum is taken over all decompositions of $\phi$ as in \eqref{eq:decomposition}, the supremum is taken over all $v \in \sH$ and $w \in \sK$, and $E_{v,v}$ and $F_{w,w}$ are the positive measures $E_{v,v}(\,\cdot\,) = \ev{v, E(\,\cdot\,)v}$ and $F_{w,w}(\,\cdot\,) = \ev{w, F(\,\cdot\,)w}$. One can show that the integrand above is indeed measurable in all cases (see Proposition \ref{prop:esssup_measurable}). Section \S\ref{sec:integral_projective_tensor_products} is largely devoted to the proof of the following theorem.

\begin{thm*}[\ref{thm:integral_projective_tensor_product}]
The set $\fM_0$ is a unital $*$-subalgebra of the $C^*$-algebra $B(X \times Y)$ of bounded measurable functions $X \times Y \to \bbC$. The function $\norm{\cdot}_{\fM_0}$ is a submultiplicative seminorm on $\fM_0$ invariant under complex conjugation. Defining $\pi:B(X \times Y) \to L^\infty(\cG)$ to be the canonical projection, the seminorm $\norm{\cdot}_{\fM_0}$ further satisfies
\[
	\norm{\pi(\phi)}_{L^\infty(\cG)} \leq \norm{\phi}_{\fM_0}
\]
for every $\phi \in \fM_0$. Defining $\fM = \pi(\fM_0)$, the seminorm $\norm{\cdot}_{\fM_0}$ descends to a well-defined norm on $\fM$ making $\fM$ into a unital Banach $*$-algebra.
\end{thm*}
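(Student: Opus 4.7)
The plan is to handle the algebraic and seminorm assertions by constructing witnessing decompositions, to prove the $L^\infty(\cG)$ bound via rank-one Hilbert--Schmidt operators (which is the key analytic input), and finally to establish completeness by an absolutely summable series argument. I expect completeness to be the main obstacle.

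For the $*$-subalgebra structure and the seminorm axioms, I would exhibit explicit decompositions. The constant function $1$ is witnessed by a one-point measure space of total mass $1$. Given $\phi_i \in \fM_0$ with decompositions $(Z_i, \nu_i, \alpha_i, \beta_i)$ for $i = 1, 2$, the sum $\phi_1 + \phi_2$ is decomposed over the disjoint union $(Z_1 \sqcup Z_2, \nu_1 + \nu_2)$ with piecewise $\alpha, \beta$; the product $\phi_1 \phi_2$ is decomposed over $(Z_1 \times Z_2, \nu_1 \times \nu_2)$ with $\alpha(x,(z_1,z_2)) = \alpha_1(x,z_1)\alpha_2(x,z_2)$ and $\beta$ analogous (using Tonelli to rewrite the iterated integral as a single integral); and scalar multiplication and complex conjugation are handled by rescaling or conjugating $\alpha, \beta$. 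Positive homogeneity, the triangle inequality (essential suprema over the disjoint union are maxima of the summands, and sups of sums are subadditive), and conjugation invariance are immediate from these constructions. Submultiplicativity follows from the product-measure decomposition together with the pointwise bound $\|fg\|_{L^\infty} \leq \|f\|_{L^\infty}\|g\|_{L^\infty}$ applied inside the integrand and Tonelli to separate the supremum over $(v,w)$ into a product.

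The key analytic input is the bound $\|\pi(\phi)\|_{L^\infty(\cG)} \leq \|\phi\|_{\fM_0}$. For a unit-norm rank-one Hilbert--Schmidt operator $T = v \otimes w^*$ (sending $\xi \mapsto \ev{w,\xi}v$), the defining property $\cG(\Gamma \times \Delta)T = E(\Gamma) T F(\Delta)$ yields $\cG_{T,T}(\Gamma \times \Delta) = E_{v,v}(\Gamma) F_{w,w}(\Delta)$ on rectangles, so $\cG_{T,T} = E_{v,v} \times F_{w,w}$ on $\Sigma_{X \times Y}$ by uniqueness of product measures. A polarization argument shows that $\cG(\Sigma) = 0$ iff $\cG_{T,T}(\Sigma) = 0$ for all rank-one $T$, whence
\[
\|\pi(\phi)\|_{L^\infty(\cG)} = \sup_{\|v\|=\|w\|=1}\|\phi\|_{L^\infty(E_{v,v}\times F_{w,w})}.
\]
For any decomposition as in \eqref{eq:decomposition}, the pointwise bound $|\phi(x,y)| \leq \int_Z |\alpha(x,z)||\beta(y,z)|\dd{\nu(z)}$ holds $\cG$-a.e.\ (hence $E_{v,v}\times F_{w,w}$-a.e.), and Tonelli together with the fact that essential suprema factor on tensor-product integrands under product measures yields
\[
\|\phi\|_{L^\infty(E_{v,v}\times F_{w,w})} \leq \int_Z \|\alpha(\,\cdot\,, z)\|_{L^\infty(E_{v,v})}\|\beta(\,\cdot\,, z)\|_{L^\infty(F_{w,w})}\dd{\nu(z)};
\]
the integrand is measurable by Proposition \ref{prop:esssup_measurable}. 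Taking the sup over $(v,w)$ and the infimum over decompositions gives the bound.

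Descent of $\|\cdot\|_{\fM_0}$ to a seminorm on $\fM = \pi(\fM_0)$ is automatic since \eqref{eq:decomposition} need only hold $\cG$-a.e., so $\|\cdot\|_{\fM_0}$ depends only on $\pi(\phi)$; the previous inequality then forces definiteness, and the $*$-algebra structure descends because $\cG$-null functions form a $*$-ideal in $\fM_0$ and $\pi$ is a $*$-algebra homomorphism. The main obstacle is completeness, which I would establish by showing every absolutely summable series converges. Given $(\pi(\phi_n))_{n \in \bbN}$ with $\sum_n \|\pi(\phi_n)\|_{\fM} < \infty$, choose decompositions $(Z_n, \nu_n, \alpha_n, \beta_n)$ of representatives $\phi_n$ with
\[
\sup_{v,w}\int_{Z_n}\|\alpha_n(\,\cdot\,, z)\|_{L^\infty(E_{v,v})}\|\beta_n(\,\cdot\,, z)\|_{L^\infty(F_{w,w})}\dd{\nu_n(z)} \leq \|\pi(\phi_n)\|_{\fM} + 2^{-n},
\]
form $(Z, \nu) = \bigsqcup_n (Z_n, \nu_n)$ with $\alpha, \beta$ defined piecewise, and define $\phi(x,y) \defeq \int_Z \alpha(x,z)\beta(y,z)\dd{\nu(z)}$ on the $\cG$-conull measurable set where the integral converges absolutely (this set is $\cG$-conull because $\int_Z |\alpha||\beta| \dd{\nu}$ lies in $L^\infty(\cG)$ by the previous paragraph's bound), and $\phi \defeq 0$ elsewhere. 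Then $\phi \in \fM_0$ with the inherited decomposition, and restricting this decomposition to $\bigsqcup_{n > N} Z_n$ yields
\[
\Big\|\pi(\phi) - \sum_{n \leq N}\pi(\phi_n)\Big\|_{\fM} \leq \sum_{n > N}\bigl(\|\pi(\phi_n)\|_{\fM} + 2^{-n}\bigr) \xrightarrow{N\to\infty} 0,
\]
completing the Banach $*$-algebra structure.
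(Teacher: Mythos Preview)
Your treatment of the $*$-algebra structure, the seminorm axioms, and the $L^\infty(\cG)$ bound is correct and closely parallels the paper's; your rank-one route to the $L^\infty$ bound is a minor variant of the paper's use of Theorem~\ref{thm:G_TT_absolute_cont_product}.

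The completeness argument, however, has a genuine gap. By Definition~\ref{def:function_space}, a decomposition witnessing membership in $\fM_0$ must use a \emph{finite} positive measure $\nu$ and \emph{bounded} functions $\alpha,\beta$. Your disjoint union $(Z,\nu)=\bigsqcup_n(Z_n,\nu_n)$ is in general only $\sigma$-finite, and the piecewise $\alpha,\beta$ need not be uniformly bounded; even after normalizing each $\alpha_n,\beta_n$ to have sup-norm at most $1$, nothing controls $\sum_n\nu_n(Z_n)$. Hence ``$\phi\in\fM_0$ with the inherited decomposition'' is not justified, and neither is your tail estimate, since $\|\cdot\|_{\fM_0}$ is an infimum over finite-measure decompositions only. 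The paper closes this gap with Lemma~\ref{lem:tildes}(c): given a $\sigma$-finite $\nu$ with bounded $\alpha,\beta$ and $\sup_{v,w}N_{v,w}<\infty$, one constructs a genuine finite-measure decomposition $(\tilde\nu,\tilde\alpha,\tilde\beta)$ with the same values of $N_{v,w}$ and the same integral $\cG$-a.e. The construction sets $d\tilde\nu=\|\alpha_z\|_{v_0}\|\beta_z\|_{w_0}\,d\nu$ for a single pair $(v_0,w_0)$ that dominates all the $L^\infty(E_{v,v})$ and $L^\infty(F_{w,w})$ norms for $\nu$-a.e.\ $z$; finding such a pair is exactly Lemma~\ref{lem:esssup_ae_dominator}, which in turn rests on Theorem~\ref{thm:sep_spectral_dominator}. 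This renormalization step is the missing ingredient in your argument.
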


In addition to proving the above theorem, another main purpose of this article is to provide an accessible exposition of the foundations of double operator integrals in the context of arbitrary Hilbert spaces that is nonetheless detailed and self-contained. For simplicity and readability we have restricted our attention to double operator integrals, rather than multiple operator integrals, and consider only the ideals of Hilbert-Schmidt, trace-class, and bounded operators. While the exposition follows a similar trajectory to \cite{Nikitopolous,nikitopoulos2023higher}, some of our proofs have been simplified. 

In Theorem \ref{thm:double_PVM} we provide a new construction of the projection-valued measure $\cG:\Sigma_{X \times Y} \to \fB(\HS(\sK, \sH))$. 
The original construction of Birman and Solomyak constructs $\cG$ by a process similar to the proof of the Carath\'eodory extension theorem \cite{Birman_Solomjak_SpectralTheory}. Insead, we will show how the Carath\'eodory extension theorem can be combined with a representation theorem for bounded quadratic forms on a Hilbert space (Theorem \ref{thm:Hilbert_quadratic_form}) to construct $\cG$ more efficiently. Although this representation theorem is elementary and originally appeared in \cite{Lukes}, to the authors' knowledge this is the first time in the literature that it has appeared in English with a detailed proof.

 In Theorem \ref{thm:TC_to_TC} we have provided a streamlined proof that $\phi \in \fM$ implies $\int \phi \dd{\cG}$ maps $\TC(\sK, \sH)$ to $\TC(\sK, \sH)$ and through Proposition \ref{prop:ultraweakly_meas}, Theorem \ref{thm:doi_TC_mel}, and Theorem \ref{thm:doi_bounded_T} we have streamlined the derivation of the equation 
 \[
 	\Tr(S^* \int \phi \dd{\cG} T) = \int_Z \Tr(S^*\int_X \alpha_z \dd{E} T \int_Y \beta_z \dd{F}) \dd{\nu(z)}
 \]
 where $S \in \TC(\sK, \sH)$, $T \in \fB(\sK, \sH)$, and $\phi$ is decomposed as in \eqref{eq:decomposition}.  In Proposition \ref{prop:SOT_sequential_continuity} we provide a proof that $\int \phi \dd{\cG}:\fB(\sK, \sH) \to \fB(\sK, \sH)$ is sequentially continuous with respect to the strong operator topology on domain and codomain, which to the authors' knowledge has so far only been proved in the case of separable Hilbert spaces \cite{AzamovCareyDoddsSukochev}.

In \S\ref{sec:DK} we prove the Daletskii-Krein formula in the simple case where $f \in W_1(\bbR)$, where $W_1(\bbR)$ is the first Wiener space, defined in Definition \ref{def:Wiener_space}. We do note that Nikitopolous \cite{nikitopoulos2023higher} has generalized to the nonseparable case the Daletskii-Krein formula and its higher-derivative counterparts for a broader class of functions, originally considered by Peller \cite{Pellerhigheroperatorderivatives}. On the other hand, we will consider a different type of perturbation than has previously been considered. Namely, we prove the Daletskii-Krein formula for functions $A(s) = A_0 + \Phi(s)$, where $A_0$ is a densely defined self-adjoint operator on a Hilbert space $\sH$ and $\Phi:J\to \fB(\sH)_\tn{sa}$ is a strongly differentiable function on an interval $J \subset \bbR$. This type of perturbation appears in quantum statistical mechanics \cite{NSY_quasilocality1}, for example.

Indeed, in \S\ref{sec:application} we review an application of this theory to quantum statistical mechanics that inspired the authors' interest in the subject. The results of \S\ref{sec:application} originally appeared in \cite{BMNS_automorphic_equivalence}, were expanded upon in \cite{NSY_quasilocality1}, and were inspired by work in the physics literature \cite{HastingsWen}. 
The set up is a parameter-dependent path of Hamiltonians $H(s)$, like $A(s)$ above, that have a piece of their spectrum isolated from the rest. With $P(s)$ as the projection onto the isolated part of the spectrum of $H(s)$, we will use the theory of double operator integrals to show the operator equation
\begin{equation}\label{eq:projection_evolution}
\frac{d}{ds} P(s) = i[D(s), P(s)]\,,
\end{equation}
after we show each side is a linear combination of double operator integrals applied to $\Phi'(s)$. Here, $D(s)$ is a self-adjoint operator, called the Hastings generator, which has an explicit expression in terms of $H(s)$ given in \S\ref{sec:application}. The main consequence of \eqref{eq:projection_evolution} is the existence of a parameter-dependent family of unitaries $U(s)$ on $\sH$ satisfying $P(s) = U(s) P(0) U(s)^*$. The unitaries arise as the unique strong solution to the linear differential equation
\[\frac{d}{ds} U(s) = iD(s) U(s), \quad U(0) = \mathds{1}\,.\]
We note that Kato \cite{Kato} showed the existence of unitaries $V(s)$ satisfying $P(s) = V(s) P(0) V(s)^*$, that arise as the unique solution to the linear differential equation
\[\frac{d}{ds} V(s) = [P'(s), P(s)] V(s), \quad V(0) = \mathds{1}\,.\]
A benefit of the theorem in \S\ref{sec:application} compared to Kato's theorem is the explicitness of the Hastings generator $D(s)$ and consequently $U(s)$, which \cite{BMNS_automorphic_equivalence} and \cite{NSY_quasilocality1} needed for their applications to quantum spin systems.

\subsection*{Notation, Terminology, and Conventions}
All our Hilbert spaces will be complex. Given Hilbert spaces $\sH$ and $\sK$, we let $\cB(\sK, \sH)$ denote the set of bounded linear maps $\sK \to \sH$. Given any set $\Omega$, any function $f:\Omega \rightarrow \cB(\sK, \sH)$, and any vectors $v \in \sH$ and $w \in \sK$, we let $f_{v,w}:\Omega \rightarrow \bbC$ be the function 
\[
f_{v,w}(\omega) \defeq \ev{v, f(\omega)w}.
\]
We let $\HS(\sK, \sH)$ denote the space of Hilbert-Schmidt operators $\sK \to \sH$ and $\TC(\sK, \sH)$ denote the space of trace-class operators $\sK \to \sH$.

The identity operator on a vector space is denoted by $\1$.

We adhere to certain mathematical physics conventions described below. We denote complex conjugation with an asterisk. All our inner products are linear in the second variable. Given Hilbert spaces $\sH$ and $\sK$ and vectors $v \in \sH$ and $w \in \sK$, we let $\ketbra{v}{w} : \sK \rightarrow \sH$ denote the operator
\[
\ketbra{v}{w}u \defeq \ev{w,u}v.
\]

Given measurable spaces $(X, \Sigma_X)$ and $(Y, \Sigma_Y)$, we let $\Sigma_X \times \Sigma_Y$ be the set of all measurable rectangles $A \times B$, where $A \in \Sigma_X$ and $B \in \Sigma_Y$. We let $\Sigma_{X \times Y}$ be the $\sigma$-algebra generated by $\Sigma_X \times \Sigma_Y$. We will always tacitly equip $X \times Y$ with the $\sigma$-algebra $\Sigma_{X \times Y}$. 

If $(X, \Sigma_X)$ is a measurable space, we let $\B(X)$ denote the Banach space of bounded measurable functions $X \rightarrow \bbC$, where $\bbC$ is equipped with its Borel $\sigma$-algebra, and $\B(X)$ is equipped with the supremum norm. We will more often work with the Banach spaces $L^\infty(E)$ and $L^\infty(E_{v,v})$, where $E$ is a projection-valued measure on a Hilbert space $\sH$ and $v \in \sH$. In fact we will frequently be in situations where there are multiple vectors $v$ to be considered at once. To make the notation less burdensome, we will abbreviate
\[
	\norm{\cdot}_{E} \defeq \norm{\cdot}_{L^\infty(E)} \qqtext{and} \norm{\cdot}_v \defeq \norm{\cdot}_{L^\infty(E_{v,v})}
\]
where these norms are the essential suprema of an equivalence class of functions with respect to the indicated measure. Alternatively we may think of these essential suprema as seminorms on $\B(X)$ and we will denote these seminorms with the same notation as in the display above. 

Finally, the characteristic function of a set $A \subset X$ is denoted $\chi_A$.

\subsection*{Acknowledgments}
The authors thank Bruno Nachtergaele for many helpful conversations and D.\ D.\ Spiegel also thanks Lorenzo Riva for helpful conversations. D.\ D.\ Spiegel thanks Jan Vybiral for pointing out the reference \cite{Lukes} for Theorem \ref{thm:Hilbert_quadratic_form}. While performing this work, R.\ Ferydouni was supported by the National Science Foundation under Grant No.\ DMS 2510824. D.\ D.\ Spiegel was supported by the National Science Foundation under Grant No.\ DMS 2303063 and by the Simons Foundation Award 888992 as part of the Simons Collaboration on Global Categorical Symmetries.

\section{Constructing a PVM on Hilbert-Schmidt Operators}\label{sec:PVM}

Let $\sH$ and $\sK$ be complex Hilbert spaces, let $(X, \Sigma_X)$ and $(Y, \Sigma_Y)$ be nonempty measurable spaces, and let $E:\Sigma_X \rightarrow \cB(\sH)$ and $F:\Sigma_Y \rightarrow \cB(\sK)$ be projection-valued measures. In this section, we follow 
\cite[\S3.1]{Birman_Solomyak_DoubleOperatorIntegral}
and \cite{Birman_Solomyak_tensor_product_PVM} and construct a projection-valued measure $\cG:\Sigma_{X \times Y} \to \cB(\HS(\sK, \sH))$ from $E$ and $F$. The construction first defines $\cG$ on $\Sigma_X \times \Sigma_Y$ and then extends to $\Sigma_{X \times Y}$. We present a new and streamlined proof of this extension by using Theorem \ref{thm:Hilbert_quadratic_form}, a representation theorem for bounded quadratic forms on a Hilbert space.

\begin{proposition}[{\cite[\S3.1]{Birman_Solomyak_DoubleOperatorIntegral}}]
The functions 
\begin{alignat}{2}
\cE &: \Sigma_X \to \cB(\HS(\sK, \sH)),& \quad \cE(\Gamma)T &= E(\Gamma)T\\
\cF &: \Sigma_Y \to \cB(\HS(\sK, \sH)),& \quad \cF(\Delta)T &= TF(\Delta)
\end{alignat}
are projection-valued measures. Furthemore,
\[
	[\cE(\Gamma), \cF(\Delta)] = 0
\]
for all $\Gamma \in \Sigma_X$ and $\Delta \in \Sigma_Y$.
\end{proposition}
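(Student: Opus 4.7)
The plan is to verify four things in sequence: that $\cE(\Gamma)$ and $\cF(\Delta)$ preserve $\HS(\sK,\sH)$ and are bounded, that they are orthogonal projections with respect to the Hilbert-Schmidt inner product, that $\cE$ and $\cF$ are countably additive in the strong operator topology, and finally that they commute on every $T \in \HS(\sK,\sH)$. Most of these will be immediate from the corresponding properties of $E$ and $F$; the only substantive step is countable additivity.

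Boundedness and preservation of $\HS(\sK,\sH)$ follow at once from the ideal property, combined with $\norm{E(\Gamma)}, \norm{F(\Delta)} \leq 1$, yielding $\norm{\cE(\Gamma)}, \norm{\cF(\Delta)} \leq 1$. Idempotency is immediate from $E(\Gamma)^2 = E(\Gamma)$ and $F(\Delta)^2 = F(\Delta)$. To check self-adjointness with respect to the inner product $\ev{S,T}_{\HS} = \Tr(S^*T)$, I will use self-adjointness of $E(\Gamma)$ in $\cB(\sH)$ and of $F(\Delta)$ in $\cB(\sK)$ together with the cyclic property of the trace; the normalizations $\cE(X)=\1$ and $\cF(Y)=\1$ are clear.

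The key step is countable additivity in the strong operator topology on $\cB(\HS(\sK,\sH))$. For a disjoint sequence $\{\Gamma_n\} \subset \Sigma_X$ with union $\Gamma$, I will set $P_N \defeq E(\Gamma \setminus \bigsqcup_{n \leq N}\Gamma_n)$, note that $P_N \to 0$ in the strong operator topology on $\sH$ by countable additivity of $E$, and compute
\[
	\Bigl\| \cE(\Gamma)T - \sum_{n=1}^N \cE(\Gamma_n)T \Bigr\|_{\HS}^2 = \norm{P_N T}_{\HS}^2 = \sum_k \norm{P_N T w_k}^2
\]
for any orthonormal basis $\{w_k\}$ of $\sK$. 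Each summand tends to $0$ as $N \to \infty$ and is dominated by the summable sequence $\norm{Tw_k}^2$, so the dominated convergence theorem for series gives $\norm{P_N T}_{\HS} \to 0$. For $\cF$, I will run the same argument after invoking the HS-norm identity $\norm{TQ_N}_{\HS} = \norm{Q_N T^*}_{\HS}$, where $Q_N$ is the analogous residual projection on $\sK$; this reduces the problem to the previous case applied to $T^* \in \HS(\sH,\sK)$.

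Commutativity is the one-line verification $\cE(\Gamma)\cF(\Delta)T = E(\Gamma)TF(\Delta) = \cF(\Delta)\cE(\Gamma)T$ for every $T \in \HS(\sK,\sH)$. The main obstacle I anticipate is the dominated-convergence step, but it is harmless: the bound $\norm{P_N T w_k} \leq \norm{T w_k}$ uniform in $N$ is automatic because each $P_N$ is a projection.
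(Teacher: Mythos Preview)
Your proof is correct. The paper treats boundedness, idempotency, self-adjointness, normalization, and the commutation relation just as you do, dismissing them as an easy exercise. The only genuine difference lies in the countable-additivity step. Rather than expanding $\norm{P_N T}_{\HS}^2$ over an orthonormal basis of $\sK$ and invoking dominated convergence for series, the paper first observes that the partial sums $\sum_{k=1}^N \cE(\Gamma_k)$ form an increasing sequence of orthogonal projections on $\HS(\sK,\sH)$, which therefore converge in the strong operator topology automatically; it then identifies the limit as $\cE(\Gamma)$ by testing against rank-one operators $\ketbra{v}{w}$, reducing everything to countable additivity of the scalar measure $\Gamma \mapsto \ev{v,E(\Gamma)Tw}$. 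Your argument is slightly more elementary and self-contained (no appeal to the monotone-projection convergence fact), while the paper's argument avoids choosing an orthonormal basis and foreshadows the rank-one testing technique that recurs throughout the article. Both are short and fully adequate.
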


\begin{proof}
It is an easy exercise to see that $\cE$ and $\cF$ are well-defined projection-valued functions, that $\cE(X) = \cF(Y) = \1$, and that $[\cE(\Gamma), \cF(\Delta)] = 0$ for all $\Gamma \in \Sigma_X$ and $\Delta \in \Sigma_Y$. 

We observe that if $\Gamma, \Gamma' \in \Sigma_X$ are disjoint, then $\cE(\Gamma_1)$ and $\cE(\Gamma_2)$ are orthogonal projections since $E(\Gamma_1)$ and $E(\Gamma_2)$ are orthogonal projections. 
Now let $(\Gamma_k)_{k \in \bbN}$ be a sequence of pairwise disjoint elements of $\Sigma_X$ and let $\Gamma = \bigcup_{k \in \bbN} \Gamma_k$. We know $\sum_{k=1}^\infty \cE(\Gamma_k)$ converges to a projection in the strong operator topology on $\cB(\HS(\sK, \sH))$ since the partial sums are an increasing sequence of projections. Given $T \in \HS(\sK, \sH)$, $v \in \sH$, and $w \in \sK$, we have
\begin{align*}
\ev{v, \qty(\qty(\sum_{k=1}^\infty \cE(\Gamma_k))T)w}  &= \ev{\ketbra{v}{w}, \sum_{k=1}^\infty \qty(\cE(\Gamma_k)T)}_\HS\\
&= \sum_{k=1}^\infty \ev{\ketbra{v}{w}, E(\Gamma_k)T}_\HS\\
&= \sum_{k=1}^\infty \ev{v,E(\Gamma_k)Tw}\\
&= \ev{v, E(\Gamma)Tw}\\
&= \ev{v, \qty(\cE(\Gamma)T)w}.
\end{align*}
This proves that $\cE(\Gamma)T = \qty(\sum_{k=1}^\infty \cE(\Gamma_k))T$. Since $T$ was arbitrary, this proves that $\cE$ is countably additive. A similar calculation shows that $\cF$ is countably additive.
\end{proof}

\begin{proposition}[{\cite[\S3.1]{Birman_Solomyak_DoubleOperatorIntegral}}]\label{prop:curly_E_F}
If $\alpha \in \B(X)$, then
\[
\qty(\int \alpha \dd{\cE})(T) = \int \alpha \dd{E} \cdot T 
\]
and if $\beta \in \B(Y)$, then
\[
\qty(\int \beta \dd{\cF})(T) = T \cdot \int \beta \dd{F}.
\]
\end{proposition}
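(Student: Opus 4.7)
The plan is to verify both equations for simple functions first and then extend by continuity. I will focus on the statement for $\alpha \in B(X)$; the statement for $\beta \in B(Y)$ is proved by an identical argument with $\mathcal{F}$ in place of $\mathcal{E}$ and $F$ in place of $E$.

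First I would check the case $\alpha = \chi_\Gamma$ for $\Gamma \in \Sigma_X$. By the defining property of the spectral integral with respect to a projection-valued measure, $\int \chi_\Gamma \dd{\mathcal{E}} = \mathcal{E}(\Gamma)$ and $\int \chi_\Gamma \dd{E} = E(\Gamma)$. Applying the definition of $\mathcal{E}$ then gives $\left(\int \chi_\Gamma \dd{\mathcal{E}}\right)(T) = \mathcal{E}(\Gamma)T = E(\Gamma)T = \left(\int \chi_\Gamma \dd{E}\right) \cdot T$ for every $T \in \mathfrak{S}_2(\sK, \sH)$. By linearity of the spectral integral, the identity then holds for all simple functions $\alpha$.

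Next I would pass to an arbitrary $\alpha \in B(X)$ by uniform approximation. Since $\alpha$ is bounded and measurable, there is a sequence $(\alpha_n)_{n \in \bbN}$ of simple functions with $\|\alpha_n - \alpha\|_{\sup} \to 0$. The standard bounds for spectral integrals give
\[
	\left\|\int \alpha_n \dd{\mathcal{E}} - \int \alpha \dd{\mathcal{E}}\right\|_{\cB(\mathfrak{S}_2)} \leq \|\alpha_n - \alpha\|_{\sup} \quad\text{and}\quad \left\|\int \alpha_n \dd{E} - \int \alpha \dd{E}\right\|_{\cB(\sH)} \leq \|\alpha_n - \alpha\|_{\sup},
\]
so $\int \alpha_n \dd{\mathcal{E}} \to \int \alpha \dd{\mathcal{E}}$ in $\cB(\mathfrak{S}_2)$ and $\int \alpha_n \dd{E} \to \int \alpha \dd{E}$ in $\cB(\sH)$. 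For fixed $T \in \mathfrak{S}_2(\sK, \sH)$, right multiplication by $T$ is a bounded map $\cB(\sH) \to \mathfrak{S}_2(\sK, \sH)$ with norm at most $\|T\|_{\mathfrak{S}_2}$, so continuity yields
\[
	\left(\int \alpha \dd{\mathcal{E}}\right)(T) = \lim_{n \to \infty} \left(\int \alpha_n \dd{\mathcal{E}}\right)(T) = \lim_{n \to \infty} \left(\int \alpha_n \dd{E}\right) \cdot T = \left(\int \alpha \dd{E}\right) \cdot T,
\]
which is the desired identity.

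I do not expect any genuine obstacle here; the argument is the standard three-step template of verifying the formula on characteristic functions, extending by linearity to simple functions, and then passing to bounded measurable functions via uniform approximation, using only the norm bound $\|\int \alpha \dd{\mathcal{E}}\|_{\cB(\mathfrak{S}_2)} \leq \|\alpha\|_{\sup}$ (and its analogue for $E$) together with continuity of right multiplication by a fixed Hilbert–Schmidt operator. The only mild care required is to record that the spectral-integral norm bounds are uniform in the underlying Hilbert space, which holds for any projection-valued measure.
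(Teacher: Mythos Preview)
Your proof is correct. The paper takes a different and somewhat slicker route: rather than approximating by simple functions, it fixes $x \in \sH$, $y \in \sK$ and observes that the scalar measures satisfy $\cE_{\ketbra{x}{y},T} = E_{x,Ty}$, so that
\[
\ev{x, \qty(\int \alpha \dd{\cE})(T)y} = \int \alpha \dd{\cE_{\ketbra{x}{y},T}} = \int \alpha \dd{E_{x,Ty}} = \ev{x, \int \alpha \dd{E}\, Ty}
\]
directly for every $\alpha \in B(X)$, with no limiting argument needed. Your three-step template is more elementary and entirely self-contained; the paper's argument is shorter but relies on the characterization of the spectral integral through its scalar measures $\int \alpha \dd{\cE_{S,T}}$.
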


\begin{proof}
Let $x \in \sH$ and $y \in \sK$. Note that $\cE_{\ketbra{x}{y}, T} = E_{x, Ty}$. Then
\begin{align*}
\ev{x, \qty(\int \alpha \dd{\cE})(T)y} &= \ev{\ketbra{x}{y}, \qty(\int \alpha \dd{\cE})T}_\HS\\
&= \int \alpha \dd{\cE_{\ketbra{x}{y}, T}}\\
&= \int \alpha \dd{E_{x,Ty}}\\
&= \ev{x, \int \alpha \dd{E} Ty}.
\end{align*}
This proves the first identity. A similar calculation using $\cF_{\ketbra{x}{y}, T} = F_{T^*x,y}$ proves the second identity.
\end{proof}

For the next lemma, note that if $\lambda$ and $\mu$ are two complex measures on $\Sigma_X$ and $\Sigma_Y$, then there exists a unique complex measure $\lambda \times \mu$ on $\Sigma_{X \times Y}$ such that 
\[
(\lambda \times \mu)(\Gamma \times \Delta) = \lambda(\Gamma)\mu(\Delta)
\]
for all $\Gamma \in \Sigma_X$ and $\Delta \in \Sigma_Y$. The existence follows by breaking $\lambda$ and $\mu$ into linear combinations of four finite positive measures, extending the products of these finite positive measures to $\Sigma_{X \times Y}$, and assembling these product measures in the appropriate linear combination. For uniqueness, if $\nu$ and $\nu'$ are two complex measures on $\Sigma_{X \times Y}$ satisfying $\nu(\Gamma \times \Delta) = \nu'(\Gamma \times \Delta)$ for all $\Gamma \in \Sigma_X$ and $\Delta \in \Sigma_Y$, then the subset of $\Sigma_{X \times Y}$ where $\nu$ and $\nu'$ agree is a $\lambda$-system containing the $\pi$-system of measurable rectangles and is therefore all of $\Sigma_{X \times Y}$ by the Dynkin $\pi$-$\lambda$ theorem.

\begin{lemma}[{\cite[Thm.~1]{Birman_Solomyak_tensor_product_PVM}}]\label{lem:G_sigma_add_on_rectangles}
The function
\[
	\cG:\Sigma_X \times \Sigma_Y \to \cB(\HS(\sK, \sH)), \quad \cG(\Gamma \times \Delta) = \cE(\Gamma)\cF(\Delta)
\]
is projection-valued and $\sigma$-additive on $\Sigma_X \times \Sigma_Y$ with respect to the weak operator topology.
\end{lemma}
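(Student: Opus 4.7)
The plan is to verify the two claims of the lemma separately: that $\cG$ takes values in projections, and that it is $\sigma$-additive in WOT on rectangles. The first claim is immediate from the preceding proposition: $\cE(\Gamma)$ and $\cF(\Delta)$ are commuting orthogonal projections on $\HS(\sK, \sH)$, so their product $\cG(\Gamma \times \Delta)$ is again an orthogonal projection.

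For the $\sigma$-additivity, let $\Gamma \times \Delta = \bigsqcup_{k \in \bbN} (\Gamma_k \times \Delta_k)$ be a partition of a rectangle into countably many pairwise disjoint rectangles. I need to prove, for all $S, T \in \HS(\sK, \sH)$, that
\[
\sum_{k=1}^\infty \langle S, \cG(\Gamma_k \times \Delta_k) T\rangle_\HS = \langle S, \cG(\Gamma \times \Delta) T\rangle_\HS.
\]
I would first dispatch the case of rank-one test vectors $S = \ketbra{c}{d}$ and $T = \ketbra{a}{b}$ by a direct computation, which using $\cG(\Gamma' \times \Delta')T = \ketbra{E(\Gamma')a}{F(\Delta')b}$ yields
\[
\langle S, \cG(\Gamma' \times \Delta') T\rangle_\HS = E_{c,a}(\Gamma')\, F_{b,d}(\Delta')
\]
for every rectangle $\Gamma' \times \Delta'$. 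This is the value on rectangles of the product $E_{c,a} \times F_{b,d}$ of two complex measures, which is a $\sigma$-additive complex measure on $\Sigma_{X \times Y}$ by the discussion preceding the lemma. The desired equality for rank-one $S, T$ then follows from $\sigma$-additivity of this product measure, and sesquilinear extension handles the case of arbitrary finite-rank $S, T$.

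The remaining step is to pass from finite-rank to arbitrary $S, T \in \HS(\sK, \sH)$, which I expect to be the main obstacle. The crucial uniform estimate comes from noting that disjoint rectangles produce orthogonal projections: if rectangles $R_1 = \Gamma_1 \times \Delta_1$ and $R_2 = \Gamma_2 \times \Delta_2$ are disjoint, then either $\Gamma_1 \cap \Gamma_2 = \emptyset$ or $\Delta_1 \cap \Delta_2 = \emptyset$, forcing $\cE(\Gamma_1)\cE(\Gamma_2) = 0$ or $\cF(\Delta_1)\cF(\Delta_2) = 0$; since $\cE$ and $\cF$ commute, this gives $\cG(R_1)\cG(R_2) = 0$. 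Consequently each partial sum $P_N := \sum_{k=1}^N \cG(\Gamma_k \times \Delta_k)$ is itself an orthogonal projection, so $|\langle S, P_N T\rangle_\HS| \leq \|S\|_\HS \|T\|_\HS$ uniformly in $N$. Since these sesquilinear forms converge pointwise to $(S,T) \mapsto \langle S, \cG(\Gamma \times \Delta) T\rangle_\HS$ on the dense set of finite-rank pairs, a standard $3\epsilon$-argument extends the convergence to all $S, T \in \HS(\sK, \sH)$, finishing the proof.
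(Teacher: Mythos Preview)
Your proof is correct and follows essentially the same approach as the paper: reduce to rank-one (hence finite-rank) test vectors via the product measure $E_{c,a}\times F_{b,d}$, then use that disjoint rectangles yield mutually orthogonal projections together with density of finite-rank operators to pass to general $S,T$. The only cosmetic difference is that the paper first notes the series of orthogonal projections converges in WOT to some bounded operator and then identifies this limit with $\cG(\Gamma\times\Delta)$ by density, whereas you carry out the equivalent $3\varepsilon$-argument directly from the uniform bound $\|P_N\|\le 1$.
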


\begin{proof}
Each $\cG(\Gamma \times \Delta)$ is a projection since $\cE(\Gamma)$ and $\cF(\Delta)$ are commuting projections. 

We show $\cG$ is $\sigma$-additive. Let $\Gamma \in \Sigma_X$ and $\Delta \in \Sigma_Y$ and let $w,x \in \sH$ and $y,z \in \sK$. Then a quick calculation shows that
\[
\cG_{\ketbra{w}{z}, \ketbra{x}{y}}(\Gamma \times \Delta) = (E_{w,x} \times F_{y, z})(\Gamma \times \Delta)
\]
Since $E_{w,x} \times F_{y,z}$ is a complex measure on $\Sigma_{X \times Y}$, it is $\sigma$-additive on $\Sigma_X \times \Sigma_Y$. Thus, $\cG_{\ketbra{w}{z}, \ketbra{x}{y}}$ is $\sigma$-additive on $\Sigma_X \times \Sigma_Y$. By sesquilinearity, $\cG_{S, T}$ is $\sigma$-additive for any finite-rank operators $S, T \in \HS(\sK, \sH)$.

Let $(\Gamma_n \times \Delta_n)_{n =1}^\infty$ be a disjoint countable collection of measurable rectangles such that $\bigcup_{n =1}^\infty \Gamma_n \times \Delta_n \in \Sigma_X \times \Sigma_Y$. For distinct $m,n \in \bbN$, either $\Gamma_m \cap \Gamma_n = \varnothing$ or $\Delta_m \cap \Delta_n = \varnothing$. It follows that the projections $\cG(\Gamma_n \times \Delta_n)$ are pairwise orthogonal. In particular, we know that $\sum_{n =1}^\infty \cG(\Gamma_n \times \Delta_n)$ converges in the weak operator topology. Thus, for any finite-rank operators $S, T \in \HS(\sK, \sH)$,
\begin{align*}
\ev{S, \qty(\sum_{n=1}^\infty \cG(\Gamma_n \times \Delta_n))T} &= \sum_{n=1}^\infty \cG_{S,T}(\Gamma_n \times \Delta_n)\\
&= \cG_{S,T}\qty(\bigcup_{n =1}^\infty \Gamma_n \times \Delta_n) \\
&= \ev{S, \cG\qty(\bigcup_{n=1}^\infty \Gamma_n \times \Delta_n)T}.
\end{align*}
By density of the finite-rank operators in $\HS(\sK, \sH)$, we conclude that 
\[
	\sum_{n=1}^\infty \cG\qty(\Gamma_n \times \Delta_n) = \cG\qty(\bigcup_{n=1}^\infty \Gamma_n \times \Delta_n),
\]
so $\cG$ is $\sigma$-additive.
\end{proof}

We would now like to show that $\cG$ has a unique extension to a projection-valued measure on $\Sigma_{X \times Y}$. We will use the following theorem of Hilbert space theory. The only published version of this result that the authors are aware of is \cite[9.24~Lemma]{Lukes}. As we are not aware of this result being published in the English literature anywhere, we have given a full proof. The proof we give here first appeared in a similar form on MathOverflow \cite{MathMathMathOverflow} and includes many details not present in \cite{Lukes}.

\begin{theorem}\label{thm:Hilbert_quadratic_form}
Let $\hilbH$ be a Hilbert space and let $\rho:\sH \rightarrow \bbC$ be a function such that 
\begin{enumerate}
	\item[\tn{(1)}] $\rho(\lambda u) = \abs{\lambda}^2 \rho(u)$ for all $\lambda \in \bbC$ and $u \in \hilbH$,
	\item[\tn{(2)}] $\rho(u + v) + \rho(u - v) = 2\rho(u) + 2\rho(v)$,
	\item[\tn{(3)}] there exists $C \geq 0$ such that $\abs{\rho(u)} \leq C \norm{u}^2$ for all $u \in \hilbH$.
\end{enumerate}
Then the map $\sigma:\sH \times \sH \rightarrow \bbC$ defined by
\[
\sigma(u,v) = \frac{1}{4}\qty[\rho(u + v) - \rho(u - v) - i \rho(u + iv) + i \rho(u - iv)]
\]
is a sesquilinear form with $\norm{\sigma} \leq 2C$ and $\sigma(u,u) = \rho(u)$ for all $u \in \sH$. In particular, there exists a unique $A \in \cB(\hilbH)$ such that 
\[
\rho(u) = \ev{u, Au}
\]
for all $u \in \hilbH$. Furthermore, $\norm{A} \leq 2C$. If $\rho(u) \in \bbR$ for all $u \in \hilbH$, then $A$ is self-adjoint and $\norm{A} \leq C$. If $\rho(u) \geq 0$ for all $u \in \hilbH$, then $A$ is positive and $\norm{A} \leq C$.
\end{theorem}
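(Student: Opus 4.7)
The plan is the standard polarization strategy: define $\sigma$ by the given formula, verify it is a bounded sesquilinear form whose diagonal recovers $\rho$, and then invoke the Riesz-type representation of bounded sesquilinear forms on a Hilbert space. I would begin with the diagonal identity $\sigma(u,u)=\rho(u)$: using only condition (1), the four polarization terms collapse via $\rho(2u)=4\rho(u)$, $\rho(0)=0$, and $\rho((1\pm i)u)=2\rho(u)$ to yield $\rho(u)$ after cancellation.

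The central work is sesquilinearity. I would decompose $\sigma=\sigma_1+\sigma_2$ where
\[
	\sigma_1(u,v)=\tfrac{1}{4}[\rho(u+v)-\rho(u-v)], \qquad \sigma_2(u,v)=-i\,\sigma_1(u,iv),
\]
the second equality being immediate from (1). For additivity of $\sigma_1$ in the second slot, I would apply the parallelogram identity (2) to the pair $(u+v_1,v_2)$ and to $(u-v_1,v_2)$ separately and subtract; then symmetrize by swapping $v_1 \leftrightarrow v_2$ and add. The cross expressions $\rho(u+v_1-v_2)-\rho(u-v_1+v_2)$ cancel on addition, leaving $\sigma_1(u,v_1+v_2)=\sigma_1(u,v_1)+\sigma_1(u,v_2)$; this transfers to $\sigma_2$ via the above relation, so $\sigma$ is additive in $v$. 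Rational homogeneity follows from additivity, induction, and the identity $\sigma(u,-v)=-\sigma(u,v)$ (which comes from $\rho(-w)=\rho(w)$). For the upgrade to real homogeneity I first establish boundedness: condition (3) applied termwise, combined with the Hilbert-space parallelogram identity for the norm, gives $|\sigma(u,v)|\leq C(\|u\|^2+\|v\|^2)$, and rescaling $v\mapsto tv$ and optimizing in $t>0$ sharpens this to $|\sigma(u,v)|\leq 2C\|u\|\|v\|$. Continuity then extends rational to real homogeneity, and complex homogeneity $\sigma(u,iv)=i\sigma(u,v)$ is a direct computation since $\sigma_2(u,iv)=i\sigma_1(u,v)$. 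Conjugate-linearity in the first slot follows by the analogous argument, using the identities $\rho(iu\pm v)=\rho(u\mp iv)$ (themselves consequences of $\rho(iw)=\rho(w)$) to rearrange the polarization formula into $\sigma(iu,v)=-i\sigma(u,v)$.

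Once $\sigma$ is a bounded sesquilinear form with $\|\sigma\|\leq 2C$, I invoke the standard consequence of Riesz representation: for each $v\in\sH$ the bounded antilinear functional $u\mapsto\overline{\sigma(u,v)}$ gives a unique $Av\in\sH$ with $\sigma(u,v)=\ev{u,Av}$, and sesquilinearity forces $v\mapsto Av$ to be linear with $\|A\|\leq 2C$. For the refinements, if $\rho$ is real-valued the polarization formula immediately gives $\sigma(v,u)=\overline{\sigma(u,v)}$, hence $A=A^*$; the sharper bound $\|A\|\leq C$ then follows from the well-known identity $\|A\|=\sup_{\|u\|=1}|\ev{u,Au}|=\sup_{\|u\|=1}|\rho(u)|$ for self-adjoint operators. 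If additionally $\rho\geq 0$, then $\ev{u,Au}=\rho(u)\geq 0$ forces $A$ to be positive.

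The main obstacle is the additivity of $\sigma_1$ in its second variable, which requires the symmetrizing trick of applying the parallelogram identity along two parallel configurations and adding to cancel the cross terms. Without this one algebraic observation it is not obvious how to extract any linearity from the purely quadratic identity (2); everything else is bookkeeping on top of this step together with the standard Riesz argument.
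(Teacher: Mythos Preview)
Your proposal is correct and follows essentially the same route as the paper's proof: polarization, biadditivity from the parallelogram law, rational homogeneity by induction, boundedness from (3) plus a rescaling-and-optimize trick to get $\lvert\sigma(u,v)\rvert\leq 2C\lVert u\rVert\lVert v\rVert$, continuity to upgrade to real homogeneity, and a direct check of $\sigma(u,iv)=i\sigma(u,v)$ and $\sigma(iu,v)=-i\sigma(u,v)$. The only cosmetic differences are that you organize the computation via the splitting $\sigma=\sigma_1+\sigma_2$ and rescale only one argument, whereas the paper works with $\sigma$ directly and rescales $u\mapsto\lambda u$, $v\mapsto\lambda^{-1}v$ simultaneously with rational $\lambda$; both lead to the same bound. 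One small point to make explicit in your write-up: at the rescaling step you only have homogeneity for rational $t$, so the optimization in $t$ should be phrased as taking a rational sequence $t_n\to\lVert u\rVert/\lVert v\rVert$, exactly as the paper does.
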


\begin{proof}
Observe that $\abs{\rho(0)} \leq C\norm{0}^2 = 0$, so $\rho(0) = 0$. Then it is easily verified from the definition of $\sigma$ that $\sigma(u,u) = \rho(u)$ for all $u \in \sH$.

We note that a routine (but slightly messy) calculation gives
\begin{equation}\label{eq:sigma_biadditive}
\begin{aligned}
\sigma(u, v + w) &= \sigma(u, v) + \sigma(u,w)\\
\sigma(v+w, u) &= \sigma(v,u) + \sigma(w,u)
\end{aligned}
\end{equation}
for all $u,v , w \in \sH$.

Now we show that $\sigma$ is sesquilinear with respect to scalar multiplication. We first note that 
\begin{equation}\label{eq:sigma_zero}
	\sigma(u, 0) = \sigma(0, u) = 0
\end{equation}
for all $u \in \sH$ by definition of $\sigma$. It then follows easily by induction and \eqref{eq:sigma_biadditive} that $\sigma(u, nv) = n\sigma(u,v)$ and $\sigma(nu,v) = n\sigma(u,v)$ for all $n \in \qty{0} \cup \bbN$ and $u, v \in \sH$.  It also follows from \eqref{eq:sigma_biadditive} and \eqref{eq:sigma_zero} that $\sigma(u, -v) = -\sigma(u,v)$ and $\sigma(-u, v) = -\sigma(u,v)$. We see that $\sigma$ is bilinear with respect to the integers. Next, if $m,n \in \bbZ$ and $n \neq 0$, then
\begin{align*}
n\sigma\qty(u, \frac{m}{n}v) &= \sigma(u, mv) = m\sigma(u,v),\\
n\sigma\qty(\frac{m}{n}u, v) &= \sigma(mu,v) = m\sigma(u,v)
\end{align*}
proving that $\sigma$ is bilinear with respect to the rationals. Next, one can verify directly from the definition of $\sigma$ that
\begin{align*}
\sigma(u,iv) &= i\sigma(u,v)\\
\sigma(iu,v) &= -i\sigma(u,v).
\end{align*}

To prove that $\sigma$ is bilinear over the reals, we must first prove that $\sigma$ is bounded. Observe that
\begin{align*}
\abs{\sigma(u,v)} &\leq \frac{C}{4}\qty[\norm{u + v}^2 + \norm{u - v}^2 + \norm{u + iv}^2 + \norm{u - iv}^2] \\
&= C(\norm{u}^2 + \norm{v}^2).
\end{align*}
For any strictly positive  $\lambda \in \bbQ$, we have
\begin{align*}
\abs{\sigma(u, v)} = \abs{\sigma(\lambda u, \lambda^{-1}v)} \leq C\qty(\lambda^2 \norm{u}^2 + \lambda^{-2}\norm{v}^2).
\end{align*}
If $\norm{u} = 0$ or $\norm{v} = 0$, then $\sigma(u,v) = 0$, hence $\abs{\sigma(u,v)} \leq C \norm{u}\norm{v}$. If $\norm{u} \neq 0$ and $\norm{v} \neq 0$, then taking a sequence of strictly positive rationals $(\lambda_n)$ such that $\lambda_n \rightarrow \sqrt{\norm{v}/\norm{u}}$ yields
\[
\abs{\sigma(u,v)} \leq 2C\norm{u}\norm{v}.
\]
It follows easily from this that $\sigma$ is continuous with respect to the product topology on $\hilbH \times \hilbH$. 

Now for any real number $\lambda$, we may take a sequence of rationals $(\lambda_n)$ such that $\lambda_n \rightarrow \lambda$, whence continuity of $\sigma$ implies
\begin{align*}
\sigma(u, \lambda v) &= \lim_n \sigma(u, \lambda_n v) = \lim_n \lambda_n \sigma(u,v) = \lambda \sigma(u,v)\\
\sigma(\lambda u, v) &= \lim_n \sigma(\lambda_n u, v) = \lim_n \lambda_n \sigma(u,v) = \lambda \sigma(u,v).
\end{align*}

We have proven that $\sigma$ is a bounded sesquilinear form with $\norm{\sigma} \leq 2C$. Thus, there exists a unique operator $A \in \cB(\hilbH)$ such that $\norm{A} \leq 2C$ and $\sigma(u,v) = \ev{u, Av}$ for all $u, v \in \hilbH$. Suppose $\rho(u) \in \bbR$ for all $u \in \hilbH$. Then it is easily checked from the definition that $\sigma(u,v)^* = \sigma(v,u)$,
so $\sigma$ is Hermitian and therefore $A$ is self-adjoint. Furthermore,
\begin{align*}
\norm{A} = \sup_{\norm{u} \leq 1} \abs{\ev{u, Au}} = \sup_{\norm{u} \leq 1} \abs{\rho(u)} \leq C.
\end{align*}
If $\rho(u) \geq 0$ for all $u \in \hilbH$, then $\ev{u,Au} \geq 0$ for all $u \in \hilbH$, so $A$ is positive. The fact that $\norm{A} \leq C$ follows as above.
\end{proof}

\begin{theorem}[{cf.~\cite[Thms.~5.2.3 \& 5.2.4]{Birman_Solomjak_SpectralTheory}}]\label{thm:double_PVM}
There exists a unique projection-valued measure $\cG:\Sigma_{X \times Y} \to \cB(\HS(\sK, \sH))$ such that
\[
	\cG(\Gamma \times \Delta) = \cE(\Gamma)\cF(\Delta)
\]
for all $\Gamma \in \Sigma_X$ and $\Delta \in \Sigma_Y$.
\end{theorem}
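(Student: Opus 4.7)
The plan is to construct $\cG(\Omega)$ for arbitrary $\Omega \in \Sigma_{X \times Y}$ from its quadratic form via Theorem~\ref{thm:Hilbert_quadratic_form}, and to deduce uniqueness by a $\pi$-$\lambda$ argument. Uniqueness is quick: if $\cG$ and $\cG'$ both satisfy the stated identity on rectangles, then for every $S, T \in \HS(\sK, \sH)$ the complex measures $\cG_{S,T}$ and $\cG'_{S,T}$ agree on the $\pi$-system $\Sigma_X \times \Sigma_Y$, hence on all of $\Sigma_{X \times Y}$ by the Dynkin $\pi$-$\lambda$ theorem, so $\cG = \cG'$.

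For existence, I would fix $T \in \HS(\sK, \sH)$ and observe that $\Gamma \times \Delta \mapsto \norm{\cE(\Gamma)\cF(\Delta) T}_\HS^2$ is positive and countably additive on rectangles, since disjointness of rectangles forces orthogonality of the projections $\cE(\Gamma_n)\cF(\Delta_n)$. Extending to the algebra of finite disjoint unions of rectangles and then applying the Carath\'eodory extension theorem produces a finite positive measure $\nu_T$ on $\Sigma_{X \times Y}$ of total mass $\norm{T}_\HS^2$. For each fixed $\Omega$, the map $T \mapsto \nu_T(\Omega)$ satisfies the hypotheses of Theorem~\ref{thm:Hilbert_quadratic_form} with $C = 1$: the scaling identity (1) and the parallelogram identity (2) hold on rectangles by the corresponding Hilbert-space identities for the projection $\cE(\Gamma)\cF(\Delta)$, and transfer from rectangles to $\Sigma_{X \times Y}$ because both sides of each identity are finite measures in $\Omega$ agreeing on the $\pi$-system of rectangles; the bound (3) is immediate from $\nu_T(\Omega) \leq \norm{T}_\HS^2$. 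The theorem then furnishes a positive $\cG(\Omega) \in \cB(\HS(\sK, \sH))$ with $\norm{\cG(\Omega)} \leq 1$ and $\ev{T, \cG(\Omega) T}_\HS = \nu_T(\Omega)$, and its uniqueness clause identifies $\cG(\Gamma \times \Delta)$ with $\cE(\Gamma)\cF(\Delta)$ on rectangles.

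It remains to verify that $\cG$ is a PVM. The normalization $\cG(X \times Y) = \1$ is immediate from $\nu_T(X \times Y) = \norm{T}_\HS^2$, and countable additivity in the weak operator topology follows from polarization, since $\ev{S, \cG(\cdot) T}_\HS$ is a linear combination of the $\sigma$-additive quadratic forms $\nu_{S + i^k T}$. The main obstacle is upgrading $\cG(\Omega)$ from a positive contraction to an orthogonal projection, as Theorem~\ref{thm:Hilbert_quadratic_form} alone only delivers $0 \leq \cG(\Omega) \leq \1$. I would extract the projection property by proving the stronger multiplicativity $\cG(\Omega_1 \cap \Omega_2) = \cG(\Omega_1)\cG(\Omega_2)$ via a double $\pi$-$\lambda$ argument: first fix a rectangle $R_1$ and show that $\{\Omega : \cG(R_1 \cap \Omega) = \cG(R_1)\cG(\Omega)\}$ is a $\lambda$-system containing rectangles, using finite additivity of $\cG$ for closure under complements and WOT $\sigma$-additivity together with WOT-continuity of left multiplication by the fixed bounded operator $\cG(R_1)$ for closure under disjoint countable unions; then repeat the argument with an arbitrary $\Omega_1 \in \Sigma_{X \times Y}$ in place of $R_1$. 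Specializing to $\Omega_1 = \Omega_2 = \Omega$ and combining with positivity yields $\cG(\Omega)^2 = \cG(\Omega) = \cG(\Omega)^*$, so $\cG(\Omega)$ is an orthogonal projection. Strong-operator $\sigma$-additivity then drops out of $\norm{\cG(\bigsqcup_{n > N} \Omega_n) T}_\HS^2 = \nu_T(\bigsqcup_{n > N} \Omega_n) \to 0$.
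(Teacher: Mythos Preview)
Your proposal is correct and follows essentially the same route as the paper's proof: Carath\'eodory extension of the quadratic forms $\mu_T(\Lambda)=\langle T,\cE(\Gamma)\cF(\Delta)T\rangle_\HS$, recovery of $\cG(\Lambda)$ via Theorem~\ref{thm:Hilbert_quadratic_form}, and a double $\pi$-$\lambda$ argument for the multiplicativity $\cG(\Lambda_0)\cG(\Lambda_1)=\cG(\Lambda_0\cap\Lambda_1)$ to upgrade positive contractions to projections. The only difference is cosmetic---your uniqueness uses the scalar measures $\cG_{S,T}$ while the paper phrases it as operator equality, and you add the observation about strong $\sigma$-additivity, which the paper does not state.
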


\begin{proof}
For each $T \in \HS(\sK, \sH)$, we have a $\sigma$-additive function
\[
	\mu_T :\Sigma_X \times \Sigma_Y \to [0, \norm{T}^2_\HS], \quad \mu_{T}(\Lambda) = \ev{T, \cG(\Lambda)T}.
\]
By the Carath\'eodory extension theorem, each $\mu_T$ has a unique extension to a finite positive measure on $\Sigma_{X \times Y}$, which we shall also denote as $\mu_T$.

Given $\Lambda \in \Sigma_{X \times Y}$, define
\[
	\rho_\Lambda : \HS(\sK, \sH) \to [0,\infty), \quad \rho_\Lambda(T) = \mu_T(\Lambda).
\]
Note that
\[
\rho_\Lambda(T) \leq \mu_T(X \times Y) = \norm{T}^2_\HS.
\]
For any $\Lambda \in \Sigma_X \times \Sigma_Y$  and $T \in \HS(\hilbK, \hilbH)$, it follows easily from the definition that
\begin{align*}
\mu_{\lambda T}(\Lambda) &= \abs{\lambda}^2 \mu_T(\Lambda)
\end{align*}
The measures $\mu_{\lambda T}$ and $\abs{\lambda}^2 \mu_T$ are therefore equal on $\Sigma_{X \times Y}$ by the uniqueness clause of the Carath\'eodory extension theorem. Similarly, for any $\Lambda \in \Sigma_X \times \Sigma_Y$ and $S, T \in \HS(\sK, \sH)$, we have
\[
\mu_{S+T}(\Lambda) + \mu_{S-T}(\Lambda) = 2\mu_S(\Lambda) + 2\mu_T(\Lambda).
\]
Therefore the measures $\mu_{S+T} + \mu_{S - T}$ and $2\mu_{S} + 2\mu_{T}$ are again equal on $\Sigma_{X \times Y}$ by the uniqueness clause of the Carath\'eodory extension theorem. 

Thus, for each $\Lambda \in \Sigma_{X \times Y}$ we see that $\rho_\Lambda$ satisfies the hypotheses of Theorem \ref{thm:Hilbert_quadratic_form}, hence there exists a unique positive $\cG(\Lambda) \in \cB(\HS(\sK, \sH))$ such that $\norm{\cG(\Lambda)} \leq 1$ and $\ev{T, \cG(\Lambda)T} = \rho_\Lambda(T)$ for all $T \in \HS(\hilbK, \hilbH)$. By uniqueness, the function $\cG:\Sigma_{X \times Y} \rightarrow \cB(\HS(\hilbK, \hilbH))$ is an extension of the original $\cG:\Sigma_X \times \Sigma_Y \rightarrow \cB(\HS(\sK, \sH))$.

We must show that $\cG:\Sigma_{X \times Y} \rightarrow \cB(\HS(\hilbK, \hilbH))$ is a projection-valued measure. Let $(\Lambda_n)_{n \in \bbN}$ be a countable collection of disjoint elements of $\Sigma_{X \times Y}$. Then for every $T \in \HS(\sK, \sH)$, we have
\begin{align*}
	\ev{T, \cG\qty(\bigcup_{n=1}^\infty \Lambda_n)T} &= \rho_{\bigcup_{n=1}^\infty \Lambda_n}(T) = \mu_T\qty(\bigcup_{n=1}^\infty \Lambda_n) \\
	&= \sum_{n=1}^\infty \mu_T(\Lambda_n) = \sum_{n=1}^\infty \rho_{\Lambda_n}(T) = \sum_{n=1}^\infty \ev{T, \cG(\Lambda_n)T}.
\end{align*}
Since this occurs for every $T \in \HS(\sK, \sH)$, we see that the series $\sum_{n=1}^\infty \cG(\Lambda_n)$ converges in the weak operator topology to $\cG(\bigcup_{n=1}^\infty \Lambda_n)$.

It remains to show that $\cG$ is projection-valued. Since we already know that $\cG(\Lambda)$ is positive for all $\Lambda \in \Sigma_{X \times Y}$, the fact that it is a projection will follow immediately once we prove the identity
\begin{equation}\label{eq:G_multiply_identity}
	\cG(\Lambda_0)\cG(\Lambda_1) = \cG(\Lambda_0 \cap \Lambda_1)
\end{equation}
for all $\Lambda_0, \Lambda_1 \in \Sigma_{X \times Y}$. We can easily verify \eqref{eq:G_multiply_identity} for all $\Lambda_0, \Lambda_1 \in \Sigma_X \times \Sigma_Y$ by using the definition of $\cG$ and the fact that $\cE$ and $\cF$ are projection-valued measures with commuting ranges. 

Now fix $\Lambda_0 \in \Sigma_{X} \times \Sigma_Y$. Suppose $\Lambda_1 \in \Sigma_{X \times Y}$ satisfies \eqref{eq:G_multiply_identity}. By additivity of $\cG$ we know $\cG(\Lambda_1^c) = \1 - \cG(\Lambda_1)$, hence
\[
	\cG(\Lambda_0)\cG(\Lambda_1^c) = \cG(\Lambda_0) - \cG(\Lambda_0 \cap \Lambda_1) = \cG(\Lambda_0 \cap \Lambda_1^c),
\]
with the second equality again following by additivity of $\cG$. This verifies \eqref{eq:G_multiply_identity} for $\Lambda_1^c$. If $(\Lambda_n)_{n \in \bbN}$ is a countable collection of disjoint elements of $\Sigma_{X \times Y}$ for which the identity \eqref{eq:G_multiply_identity} holds for each $\Lambda_n$ in place of $\Lambda_1$, then by $\sigma$-additivity of $\cG$ we have
\begin{align*}
\cG(\Lambda_0)\cG\qty(\bigcup_{n=1}^\infty \Lambda_n) &= \cG(\Lambda_0)\qty(\sum_{n=1}^\infty \cG(\Lambda_n)) \\
&= \sum_{n=1}^\infty \cG(\Lambda_0)\cG(\Lambda_n) \\
&= \sum_{n=1}^\infty \cG(\Lambda_0 \cap \Lambda_n)\\
&= \cG\qty(\Lambda_0 \cap \bigcup_{n=1}^\infty \Lambda_n).
\end{align*}
This verifies \eqref{eq:G_multiply_identity} for $\bigcup_{n=1}^\infty \Lambda_n$. Thus, the set of all elements $\Lambda_1 \in \Sigma_{X \times Y}$ for which \eqref{eq:G_multiply_identity} holds is a $\lambda$-system containing the $\pi$-system $\Sigma_X \times \Sigma_Y$. By the Dynkin $\pi$-$\lambda$ theorem we can conclude that \eqref{eq:G_multiply_identity} holds for all $\Lambda_0 \in \Sigma_X \times \Sigma_Y$ and $\Lambda_1 \in \Sigma_{X \times Y}$. Now fixing $\Lambda_1 \in \Sigma_{X \times Y}$ and applying an essentially identical argument, we see that the Dynkin $\pi$-$\lambda$ theorem implies \eqref{eq:G_multiply_identity} holds for all $\Lambda_0 \in \Sigma_{X \times Y}$. We have now proven that $\cG$ is a projection-valued measure.

Suppose $\tilde \cG$ is any other projection-valued measure fulfilling the requirements of the theorem. The set of elements of $\Sigma_{X \times Y}$ on which $\tilde \cG$ and $\cG$ agree is a $\lambda$-system containing the $\pi$-system $\Sigma_X \times \Sigma_Y$. Thus, the Dynkin $\pi$-$\lambda$ theorem implies that $\tilde \cG$ and $\cG$ agree on all of $\Sigma_{X \times Y}$.
\end{proof}

Let us collect some basic facts about $\cG$.

\begin{proposition}\label{prop:G_measure_properties}
Let $\cG$ be the projection-valued measure constructed in Theorem \ref{thm:double_PVM}.
\begin{enumerate}[label=\tn{(\alph*)}]
	\item For every $v_1,v_2 \in \sH$ and $w_1, w_2 \in \sK$, the measure $\cG_{\ketbra{v_1}{w_1},\ketbra{v_2}{w_2}}$ can be written as a linear combination of measures of the form $\cG_{\ketbra{v}{w},\ketbra{v}{w}}$ where $v \in \vecspan\{v_1,v_2\}$ and $w \in \vecspan\{w_1,w_2\}$.
	\item For every $S, T \in \HS(\sK, \sH)$, the complex measure $\cG_{S,T}$ is in the closed linear span $\qty{\cG_{\ketbra{v}{w}, \ketbra{v}{w}}: v \in \sH, w \in \sK}$.
	\item For every $v_1, v_2 \in \sH$ and $w_1, w_2 \in \sK$, we have 
	\[
		\cG_{\ketbra{v_1}{w_1}, \ketbra{v_2}{w_2}} = E_{v_1, v_2} \times F_{w_2, w_1}.
	\]
\end{enumerate}
\end{proposition}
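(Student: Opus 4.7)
My plan is to address the three parts in the order (c), (a), (b), treating them as essentially independent; the main technical point appears only in (b). For (c), I would compute both sides on a measurable rectangle $\Gamma \times \Delta$. Using $\cG(\Gamma \times \Delta) = \cE(\Gamma)\cF(\Delta)$, the identity $E(\Gamma)\ketbra{v_2}{w_2}F(\Delta) = \ketbra{E(\Gamma)v_2}{F(\Delta)w_2}$, self-adjointness of $F(\Delta)$, and the formula $\ev{\ketbra{a}{b},\ketbra{c}{d}}_\HS = \ev{a,c}\ev{d,b}$, one obtains
\[
\cG_{\ketbra{v_1}{w_1},\ketbra{v_2}{w_2}}(\Gamma \times \Delta) = E_{v_1,v_2}(\Gamma)\, F_{w_2,w_1}(\Delta) = (E_{v_1,v_2} \times F_{w_2,w_1})(\Gamma \times \Delta).
\]
Both sides are complex measures on $\Sigma_{X \times Y}$ agreeing on rectangles, so they coincide on all of $\Sigma_{X \times Y}$ by the uniqueness of the product measure recalled just before Lemma \ref{lem:G_sigma_add_on_rectangles}.

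For (a), I would apply sesquilinear polarization twice. Fixing $w_1, w_2 \in \sK$ and $\Lambda \in \Sigma_{X \times Y}$, the map $(v_1, v_2) \mapsto \cG_{\ketbra{v_1}{w_1},\ketbra{v_2}{w_2}}(\Lambda)$ is sesquilinear (antilinear in $v_1$, linear in $v_2$), and polarization writes it as a sum of four diagonal terms $\cG_{\ketbra{u}{w_1},\ketbra{u}{w_2}}(\Lambda)$ with $u \in \vecspan\{v_1, v_2\}$. For each such $u$, the map $(w_1, w_2) \mapsto \cG_{\ketbra{u}{w_1},\ketbra{u}{w_2}}(\Lambda)$ is again sesquilinear (linear in $w_1$, antilinear in $w_2$, since $\ketbra{u}{w}$ is antilinear in $w$), and a second polarization expresses it as a finite linear combination of diagonal measures $\cG_{\ketbra{v}{w},\ketbra{v}{w}}(\Lambda)$ with $w \in \vecspan\{w_1, w_2\}$. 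Since $\Lambda$ was arbitrary, the identity holds as an equation of complex measures.

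For (b), I would use density of finite-rank operators in $\HS(\sK, \sH)$; each finite-rank operator is a finite linear combination of $\ketbra{v}{w}$'s. Approximating $S_n \to S$ and $T_n \to T$ by finite-rank operators, sesquilinearity of $(S, T) \mapsto \cG_{S,T}$ together with part (a) expresses each $\cG_{S_n, T_n}$ as a finite linear combination of diagonal measures $\cG_{\ketbra{v}{w},\ketbra{v}{w}}$. The main obstacle is that the ``closed linear span'' requires a topology on complex measures in which $(S, T) \mapsto \cG_{S,T}$ is continuous; I would take total variation and rely on the standard estimate $\norm{\cG_{x,y}}_{\mathrm{TV}} \leq \norm{x}\norm{y}$ for a projection-valued measure on a Hilbert space, which gives
\[
\norm{\cG_{S,T} - \cG_{S_n, T_n}}_{\mathrm{TV}} \leq \norm{S-S_n}_\HS \norm{T}_\HS + \norm{S_n}_\HS \norm{T-T_n}_\HS \to 0.
\]
Hence $\cG_{S,T}$ lies in the closed total-variation span of the diagonal measures, completing (b).
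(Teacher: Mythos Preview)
Your proposal is correct and follows essentially the same approach as the paper, just with the parts reordered and (a) stated more abstractly via polarization rather than by writing out the explicit expansions. For (b) you are slightly more explicit than the paper in naming the total variation norm and giving the estimate $\norm{\cG_{x,y}}_{\mathrm{TV}} \leq \norm{x}\norm{y}$; the paper simply says ``the map $(S,T) \mapsto \cG_{S,T}$ is continuous,'' leaving the topology implicit.
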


\begin{proof}
(a) Let $v_1,v_2 \in \sH$ and $w\in \sK$. We compute:
\begin{align*}
\cG_{\ketbra{v_1+v_2}{w},\ketbra{v_1+v_2}{w}}   &= \cG_{\ketbra{v_1}{w}, \ketbra{v_1}{w}} + \cG_{\ketbra{v_2}{w}, \ketbra{v_2}{w}}\\
&\qquad + \cG_{\ketbra{v_1}{w},\ketbra{v_2}{w}} + \cG_{\ketbra{v_2}{w},\ketbra{v_1}{w}}.
\end{align*}
Likewise,
\begin{align*}
\cG_{\ketbra{v_1+iv_2}{w},\ketbra{v_1+iv_2}{w}} &= \cG_{\ketbra{v_1}{w},\ketbra{v_1}{w}} + \cG_{\ketbra{v_2}{w}, \ketbra{v_2}{w}}\\
&\qquad + i\cG_{\ketbra{v_1}{w},\ketbra{v_2}{w}} - i\cG_{\ketbra{v_2}{w},\ketbra{v_1}{w}}.
\end{align*}
We see that
\begin{align*}
\cG_{\ketbra{v_1}{w},\ketbra{v_2}{w}} &= \frac{1}{2}\left[\cG_{\ketbra{v_1+v_2}{w},\ketbra{v_1+v_2}{w}} - \cG_{\ketbra{v_1}{w},\ketbra{v_1}{w}} - \cG_{\ketbra{v_2}{w},\ketbra{v_2}{w}} \right.\\
&\qquad \left. -i\cG_{\ketbra{v_1+iv_2}{w},\ketbra{v_1+iv_2}{w}} +i \cG_{\ketbra{v_1}{w},\ketbra{v_1}{w}} + i \cG_{\ketbra{v_2}{w},\ketbra{v_2}{w}}\right].
\end{align*}
This proves the theorem in the case where $w_1 = w_2$. Now let $v_1,v_2 \in \sH$ and $w_1,w_2 \in \sK$. We compute:
\begin{align*}
\cG_{\ketbra{v_1}{w_1+w_2},\ketbra{v_2}{w_1+w_2}}  &= \cG_{\ketbra{v_1}{w_1},\ketbra{v_2}{w_1}} + \cG_{\ketbra{v_1}{w_2},\ketbra{v_2}{w_2}}\\
&\qquad + \cG_{\ketbra{v_1}{w_1},\ketbra{v_2}{w_2}} + \cG_{\ketbra{v_1}{w_2},\ketbra{v_2}{w_1}}
\end{align*}
Likewise,
\begin{align*}
\cG_{\ketbra{v_1}{w_1+iw_2},\ketbra{v_2}{w_1+iw_2}} &= \cG_{\ketbra{v_1}{w_1},\ketbra{v_2}{w_1}} + \cG_{\ketbra{v_1}{w_2}, \ketbra{v_2}{w_2}}\\
&\qquad - i\cG_{\ketbra{v_1}{w_1},\ketbra{v_2}{w_2}} + i\cG_{\ketbra{v_1}{w_2},\ketbra{v_2}{w_1}}.
\end{align*}
Combining these gives
\begin{align*}
\cG_{\ketbra{v_1}{w_1},\ketbra{v_2}{w_2}} &= \frac{1}{2}\left[\cG_{\ketbra{v_1}{w_1+w_2},\ketbra{v_2}{w_1+w_2}} - \cG_{\ketbra{v_1}{w_1},\ketbra{v_2}{w_1}} - \cG_{\ketbra{v_1}{w_2},\ketbra{v_2}{w_2}} \right.\\
& \left. +i\cG_{\ketbra{v_1}{w_1+iw_2},\ketbra{v_2}{w_1+iw_2}} -i \cG_{\ketbra{v_1}{w_1},\ketbra{v_2}{w_1}} - i \cG_{\ketbra{v_1}{w_2},\ketbra{v_2}{w_2}}\right]
\end{align*}
whereby the proof now concludes by reference to the case where $w_1 = w_2$.

(b) If $S$ and $T$ are finite-rank, then part (a) implies that 
\[
	\cG_{S,T} \in \vecspan\{\cG_{\ketbra{v}{w},\ketbra{v}{w}}:v \in \sH, w \in \sK\}.
\]
The result now follows since the map $(S,T) \mapsto \cG_{S,T}$ is continuous.

(c) It is an easy exercise to show that $\cG_{\ketbra{v_1}{w_1}, \ketbra{v_2}{w_2}}$ and $E_{v_1, v_2} \times F_{w_2, w_1}$ agree on all measurable rectangles $\Gamma \times \Delta \in \Sigma_X \times \Sigma_Y$. The set of elements on which these two complex measures agree is then a $\lambda$-system containing the $\pi$-system $\Sigma_X \times \Sigma_Y$, and is therefore all of $\Sigma_{X \times Y}$ by the Dynkin $\pi$-$\lambda$ theorem.
\end{proof}

\begin{proposition}[{\cite[\S3.1]{Birman_Solomyak_DoubleOperatorIntegral}}]\label{prop:totally_separated_variable_integral}
Let $\cG$ be the projection-valued measure constructed in Theorem \ref{thm:double_PVM}. If $\alpha:X \to \bbC$ and $\beta:Y \to \bbC$ are bounded and measurable and $\phi:X \times Y \to \bbC$ is defined by $\phi(x,y) = \alpha(x)\beta(y)$, then
\[
	\int_{X \times Y} \phi \dd{\cG} = \int_X \alpha \dd{\cE} \int_Y \beta \dd{\cF} = \int_Y \beta \dd{\cF} \int_X \alpha \dd{\cE}.
\]
\end{proposition}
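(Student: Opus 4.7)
The plan is to verify the stated identity by testing against the Hilbert-Schmidt inner product with rank-one operators $\ketbra{v_1}{w_1}$ and $\ketbra{v_2}{w_2}$, and then to extend by density using Proposition \ref{prop:G_measure_properties}(b). The second equality, commuting the order of the two integrals, is the easy one: since $[\cE(\Gamma),\cF(\Delta)] = 0$ for all measurable rectangles, bounded measurable functions of $\cE$ and $\cF$ commute by the standard approximation of bounded measurable functions by simple functions together with strong continuity of spectral integration against a uniformly bounded sequence.

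For the main equality, fix $v_1,v_2 \in \sH$ and $w_1,w_2 \in \sK$. On the left-hand side, the defining property of integration against a projection-valued measure gives
\[
	\ev{\ketbra{v_1}{w_1}, \int \phi \dd{\cG} \, \ketbra{v_2}{w_2}}_\HS = \int \phi \dd{\cG_{\ketbra{v_1}{w_1}, \ketbra{v_2}{w_2}}}.
\]
Proposition \ref{prop:G_measure_properties}(c) identifies this complex measure as the product $E_{v_1,v_2} \times F_{w_2,w_1}$. Since $\phi(x,y) = \alpha(x)\beta(y)$ is bounded and measurable and both factor measures have finite total variation, Fubini's theorem yields
\[
	\int \phi \dd{(E_{v_1,v_2} \times F_{w_2,w_1})} = \qty(\int \alpha \dd{E_{v_1,v_2}}) \qty(\int \beta \dd{F_{w_2,w_1}}) = \ev{v_1, \int \alpha \dd{E} \, v_2} \ev{w_2, \int \beta \dd{F} \, w_1}.
\]

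For the right-hand side, Proposition \ref{prop:curly_E_F} gives
\[
	\int \alpha \dd{\cE} \int \beta \dd{\cF} \, \ketbra{v_2}{w_2} = \int \alpha \dd{E} \cdot \ketbra{v_2}{w_2} \cdot \int \beta \dd{F}.
\]
A direct computation using the definition of $\ketbra{\cdot}{\cdot}$ shows that post- and pre-composition of $\ketbra{v_2}{w_2}$ with these operators produces $\ketbra{(\int \alpha \dd{E}) v_2}{(\int \beta \dd{F})^* w_2}$, whose Hilbert-Schmidt inner product with $\ketbra{v_1}{w_1}$ evaluates to exactly the same scalar $\ev{v_1, \int \alpha \dd{E}\, v_2} \ev{w_2, \int \beta \dd{F} \, w_1}$ obtained above. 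Hence the two bounded operators $\int \phi \dd{\cG}$ and $\int \alpha \dd{\cE} \int \beta \dd{\cF}$ agree on every rank-one operator, hence on the linear span of finite-rank operators, which is dense in $\HS(\sK,\sH)$; boundedness of both sides finishes the argument.

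The only mildly delicate step is bookkeeping: the physics convention for the inner product and the definition of $\ketbra{v}{w}$ as $u \mapsto \ev{w,u}v$ are what make Proposition \ref{prop:G_measure_properties}(c) produce $F_{w_2,w_1}$ (with swapped indices) rather than $F_{w_1,w_2}$, and this swap is exactly what is needed to match the Hilbert-Schmidt pairing on the right-hand side. Aside from this convention tracking, the argument is routine: reduction to rank-one operators, Fubini, density.
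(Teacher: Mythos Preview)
Your proof is correct and follows essentially the same route as the paper: reduce to rank-one test operators, invoke Proposition \ref{prop:G_measure_properties}(c) to identify $\cG_{\ketbra{v_1}{w_1},\ketbra{v_2}{w_2}}$ with $E_{v_1,v_2}\times F_{w_2,w_1}$, apply Fubini, use Proposition \ref{prop:curly_E_F}, and finish by density. The only cosmetic difference is that the paper obtains the commutation $\int\alpha\dd{\cE}\int\beta\dd{\cF}=\int\beta\dd{\cF}\int\alpha\dd{\cE}$ as the final step of the same chain of equalities rather than arguing it separately.
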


\begin{proof}
For any $v_1, v_2 \in \sH$ and $w_1, w_2 \in \sK$, Proposition \ref{prop:G_measure_properties} yields
\begin{align*}
\int_{X \times Y} \phi \dd{\cG_{\ketbra{v_1}{w_1},\ketbra{v_2}{w_2}}} &= \int_{X \times Y} \phi \dd{(E_{v_1,v_2} \times F_{w_2,w_1})}\\
&= \int_{X} \alpha \dd{E_{v_1,v_2}} \int_Y \beta \dd{F_{w_2,w_1}}\\
&= \ev{v_1, \int_X \alpha \dd{E} \ketbra{v_2}{w_2} \int_Y \beta \dd{F} w_1}\\
&= \ev{\ketbra{v_1}{w_1}, \int_X \alpha \dd{\cE} \int_Y \beta \dd{\cF} \ketbra{v_2}{w_2}}\\
&= \ev{\ketbra{v_1}{w_1}, \int_Y \beta \dd{\cF} \int_X \alpha \dd{\cE} \ketbra{v_2}{w_2}}.
\end{align*}
Since $\HS(\sK, \sH)$ is the closed linear span of the rank-one operators, the result follows.
\end{proof}

We now study the sets of measure zero and relations of absolute continuity that arise  among the measures $\cG_{S,T}$. We begin with some considerations for arbitrary projection-valued measures.

\begin{definition}
Given a projection-valued measure $E:\Sigma_X \to \cB(\sH)$ and $v \in \sH$, we define
\[
\sH_v = \overline{\vecspan\{E(\Delta)v:\Delta \in \Sigma_X\}}.
\]
We say $v_1,v_2 \in \sH$ are \emph{spectrally orthogonal} if $\sH_{v_1} \perp \sH_{v_2}$, or equivalently if
\[
	\ev{E(\Delta_1)v_1, E(\Delta_2)v_2} = 0
\]
for all $\Delta_1,\Delta_2 \in \Sigma_X$. Note that if $v_1, v_2 \in \sH$ are spectrally orthogonal, then they are orthogonal, as seen by setting $\Delta_1 = \Delta_2 = X$. 
\end{definition}

The following theorem generalizes  \cite[Thm.~7.3.4]{Birman_Solomjak_SpectralTheory} to the case of a non-separable Hilbert space $\sH$.

\begin{theorem}\label{thm:sep_spectral_dominator}
Let $E:\Sigma_X \to \cB(\sH)$ be a projection-valued measure. Let $\sH_0$ be a separable linear subspace of $\sH$. There exists $v \in \sH$ such that $E(\Delta)v = 0$ implies $E(\Delta)u = 0$ for all $u \in \sH_0$. In particular,
\[
	E_{u_1,u_2} \ll E_{v,v}
\]
whenever $u_1, u_2 \in \sH$ and either $u_1 \in \sH_0$ or $u_2 \in \sH_0$.
\end{theorem}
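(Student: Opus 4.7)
The plan is to mirror the classical cyclic-vector construction from the separable spectral theorem, now applied only to $\sH_0$. I will fix a countable dense sequence $(u_n)_{n \in \bbN}$ in $\sH_0$ and extract from it a sequence $(v_n)$ of pairwise spectrally orthogonal vectors whose cyclic subspaces collectively contain every $u_n$, then assemble these into a single vector $v$ by an appropriately weighted sum.

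The key structural observation is that each cyclic subspace $\sH_w$ is $E$-invariant, since $E(\Delta)E(\Delta')w = E(\Delta \cap \Delta')w \in \sH_w$ for all $\Delta, \Delta' \in \Sigma_X$. Because each $E(\Delta)$ is self-adjoint, the orthogonal projection $P_k$ onto any finite orthogonal sum $\sH_{v_1} \oplus \cdots \oplus \sH_{v_k}$ commutes with every $E(\Delta)$. Setting $v_1 = u_1$ and $v_{k+1} = (\1 - P_k)u_{k+1}$, the commuting property guarantees $\sH_{v_{k+1}} \subset (\sH_{v_1} \oplus \cdots \oplus \sH_{v_k})^\perp$, so the $v_n$ are pairwise spectrally orthogonal; and by construction $u_n \in \sH_{v_1} \oplus \cdots \oplus \sH_{v_n}$. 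I then define
\[
v \defeq \sum_{n=1}^\infty \frac{v_n}{2^n (1 + \|v_n\|)},
\]
which converges in $\sH$ by orthogonality of the summands together with the bound $\|v_n\|/(2^n(1 + \|v_n\|)) \leq 2^{-n}$.

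To verify the stated property, I would argue that if $E(\Delta)v = 0$, then since the vectors $E(\Delta)v_n$ are mutually orthogonal summands of $E(\Delta)v$, each must vanish. The estimate $\|E(\Delta \cap \Delta')v_n\|^2 = E_{v_n,v_n}(\Delta \cap \Delta') \leq E_{v_n,v_n}(\Delta) = 0$ then shows that $E(\Delta)$ annihilates the spanning set, and hence all, of each $\sH_{v_n}$; in particular $E(\Delta)u_n = 0$ for all $n$, so density of $(u_n)$ and continuity of $E(\Delta)$ yield $E(\Delta)u = 0$ for every $u \in \sH_0$. For the absolute continuity assertion, if $E_{v,v}(\Delta) = \|E(\Delta)v\|^2 = 0$ and, say, $u_1 \in \sH_0$, then $E(\Delta)u_1 = 0$, and Cauchy-Schwarz applied to $E_{u_1,u_2}(\Delta) = \ev{E(\Delta)u_1, E(\Delta)u_2}$ (using $E(\Delta) = E(\Delta)^* = E(\Delta)^2$) gives $E_{u_1,u_2}(\Delta) = 0$.

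The main obstacle is the inductive step: ensuring that each finite projection $P_k$ commutes with the projection-valued measure is what upgrades the Gram-Schmidt-style iteration from merely producing orthogonal vectors to producing spectrally orthogonal ones, and this spectral orthogonality is essential both for the convergence of the weighted sum and for the term-by-term vanishing of $E(\Delta)v$ above.
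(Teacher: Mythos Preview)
Your proof is correct and follows essentially the same route as the paper's: both construct a sequence of spectrally orthogonal vectors $(v_n)$ by iteratively projecting a dense (or spanning) sequence from $\sH_0$ onto the complement of the cyclic subspaces already built, and then assemble a single dominating vector as a suitably weighted orthogonal sum. Your treatment is slightly more explicit in a few places---you spell out why $P_k$ commutes with $E$, choose weights $2^{-n}(1+\|v_n\|)^{-1}$ that visibly guarantee convergence, and give the Cauchy--Schwarz step for $E_{u_1,u_2} \ll E_{v,v}$---but the underlying idea is identical.
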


\begin{proof}
Let $(e_n)_{n \in \bbN}$ be a collection of orthogonal vectors in $\sH_0$ whose span is dense in $\sH_0$. Define $v_1 = e_1$. Then define $v_2$ to be the projection of $e_2$ onto the orthogonal complement of $\sH_{v_1}$ (relative to $\sH$). We see that $v_1$ and $v_2$ are spectrally orthogonal and $e_1, e_2 \in \sH_{v_1} \oplus \sH_{v_2}$. In general, once $v_1,\ldots, v_n$ have been defined, we define $v_{n+1}$ to be the projection of $e_{n+1}$ onto the orthogonal complement of $\bigoplus_{k=1}^n \sH_{v_k}$. Then $v_{n+1}$ is spectrally orthogonal to all $v_k$ with $k \leq n$ and $e_{n+1} \in \bigoplus_{k=1}^{n+1} \sH_{v_k}$. This defines a collection of spectrally orthogonal vectors $(v_n)_{n \in \bbN}$ in $\sH$ such that $\sH_0 \subset \bigoplus_{k=1}^\infty \sH_{v_k}$.

Now, let
\[
	v = \sum_{k=1}^\infty k^{-1}v_k.
\]
Since the vectors $(v_n)_{n \in \bbN}$ are orthogonal and the sequence $(k^{-1})_{k \in \bbN}$ is square summable, the series above converges and $v$ is a well-defined element of $\sH$. Suppose $\Delta \in \Sigma_X$ such that $E(\Delta)v = 0$.  Since the vectors $(v_n)_{n \in \bbN}$ are spectrally orthogonal, this implies that $E(\Delta)v_k = 0$ for all $k \in \bbN$. It follows from the definition of $\sH_{v_k}$ that $E(\Delta)u = 0$ for all $u \in \sH_{v_k}$. Since this holds for all $k$, we have $E(\Delta)u = 0$ for all $u \in \bigoplus_{k=1}^\infty \sH_{v_k}$, in particular for all $u \in \sH_0$.
\end{proof}

We will now discuss absolute continuity for the specific projection-valued measure $\cG$, but first we remind the reader of the following basic result from measure theory. The lemma below is similar to Exercise 4 in Section 5.2 of \cite{Cohn_Measure_Theory}. We have provided a proof for the convenience of the reader.

\begin{lemma}\label{lem:product_meas_absolute_continuity}
Let $(X_1, \Sigma_1)$ and $(X_2, \Sigma_2)$ be measurable spaces. Suppose $\mu_1, \mu_2$ are $\sigma$-finite positive measures on $X_1$ and $X_2$, respectively, and $\lambda_1$ and $\lambda_2$ are complex measures on $X_1$ and $X_2$, respectively. If $\lambda_1 \ll \mu_1$ and $\lambda_2 \ll \mu_2$, then $\lambda_1 \times \lambda_2 \ll \mu_1 \times \mu_2$.
\end{lemma}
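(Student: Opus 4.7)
The plan is to prove this via the Radon--Nikodym theorem and the uniqueness of product measures noted immediately before Lemma \ref{lem:G_sigma_add_on_rectangles}. Since each $\mu_i$ is $\sigma$-finite and $\lambda_i \ll \mu_i$, the Radon--Nikodym theorem for complex measures supplies functions $f_i \in L^1(\mu_i)$ with $\lambda_i(A) = \int_A f_i \dd{\mu_i}$ for every $A \in \Sigma_i$. Write $F(x_1,x_2) = f_1(x_1) f_2(x_2)$; by Tonelli applied to $|F|$ one has $\int |F| \dd{(\mu_1 \times \mu_2)} = \|f_1\|_{L^1(\mu_1)} \|f_2\|_{L^1(\mu_2)} < \infty$, so $F$ is $(\mu_1 \times \mu_2)$-integrable.

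Next, I would define the complex measure
\[
\nu(A) \defeq \int_A F \dd{(\mu_1 \times \mu_2)}, \quad A \in \Sigma_{X_1 \times X_2},
\]
and check that $\nu$ agrees with $\lambda_1 \times \lambda_2$ on measurable rectangles. For any $A_1 \in \Sigma_1$ and $A_2 \in \Sigma_2$, Fubini applied to the integrable function $\chi_{A_1 \times A_2} F$ gives
\[
\nu(A_1 \times A_2) = \int_{A_1} f_1 \dd{\mu_1} \cdot \int_{A_2} f_2 \dd{\mu_2} = \lambda_1(A_1)\lambda_2(A_2) = (\lambda_1 \times \lambda_2)(A_1 \times A_2).
\]
By the uniqueness of the product of two complex measures, $\nu = \lambda_1 \times \lambda_2$ on all of $\Sigma_{X_1 \times X_2}$.

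Finally, if $(\mu_1 \times \mu_2)(A) = 0$, then $\int_A F \dd{(\mu_1 \times \mu_2)} = 0$, so $(\lambda_1 \times \lambda_2)(A) = \nu(A) = 0$, which is the desired absolute continuity. The only subtlety is ensuring that Fubini applies to $F$ despite $\mu_1, \mu_2$ being merely $\sigma$-finite; this is automatic because the $\lambda_i$ are complex, hence of finite total variation, forcing $f_i \in L^1(\mu_i)$ and thus $F \in L^1(\mu_1 \times \mu_2)$. No technical obstacle really arises beyond quoting Radon--Nikodym, Fubini, and the earlier uniqueness statement.
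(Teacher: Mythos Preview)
Your proof is correct. Both arguments reduce the claim to the observation that $\lambda_1 \times \lambda_2$ has density $f_1(x_1)f_2(x_2)$ with respect to $\mu_1 \times \mu_2$, but they justify this identification differently. The paper first reduces to positive measures via the identity $|\lambda_1 \times \lambda_2| = |\lambda_1| \times |\lambda_2|$ (citing Dunford--Schwartz), takes nonnegative Radon--Nikodym derivatives $h_i$, and then uses Tonelli to unfold $(|\lambda_1| \times |\lambda_2|)(E)$ as an iterated integral that becomes $\int_E h_1 h_2 \dd{(\mu_1 \times \mu_2)}$. You instead work directly with the complex densities $f_i$, define $\nu = F \dd{(\mu_1 \times \mu_2)}$, check agreement with $\lambda_1 \times \lambda_2$ on rectangles via Fubini, and invoke the uniqueness of products of complex measures already recorded before Lemma~\ref{lem:G_sigma_add_on_rectangles}. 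Your route is slightly more self-contained within the paper, since it avoids importing the total-variation identity from an external reference; the paper's route has the minor advantage of staying with nonnegative integrands so that only Tonelli is needed.
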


\begin{proof}
We know from \cite[Lem.~11, \S III.11]{DunfordSchwartz} that $\abs{\lambda_1 \times \lambda_2} = \abs{\lambda_1} \times \abs{\lambda_2}$, and $\lambda_1 \times \lambda_2 \ll \abs{\lambda_1 \times \lambda_2}$, so it suffices to prove that $\abs{\lambda_1} \times \abs{\lambda_2} \ll \mu_1 \times \mu_2$. Furthermore, the fact that $\lambda_i \ll \mu_i$ implies that $\abs{\lambda_i} \ll \mu_i$. 

Let $h_i \in L^1(\mu_i)$ be the Radon-Nikodym derivative of $\abs{\lambda_i}$ with respect to $\mu_i$ and note that we can take $h_i$ to be nonnegative. Suppose $E \subset X_1 \times X_2$ is measurable and $(\mu_1 \times \mu_2)(E) = 0$. Then Tonelli's theorem gives
\begin{align*}
(\abs{\lambda_1} \times \abs{\lambda_2})(E) &= \int_{X_1 \times X_2} \chi_E \dd{(\abs{\lambda_1} \times \abs{\lambda_2})}\\
&= \int_{X_1} \qty(\int_{X_2} \chi_E(x,y) h_2(y) \dd{\mu_2(y)}) \dd{\abs{\lambda_1}(x)} \\
&= \int_{X_1}\qty(\int_{X_2} \chi_E(x,y)h_1(x)h_2(y) \dd{\mu_2(y)})\dd{\mu_1(x)}\\
&= \int_{E} h_1 h_2 \dd{\mu_1 \times \mu_2} = 0
\end{align*}
since $(\mu_1 \times \mu_2)(E) = 0$. This proves that $\abs{\lambda_1} \times \abs{\lambda_2} \ll \mu_1 \times \mu_2$.
\end{proof}

Finally, we give a result that will be important in the following section.

\begin{theorem}\label{thm:G_TT_absolute_cont_product}
Let $\cG$ be the projection-valued measure from  Theorem \ref{thm:double_PVM}. For all $S,T \in \HS(\sK, \sH)$, there exists $v \in \sH$ and $w \in \sK$ such that
\[
	\cG_{S,T} \ll  E_{v,v} \times F_{w,w}
\]
\end{theorem}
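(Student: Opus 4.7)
The plan is to expand $\cG_{S,T}$ via singular value decompositions of $S$ and $T$, reducing the problem to the rank-one case handled by Proposition \ref{prop:G_measure_properties}(c), and then to invoke Lemma \ref{lem:product_meas_absolute_continuity} to produce the desired domination. First, I would fix singular value decompositions $S = \sum_k \sigma_k \ketbra{e_k}{f_k}$ and $T = \sum_k \tau_k \ketbra{e'_k}{f'_k}$ converging in the Hilbert-Schmidt norm, with orthonormal systems $(e_k), (e'_k) \subset \sH$ and $(f_k), (f'_k) \subset \sK$. The closed linear spans $\sH_0 = \overline{\vecspan\{e_k, e'_k : k \in \bbN\}}$ and $\sK_0 = \overline{\vecspan\{f_k, f'_k : k \in \bbN\}}$ are separable, so Theorem \ref{thm:sep_spectral_dominator} furnishes $v \in \sH$ and $w \in \sK$ with the property that $E_{u_1, u_2} \ll E_{v,v}$ whenever one of $u_1, u_2$ lies in $\sH_0$, and $F_{u'_1, u'_2} \ll F_{w,w}$ whenever one of $u'_1, u'_2$ lies in $\sK_0$.

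Next, for the finite-rank truncations $S_N = \sum_{k=1}^N \sigma_k \ketbra{e_k}{f_k}$ and $T_N = \sum_{k=1}^N \tau_k \ketbra{e'_k}{f'_k}$, sesquilinearity of $(A, B) \mapsto \cG_{A, B}$ combined with Proposition \ref{prop:G_measure_properties}(c) yields $\cG_{S_N, T_N} = \sum_{j,k=1}^N \sigma_j \tau_k (E_{e_j, e'_k} \times F_{f'_k, f_j})$. By Lemma \ref{lem:product_meas_absolute_continuity}, each summand is absolutely continuous with respect to $E_{v,v} \times F_{w,w}$, and therefore so is the finite linear combination $\cG_{S_N, T_N}$.

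Finally, to pass to the limit in $N$, observe that $\|\cG(\Lambda)\| \leq 1$ gives the uniform-in-$\Lambda$ bound $|\cG_{A, B}(\Lambda)| \leq \|A\|_\HS \|B\|_\HS$, so $\cG_{S_N, T_N}(\Lambda) \to \cG_{S, T}(\Lambda)$ uniformly in $\Lambda$ as $N \to \infty$. In particular, any $\Lambda$ with $(E_{v,v} \times F_{w,w})(\Lambda) = 0$ satisfies $\cG_{S_N, T_N}(\Lambda) = 0$ for every $N$, forcing $\cG_{S, T}(\Lambda) = 0$. I expect the main subtlety to be this limiting step: it works precisely because the uniform-in-$\Lambda$ convergence transfers the vanishing on $(E_{v,v} \times F_{w,w})$-null sets from the truncations to the limit $\cG_{S, T}$.
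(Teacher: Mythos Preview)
Your proof is correct and follows essentially the same approach as the paper: use Theorem \ref{thm:sep_spectral_dominator} to produce $v$ and $w$ dominating a suitable separable subspace, reduce to finite-rank operators via Proposition \ref{prop:G_measure_properties}(c) and Lemma \ref{lem:product_meas_absolute_continuity}, then pass to the limit using the uniform bound $|\cG_{A,B}(\Lambda)| \leq \|A\|_\HS \|B\|_\HS$. The only cosmetic difference is that the paper first observes $\cG_{S,T} \ll \cG_{T,T}$ and works entirely with $T$ (taking $\sH_0 = T(\sK)$ and $\sK_0 = (\ker T)^\perp$), whereas you treat $S$ and $T$ simultaneously via their SVDs; the underlying argument is the same.
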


\begin{proof}
Note that $\cG_{S,T} \ll \cG_{T,T}$, so it suffices to prove the theorem assuming $S = T$. Let $\sH_0 = T(\sK)$ and let $\sK_0 = (\ker T)^\perp$. Since $T$ is Hilbert-Schmidt, we know both $\sH_0$ and $\sK_0$ are separable. Therefore by Theorem \ref{thm:sep_spectral_dominator} there exists $v \in \sH$ such that $E(\Gamma)v = 0$ implies $E(\Gamma)u = 0$ for all $u \in \sH_0$ and there exists $w \in \sK$ such that $F(\Delta)w = 0$ implies $F(\Delta)u = 0$ for all $u \in \sK_0$.

Consider a finite-rank operator $T_0 = \sum_{k=1}^n \ketbra{v_k}{w_k}$ where $v_k \in \sH_0$ and $w_k \in \sK_0$ for all $k$. Then
\begin{align*}
\cG_{T_0,T_0} &= \sum_{j,k=1}^n \cG_{\ketbra{v_j}{w_j},\ketbra{v_k}{w_k}} = \sum_{j,k =1}^n E_{v_j,v_k} \times F_{w_k,w_j}
\end{align*} 
Since $E_{v_j,v_k} \ll E_{v,v}$ and $F_{w_k,w_j} \ll F_{w,w}$ for all $j$ and $k$, we have $E_{v_j,v_k} \times F_{w_k, w_j} \ll E_{v,v} \times F_{w,w}$ for all $j$ and $k$ by Lemma \ref{lem:product_meas_absolute_continuity}, hence $\cG_{T_0,T_0} \ll E_{v,v} \times F_{w,w}$. Finally, since $T$ is the limit in $\HS(\sK, \sH)$ of operators $T_0$ as above, we have that $\cG_{T,T}$ is the limit of measures $\cG_{T_0, T_0}$ with $T_0$ as above, so it follows that $\cG_{T,T} \ll E_{v,v} \times F_{w,w}$.
\end{proof}

\section{Integral Projective Tensor Products}\label{sec:integral_projective_tensor_products}

We retain the notation of the previous section. In particular, we let $\cG:\Sigma_{X \times Y} \to \cB(\HS(\sK, \sH))$ be the projection-valued measure constructed in Theorem \ref{thm:double_PVM}.

Let $(Z, \Sigma_Z)$ be a measurable space. Given functions $\alpha:X \times Z \to \bbC$ and $\beta:Y \times Z \to \bbC$ and a point $z \in Z$,  we define $\alpha_z:X \to \bbC$ by $\alpha_z(x) = \alpha(x,z)$ and we define $\beta_z$ similarly. By slight abuse of notation we also write $\alpha_z \beta_z$ for the function $X \times Y \to \bbC$ defined by $(\alpha_z\beta_z)(x,y) = \alpha(x,z)\beta(y,z)$. If $\alpha$ and $\beta$ are measurable, then each function $\alpha_z$, $\beta_z$, and $\alpha_z\beta_z$ is measurable.

Now let $\nu$ be a positive measure on $(Z, \Sigma_Z)$ and let $\alpha:X \times Z \to \bbC$ and $\beta:Y \times Z \to \bbC$ be measurable functions. We will be considering integrals of the form
\begin{equation}\label{eq:norm_integrals}
	N_{v,w} \defeq \int_Z \inorm{\alpha_z}{E}{v} \inorm{\beta_z}{F}{w} \dd{\nu(z)},
\end{equation}
where $v \in \sH$, $w \in \sK$, and for ease of notation we write
\[
\norm{\cdot}_v \defeq \norm{\cdot}_{L^\infty(E_{v,v})} \qqtext{and} \norm{\cdot}_w \defeq \norm{\cdot}_{L^\infty(F_{w,w})}.
\]
Likewise, we will abbreviate
\[
	\norm{\cdot}_T \defeq \norm{\cdot}_{L^\infty(\cG_{T,T})}.
\]
Let us first show that the integrand above is measurable. The proposition below is used tacitly in many works on double operator integrals and is even explicitly pointed out in a few places, but we do not know where to find a proof in the literature. We found the proof below on MathOverflow \cite{381225}.

\begin{proposition}\label{prop:esssup_measurable}
Let $(W, \Sigma_W, \mu)$ be a $\sigma$-finite positive measure space and let $(Z, \Sigma_Z)$ be a measurable space. If $f:W \times Z \to \bbC$ is measurable, then the function
\[
	g:Z \to [0,\infty], \quad g(z) = \norm{f_z}_{L^\infty(\mu)}
\]
is measurable.
\end{proposition}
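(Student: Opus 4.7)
The plan is to use the level-set characterization of measurability: to show $g : Z \to [0,\infty]$ is measurable, it suffices to verify that $\{z \in Z : g(z) > c\}$ is measurable for every $c \in \bbR$. For $c < 0$ this set is all of $Z$, so the content lies in the case $c \geq 0$. My approach is to rewrite this superlevel set in terms of a slice of an explicit measurable subset of $W \times Z$ and then to invoke the standard Tonelli-type fact that slice measures depend measurably on the parameter, which is precisely where the $\sigma$-finiteness hypothesis on $\mu$ enters.

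The first step is the elementary identity: for any measurable $h : W \to \bbC$ and any $c \geq 0$,
\[
	\norm{h}_{L^\infty(\mu)} > c \quad \text{if and only if} \quad \mu(\{w \in W : |h(w)| > c\}) > 0.
\]
The forward direction is immediate, since if $c$ were an essential upper bound for $|h|$ then $\norm{h}_{L^\infty(\mu)} \leq c$. For the reverse direction one uses that $M \defeq \norm{h}_{L^\infty(\mu)}$ is itself an essential upper bound for $|h|$, which follows by writing $\{|h| > M\} = \bigcup_{k=1}^{\infty} \{|h| > M + 1/k\}$ as a countable union of null sets. Applying this equivalence pointwise with $h = f_z$ yields the key identity
\[
	\{z \in Z : g(z) > c\} = \{z \in Z : \mu(\{w \in W : |f(w,z)| > c\}) > 0\}.
\]

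The second step is to establish measurability of the slice-measure function $\phi_c : Z \to [0,\infty]$ defined by $\phi_c(z) \defeq \mu(\{w \in W : |f(w,z)| > c\})$, since then the set above is simply $\phi_c^{-1}((0,\infty])$. The set $A_c \defeq \{(w,z) \in W \times Z : |f(w,z)| > c\}$ belongs to $\Sigma_W \otimes \Sigma_Z$ because $f$ is measurable on the product. The standard result from product measure theory --- proved by verifying it on the $\pi$-system of measurable rectangles and extending to $\Sigma_W \otimes \Sigma_Z$ by a monotone-class argument, in which $\sigma$-finiteness of $\mu$ is essential --- asserts that for every $E \in \Sigma_W \otimes \Sigma_Z$ the function $z \mapsto \mu(E^z)$ is $\Sigma_Z$-measurable. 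Specializing to $E = A_c$ gives measurability of $\phi_c$ and completes the proof.

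The only step requiring any care is the slice-measure lemma invoked at the end, whose hypothesis of $\sigma$-finiteness of $\mu$ cannot be dropped; the rest of the argument is essentially bookkeeping.
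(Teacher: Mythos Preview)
Your proof is correct and follows essentially the same approach as the paper: both reduce to showing that $\{z : g(z) > c\} = \{z : \mu(\{w : |f(w,z)| > c\}) > 0\}$ and then invoke Tonelli's theorem (equivalently, the slice-measure lemma) to see that $z \mapsto \mu(\{w : |f(w,z)| > c\})$ is measurable. The only cosmetic difference is that the paper phrases the equivalence $g(z) > c \Leftrightarrow \mu(\{|f_z| > c\}) > 0$ in terms of the essential range, whereas you argue it directly from the definition of the essential supremum.
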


\begin{proof}
Fix $a > 0$. It suffices to show that $\qty{z \in Z: g(z) > a}$ is measurable. The set $M = \qty{(w,z) \in W \times Z: \abs{f(w,z)} > a}$ is measurable, so by Tonelli's theorem we know the function 
\[
	h:Z \to [0,\infty], \quad h(z) = \int_W \chi_{M}(w,z) \dd{\mu(w)}
\]
is measurable. For each $z \in Z$, define 
\[
M_z = \qty{w \in W: (w,z) \in M} = f_z^{-1}\qty(\qty{b \in \bbC: \abs{b} > a})
\] 
Then we have
\[
	h(z) = \int_Y \chi_{M_z} \dd{\mu} = \mu(M_z).
\]

Finally, we claim that 
\[
	\qty{z \in Z: g(z) > a} = \qty{z \in Z: h(z) > 0},
\] 
which is measurable since $h$ is measurable.  If $g(z) > a$, then there exists $b_0$ in the essential range of $f_z$ such that $\abs{b_0} > a$. Since $\qty{b \in \bbC: \abs{b} > a}$ is an open set containing $b_0$, this implies 
\[
	\mu(M_z) = \mu(f^{-1}_z(\qty{b \in \bbC: \abs{b} > a})) > 0.
\]
If $g(z) \leq a$, then the essential range $R$ of $f_z$ is contained in the closed disk $D_a$ of radius $a$, hence $\bbC \setminus D_a \subset \bbC \setminus R$. Since $f_z^{-1}(\bbC \setminus R)$ has measure zero, we see that
\[
	\mu(M_z) = \mu(f^{-1}_z(\bbC \setminus D_a)) = 0.
\]
This proves the claim.
\end{proof}

We will now consider the supremum of the integrals \eqref{eq:norm_integrals} across all $v \in \sH$ and $w \in \sK$. We show one way this supremum can be rewritten, but first we provide the following elementary fact of measure theory. Below, we let $\essran f$ denote the essential range of a function $f$.

\begin{lemma}\label{lem:separated_esssup}
Let $\lambda$ and $\mu$ be $\sigma$-finite positive measure spaces on $X$ and $Y$, respectively. Let $f:X \to \bbC$ and $g:Y \to \bbC$ be measurable and let $h:X \times Y \to \bbC$ be the measurable function $h(x,y) = f(x)g(y)$. Then
\begin{equation}\label{eq:essran_product}
\essran h \supset \qty{ab : a \in \essran f, \, b \in \essran g}
\end{equation}
and
\begin{equation}\label{eq:separated_esssup}
	\norm{h}_{L^\infty(\lambda \times \mu)} = \norm{f}_{L^\infty(\lambda)}\norm{g}_{L^\infty(\mu)}
\end{equation}
\end{lemma}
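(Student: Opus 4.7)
My plan is to prove the inclusion \eqref{eq:essran_product} by a direct essential-range calculation and then derive \eqref{eq:separated_esssup} by combining the inclusion with a straightforward pointwise bound.

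First, I would fix $a \in \essran f$ and $b \in \essran g$ and show $ab \in \essran h$. Given $\varepsilon > 0$, I would choose $\delta > 0$ small enough that $(|b| + \delta)\delta + |a|\delta < \varepsilon$ and consider the sets
\[
A_\delta = \{x \in X : |f(x) - a| < \delta\}, \qquad B_\delta = \{y \in Y : |g(y) - b| < \delta\},
\]
which have positive $\lambda$- and $\mu$-measure respectively by the definition of essential range. On $A_\delta \times B_\delta$ I would use the estimate
\[
|h(x,y) - ab| \leq |g(y)|\,|f(x) - a| + |a|\,|g(y) - b| < (|b| + \delta)\delta + |a|\delta < \varepsilon.
\]
Since $(\lambda \times \mu)(A_\delta \times B_\delta) = \lambda(A_\delta)\mu(B_\delta) > 0$, this shows $(\lambda \times \mu)(\{|h - ab| < \varepsilon\}) > 0$, so $ab \in \essran h$.

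For \eqref{eq:separated_esssup}, I would use the general identity $\|\,\cdot\,\|_{L^\infty} = \sup\{|c| : c \in \essran \cdot\,\}$. The inclusion just established then gives
\[
\|h\|_{L^\infty(\lambda \times \mu)} \geq \sup\{|ab| : a \in \essran f,\; b \in \essran g\} = \|f\|_{L^\infty(\lambda)}\|g\|_{L^\infty(\mu)}.
\]
For the reverse inequality, I would note that for $\lambda$-a.e.\ $x$ we have $|f(x)| \leq \|f\|_{L^\infty(\lambda)}$ and for $\mu$-a.e.\ $y$ we have $|g(y)| \leq \|g\|_{L^\infty(\mu)}$, and invoke the fact (a consequence of $\sigma$-finiteness and the definition of the product measure) that both $N \times Y$ for $\lambda$-null $N$ and $X \times M$ for $\mu$-null $M$ are $(\lambda \times \mu)$-null. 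Hence $|h(x,y)| = |f(x)||g(y)| \leq \|f\|_{L^\infty(\lambda)}\|g\|_{L^\infty(\mu)}$ for $(\lambda \times \mu)$-a.e.\ $(x,y)$.

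The only real subtlety, which I would address briefly at the end, is the degenerate cases. If $\|f\|_{L^\infty(\lambda)} = 0$ or $\|g\|_{L^\infty(\mu)} = 0$, then $f = 0$ or $g = 0$ almost everywhere, and Tonelli's theorem (using $\sigma$-finiteness) applied to $|h|$ forces $h = 0$ almost everywhere, so both sides of \eqref{eq:separated_esssup} vanish under the convention $0 \cdot \infty = 0$. If one norm is $\infty$ and the other is positive, the inclusion \eqref{eq:essran_product} already shows $\essran h$ is unbounded, so both sides are $\infty$. This completes the plan; I expect the pointwise estimate in the inclusion step to be the only nontrivial calculation, and no serious obstacle beyond the bookkeeping for the degenerate cases.
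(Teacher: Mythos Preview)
Your proof is correct. The inclusion argument is essentially the paper's (the paper phrases it via continuity of multiplication, choosing neighborhoods $U$ of $a$ and $V$ of $b$ with $U\cdot V \subset W$, which is exactly what your $\varepsilon$--$\delta$ estimate accomplishes). For the reverse inequality $\|h\|_{L^\infty} \leq \|f\|_{L^\infty}\|g\|_{L^\infty}$ you take a slightly different and more direct route: you use the pointwise a.e.\ bound $|h(x,y)| \leq \|f\|_{L^\infty}\|g\|_{L^\infty}$, invoking only that $N\times Y$ and $X\times M$ are null for null $N,M$. The paper instead proves, in the case where both norms are finite and positive, the stronger reverse inclusion $\essran h \subset (\essran f)\cdot(\essran g)$ by showing $h^{-1}(\bbC \setminus \essran f \cdot \essran g)$ is null. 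Your argument is shorter for the norm equality alone; the paper's yields the extra information that $\essran h = (\essran f)\cdot(\essran g)$ in that case. Both handle the degenerate cases by the same kind of bookkeeping.
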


\begin{proof}
Let $a \in \essran f$ and $b \in \essran g$. Let $W$ be a neighborhood of $ab$. There exist neighborhoods $U$ of $a$ and $V$ of $g$ such that $U \cdot V \subset W$. We see that $f^{-1}(U) \times g^{-1}(V) \subset h^{-1}(W)$. Since $f^{-1}(U)$ and $g^{-1}(V)$ have positive measure, we see that $h^{-1}(W)$ has positive measure. Thus, $ab \in \essran h$.

From this it is clear that the right hand side of \eqref{eq:separated_esssup} is less than or equal to the left hand side. If either $\lambda$ or $\mu$ is identically zero, then $\lambda \times \mu$ is identically zero, hence both sides of \eqref{eq:separated_esssup} are zero. Suppose neither $\lambda$ nor $\mu$ is identically zero. Observe that 
\[
	h^{-1}(\bbC \setminus \qty{0}) = f^{-1}(\bbC \setminus \qty{0}) \times g^{-1}(\bbC \setminus \qty{0}).
\]
If $\norm{f}_{L^\infty(\lambda)} = 0$ or $\norm{g}_{L^\infty(\mu)} = 0$, then $f^{-1}(\bbC \setminus \qty{0})$ or $g^{-1}(\bbC \setminus \qty{0})$ has measure zero, respectively, so $h^{-1}(\bbC \setminus \qty{0})$ has measure zero, hence the essential range of $h$ is $\qty{0}$, so $\norm{h}_{L^\infty(\lambda \times \mu)} = 0$. 

Suppose that $\norm{f}_{L^\infty(\lambda)}$ and $\norm{g}_{L^\infty(\mu)}$ are both strictly positive. If either is infinite, then it is clear that $\essran h$ is unbounded and therefore both sides of \eqref{eq:separated_esssup} are infinite. If both are finite, then $\essran f$ and $\essran g$ are both compact, hence the right hand side of \eqref{eq:essran_product} is compact, hence closed. We observe that
\[
	h^{-1}\qty(\bbC \setminus \essran f \cdot \essran g) \subset X \times g^{-1}(\bbC \setminus \essran g) \cup f^{-1}(\bbC \setminus \essran f) \times Y
\]
The set on the right has measure zero. Thus, $\essran h \subset \essran f \cdot \essran g$ in this case. It is now clear that the left hand side of \eqref{eq:separated_esssup} is less than or equal to the right hand side.
\end{proof}

\begin{lemma}\label{lem:esssup_GTT_dominated_vw}
If $(Z, \Sigma_Z)$ is a measurable space and $\alpha:X \times Z \to \bbC$ and $\beta:Y \times Z \to \bbC$ are measurable, then for every $T \in \HS(\sK, \sH)$ there exists $v \in \sH$ and $w \in \sK$ such that
\[
	\inorm{\alpha_z\beta_z}{\cG}{T} \leq \inorm{\alpha_z}{E}{v} \inorm{\beta_z}{F}{w}
\]
for all $z \in Z$.
\end{lemma}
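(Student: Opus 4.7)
The plan is to reduce the inequality, via a tower of two elementary facts, to the absolute continuity statement in Theorem \ref{thm:G_TT_absolute_cont_product}. Fix $T \in \HS(\sK, \sH)$. By Theorem \ref{thm:G_TT_absolute_cont_product}, I can choose $v \in \sH$ and $w \in \sK$ so that
\[
\cG_{T,T} \ll E_{v,v} \times F_{w,w}.
\]
These will be the $v$ and $w$ that witness the claim.

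The first observation I would record is that absolute continuity of finite positive measures transfers essential-supremum bounds: whenever $\mu$ and $\lambda$ are positive measures with $\mu \ll \lambda$ and $f$ is measurable, every set of the form $\{|f| > c\}$ that is $\lambda$-null is also $\mu$-null, so
\[
\|f\|_{L^\infty(\mu)} \le \|f\|_{L^\infty(\lambda)}.
\]
Applied with $\mu = \cG_{T,T}$ and $\lambda = E_{v,v} \times F_{w,w}$ for $f = \alpha_z\beta_z$, this gives
\[
\|\alpha_z\beta_z\|_T \le \|\alpha_z\beta_z\|_{L^\infty(E_{v,v}\times F_{w,w})}
\]
for each $z \in Z$.

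The second step is to recognize the right-hand side as a product. Since $E_{v,v}$ and $F_{w,w}$ are finite (in particular $\sigma$-finite) positive measures, Lemma \ref{lem:separated_esssup} applies to the function $(x,y)\mapsto \alpha_z(x)\beta_z(y)$ and yields
\[
\|\alpha_z\beta_z\|_{L^\infty(E_{v,v}\times F_{w,w})} = \|\alpha_z\|_v\,\|\beta_z\|_w.
\]
Combining the two displays gives the desired inequality for every $z \in Z$ with the single pair $(v,w)$ chosen at the outset.

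I do not anticipate any real obstacle: the content is entirely absorbed into Theorem \ref{thm:G_TT_absolute_cont_product} (which handles the measure-theoretic heart of the matter) and Lemma \ref{lem:separated_esssup} (which separates variables in essential suprema). The only mild subtlety is ensuring that the essential supremum inequality under absolute continuity is stated for general positive measures rather than $\sigma$-finite ones; this is immediate from the definition and requires no extra hypotheses on $\cG_{T,T}$ beyond being a finite positive measure, which is automatic from $\cG_{T,T}(X\times Y) = \|T\|_{\HS}^2$.
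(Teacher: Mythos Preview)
Your proof is correct and follows exactly the same route as the paper: invoke Theorem \ref{thm:G_TT_absolute_cont_product} to obtain $v,w$ with $\cG_{T,T}\ll E_{v,v}\times F_{w,w}$, use that absolute continuity to bound $\|\alpha_z\beta_z\|_T$ by the $L^\infty(E_{v,v}\times F_{w,w})$ norm, and then apply Lemma \ref{lem:separated_esssup} to factor the latter. The paper's proof is just a terser version of what you wrote.
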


\begin{proof}
By Theorem \ref{thm:G_TT_absolute_cont_product}, there exist vectors $v \in \sH$ and $w \in \sK$ such that $\cG_{T,T} \ll E_{v,v} \times F_{w,w}$. Thus,
\begin{align*}
	\inorm{\alpha_z\beta_z}{\cG}{T} &\leq \inorm{\alpha_z \beta_z}_{L^\infty(E_{v,v} \times F_{w,w})}  = \inorm{\alpha_z}{E}{v} \inorm{\beta_z}{F}{w},
\end{align*}
as desired, where we used Lemma \ref{lem:separated_esssup} in the second step.
\end{proof}

\begin{proposition}
Let $(Z, \Sigma_Z, \nu)$ be a positive measure space and let $\alpha:X \times Z \to \bbC$ and $\beta: Y \times Z \to \bbC$ be measurable functions. With $N_{v,w}$ as defined in \eqref{eq:norm_integrals}, we have
\begin{equation}\label{eq:sup_vw_T_equal}
	\sup_{v \in \sH, w \in \sK} N_{v,w} = \sup_{T \in \HS(\sK, \sH)} \int_Z \inorm{\alpha_z \beta_z}{\cG}{T} \dd{\nu(z)}
\end{equation}
\end{proposition}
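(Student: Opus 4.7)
The plan is to prove both inequalities separately, with the key observation that rank-one operators $\ketbra{v}{w}$ provide the bridge between the two suprema: these are exactly the operators for which $\cG_{T,T}$ factors as a product measure $E_{v,v} \times F_{w,w}$.

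For the inequality $\leq$, I would fix $v \in \sH$ and $w \in \sK$ and take $T = \ketbra{v}{w} \in \HS(\sK, \sH)$. By Proposition \ref{prop:G_measure_properties}(c), we have $\cG_{T,T} = E_{v,v} \times F_{w,w}$. Applying Lemma \ref{lem:separated_esssup} to the function $(x,y) \mapsto \alpha(x,z)\beta(y,z)$ at each fixed $z$, I obtain
\[
\inorm{\alpha_z\beta_z}{\cG}{T} = \inorm{\alpha_z\beta_z}_{L^\infty(E_{v,v} \times F_{w,w})} = \inorm{\alpha_z}{E}{v}\inorm{\beta_z}{F}{w}.
\]
Integrating over $Z$ shows $N_{v,w}$ equals the integral on the right-hand side of \eqref{eq:sup_vw_T_equal} for this particular $T$, so $N_{v,w}$ is bounded by the supremum over all $T$. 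Taking the supremum over $v$ and $w$ yields the desired inequality.

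For the inequality $\geq$, I would fix $T \in \HS(\sK, \sH)$ and invoke Lemma \ref{lem:esssup_GTT_dominated_vw} to produce $v \in \sH$ and $w \in \sK$ with $\inorm{\alpha_z\beta_z}{\cG}{T} \leq \inorm{\alpha_z}{E}{v}\inorm{\beta_z}{F}{w}$ for every $z \in Z$. Integrating this pointwise bound against $\nu$ gives
\[
\int_Z \inorm{\alpha_z\beta_z}{\cG}{T} \dd{\nu(z)} \leq N_{v,w} \leq \sup_{v' \in \sH, w' \in \sK} N_{v',w'},
\]
and taking the supremum over $T$ completes the argument.

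There is no real obstacle here: the nontrivial measure-theoretic content has been isolated into Lemma \ref{lem:separated_esssup} (which delivers the $\leq$ direction via the factorization $\cG_{\ketbra{v}{w},\ketbra{v}{w}} = E_{v,v} \times F_{w,w}$) and Lemma \ref{lem:esssup_GTT_dominated_vw} (which delivers the $\geq$ direction via the absolute continuity $\cG_{T,T} \ll E_{v,v} \times F_{w,w}$). Measurability of the integrands on both sides is guaranteed by Proposition \ref{prop:esssup_measurable} applied to the finite positive measures $E_{v,v}$, $F_{w,w}$, and $\cG_{T,T}$, so the integrals in \eqref{eq:sup_vw_T_equal} are well defined as extended real numbers.
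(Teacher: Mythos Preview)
Your proposal is correct and follows essentially the same approach as the paper: for the $\leq$ direction you test with rank-one operators $T=\ketbra{v}{w}$ and invoke the identity $\cG_{T,T}=E_{v,v}\times F_{w,w}$ together with Lemma \ref{lem:separated_esssup}, and for the $\geq$ direction you appeal directly to Lemma \ref{lem:esssup_GTT_dominated_vw}. The paper's own proof is slightly terser but uses exactly these ingredients.
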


\begin{proof}
Given $v \in \sH$ and $w \in \sK$, set $T = \ketbra{v}{w}$. Then
\begin{align*}
	\inorm{\alpha_z \beta_z}{\cG}{T} &= \norm{\alpha_z \beta_z}_{L^\infty(E_{v,v}\times F_{w,w})} = \inorm{\alpha_z}{E}{v} \inorm{\beta_z}{F}{w}.
\end{align*}
This shows that the left hand side of \eqref{eq:sup_vw_T_equal} is no greater than the right hand side. The other direction is provided by Lemma \ref{lem:esssup_GTT_dominated_vw}.
\end{proof}

We now show that this supremum is achieved. Moreover, we show that there exist vectors that maximize the integrand for almost every $z$.

\begin{lemma}\label{lem:esssup_ae_dominator}
Let $(Z, \Sigma_Z,\nu)$ be a positive measure space and  $\alpha:X \times Z \to \bbC$  a measurable function. 
\begin{enumerate}[label=\tn{(\alph*)}]
	\item There exists $v_0 \in \sH$ such that
\[
\int_Z \inorm{\alpha_z}{E}{v_0} \dd{\nu(z)} = \sup_{v \in \sH} \int_Z \inorm{\alpha_z}{E}{v}\dd{\nu(z)}.
\]
	\item If $v_0$ is as above and the above quantity is finite, then for every $v \in \sH$ we have
\begin{equation}\label{eq:Linfinity_norm_dominated}
	\inorm{\alpha_z}{E}{v} \leq \inorm{\alpha_z}{E}{v_0}
\end{equation}
for $\nu$-\ae$ z \in Z$.
	\item If $\nu$ is $\sigma$-finite and $\alpha$ is bounded, then there exists $v_0 \in \sH$ such that for every $v \in \sH$ we have \eqref{eq:Linfinity_norm_dominated} for $\nu$-\ae $z \in Z$.
\end{enumerate}
\end{lemma}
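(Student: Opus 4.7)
The plan is to apply Theorem \ref{thm:sep_spectral_dominator} to a maximizing sequence. Pick $(v_n)_{n\in\bbN} \subset \sH$ for which $\int_Z \norm{\alpha_z}_{v_n} \dd{\nu(z)}$ tends to $M \defeq \sup_{v \in \sH} \int_Z \norm{\alpha_z}_v \dd{\nu(z)}$; note that $\norm{\alpha_z}_{\lambda v} = \norm{\alpha_z}_v$ whenever $\lambda \neq 0$, so no normalization of $v_n$ is needed. Apply Theorem \ref{thm:sep_spectral_dominator} to the separable subspace $\sH_0 = \overline{\vecspan\{v_n : n \in \bbN\}}$ to obtain $v_0 \in \sH$ with $E_{v_n, v_n} \ll E_{v_0, v_0}$ for every $n$, which forces $\norm{\alpha_z}_{v_n} \leq \norm{\alpha_z}_{v_0}$ for every $z$. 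Integrating and letting $n \to \infty$ yields $\int_Z \norm{\alpha_z}_{v_0} \dd{\nu(z)} \geq M$, and the reverse inequality is tautological.

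\textbf{Approach to (b).} I would argue by contradiction, using the key identity $\norm{f}_{u_1+u_2} = \max(\norm{f}_{u_1}, \norm{f}_{u_2})$ for spectrally orthogonal $u_1,u_2 \in \sH$; this follows from $E_{u_1+u_2, u_1+u_2} = E_{u_1,u_1} + E_{u_2,u_2}$ under spectral orthogonality, together with the elementary fact that $\norm{\cdot}_{L^\infty(\mu_1+\mu_2)} = \max(\norm{\cdot}_{L^\infty(\mu_1)}, \norm{\cdot}_{L^\infty(\mu_2)})$ for positive measures $\mu_1,\mu_2$. Suppose some $v \in \sH$ violates the claim on a set $A$ with $\nu(A) > 0$. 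Let $P$ be the orthogonal projection onto $\sH_{v_0}$ and write $v = Pv + v'$; because $\sH_{v_0}$ is $E(\Delta)$-invariant for every $\Delta$ (so is its orthogonal complement), the summands $Pv$ and $v'$ are spectrally orthogonal. Since $Pv \in \sH_{v_0}$ implies $\norm{\alpha_z}_{Pv} \leq \norm{\alpha_z}_{v_0}$, the key identity applied to $v = Pv + v'$ forces $\norm{\alpha_z}_{v'} > \norm{\alpha_z}_{v_0}$ on $A$. Then $w \defeq v_0 + v'$ is again a spectrally orthogonal sum, so $\norm{\alpha_z}_w = \max(\norm{\alpha_z}_{v_0}, \norm{\alpha_z}_{v'})$ equals $\norm{\alpha_z}_{v_0}$ off $A$ and strictly exceeds it on $A$. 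Integrating gives $\int_Z \norm{\alpha_z}_w \dd{\nu(z)} > M$, contradicting maximality of $v_0$.

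\textbf{Approach to (c).} Decompose $Z = \bigsqcup_{n \in \bbN} Z_n$ with $\nu(Z_n) < \infty$. Boundedness of $\alpha$ makes $\int_{Z_n} \norm{\alpha_z}_v \dd{\nu(z)}$ uniformly bounded in $v$, so parts (a) and (b) applied on each $(Z_n, \nu|_{Z_n})$ produce $v_0^{(n)} \in \sH$ dominating every $v \in \sH$ for $\nu$-\ae $z \in Z_n$. A second application of Theorem \ref{thm:sep_spectral_dominator}, this time to $\overline{\vecspan\{v_0^{(n)} : n \in \bbN\}}$, produces a single $v_0 \in \sH$ with $E_{v_0^{(n)}, v_0^{(n)}} \ll E_{v_0, v_0}$ for every $n$, hence $\norm{\alpha_z}_{v_0^{(n)}} \leq \norm{\alpha_z}_{v_0}$. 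Chaining the inequalities and discarding a countable union of null sets finishes the proof.

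\textbf{Main obstacle.} The technical heart of the argument is the ``max'' identity for essential suprema against spectrally orthogonal sums, which is what both upgrades the maximizing sequence in (a) into a genuine maximizer (via absolute continuity $E_{v_n,v_n} \ll E_{v_0,v_0}$) and powers the strict improvement in (b). Everything else is bookkeeping of absolute-continuity relations among the scalar spectral measures $E_{u,u}$ and one standard $\sigma$-finite exhaustion.
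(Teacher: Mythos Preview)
Your arguments for (a) and (c) match the paper's essentially word for word: a maximizing sequence fed into Theorem~\ref{thm:sep_spectral_dominator} for (a), and a $\sigma$-finite exhaustion followed by a second application of Theorem~\ref{thm:sep_spectral_dominator} for (c).

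For (b) your route is correct but more elaborate than the paper's. The paper avoids the spectral-orthogonal decomposition $v = Pv + v'$ and the max identity entirely: it simply invokes Theorem~\ref{thm:sep_spectral_dominator} once more to produce a vector $v'$ with $E_{v,v} \ll E_{v',v'}$ and $E_{v_0,v_0} \ll E_{v',v'}$, so that $\norm{\alpha_z}_v \leq \norm{\alpha_z}_{v'}$ and $\norm{\alpha_z}_{v_0} \leq \norm{\alpha_z}_{v'}$ hold for \emph{every} $z$. Then on the bad set $M$ one has the strict chain $\int_M \norm{\alpha_z}_{v_0} < \int_M \norm{\alpha_z}_v \leq \int_M \norm{\alpha_z}_{v'}$, and off $M$ one already has $\norm{\alpha_z}_{v_0} \leq \norm{\alpha_z}_{v'}$, giving $N_{v_0} < N_{v'}$ directly. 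Your approach trades this second black-box appeal to Theorem~\ref{thm:sep_spectral_dominator} for an explicit construction of the improving vector $w = v_0 + v'$; the max identity $\norm{f}_{u_1+u_2} = \max(\norm{f}_{u_1},\norm{f}_{u_2})$ for spectrally orthogonal $u_1,u_2$ is a nice structural observation, but it is not actually needed here. (Also, your ``main obstacle'' paragraph slightly overstates the role of the max identity: in (a) you use only the absolute-continuity consequence of Theorem~\ref{thm:sep_spectral_dominator}, not the max identity.)
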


\begin{proof}
(a) Define
\[
	N_v \defeq \int_Z \inorm{\alpha_z}{E}{v} \dd{\nu(z)} \qqtext{and} N \defeq \sup_{v \in \sH} N_v.
\]
Let $(v_n)_{n \in \bbN}$  be a sequence in $\sH$ such that $N_{v_n} \to N$. By Theorem \ref{thm:sep_spectral_dominator} we know there exists $v_0 \in \sH$ such that $E_{v_n, v_n} \ll E_{v_0, v_0}$  for all $n \in \bbN$. Then for each $z \in Z$ it follows that the essential range of $\alpha_z$ with respect to $E_{v_n, v_n}$ is contained in the essential range with respect to $E_{v_0, v_0}$, hence 
\begin{equation}\label{eq:v_norm_inequality}
	\inorm{\alpha_z}{E}{v_n} \leq \inorm{\alpha_z}{E}{v_0}.
\end{equation}
It follows that $N_{v_0} = N$.

(b) Let $v_0$ be as in part (a) and assume $N < \infty$. Let $v \in \sH$ be arbitrary. Suppose that the set
\[
	M \defeq \qty{z \in Z : \inorm{\alpha_z}{E}{v} > \inorm{\alpha_z}{E}{v_0}}
\]
has positive measure. Again by Theorem \ref{thm:sep_spectral_dominator} there exists $v' \in \sH$ such that $E_{v,v} \ll E_{v', v'}$ and $E_{v_0, v_0} \ll E_{v', v'}$ and we get corresponding inequalities similar to \eqref{eq:v_norm_inequality}. Now we have
\begin{align*}
	N_{v_0} &= \int_{M^c} \inorm{\alpha_z}{E}{v_0} \dd{\nu(z)} + \int_{M} \inorm{\alpha_z}{E}{v_0} \dd{\nu(z)}\\
	&< \int_{M^c} \inorm{\alpha_z}{E}{v_0} \dd{\nu(z)} + \int_M \inorm{\alpha_z}{E}{v} \dd{\nu(z)}\\
	&\leq N_{v'}
\end{align*}
where the strict inequality follows from the fact that $M$ has positive measure. But this contradicts that $N_{v_0} = N$. Therefore $M$ has measure zero.

(c) Suppose $\alpha$ is bounded and $\nu$ is $\sigma$-finite and let $(Z_n)$ be a partition of $Z$ into countably many measurable sets of finite measure. Then for every $n$ we know
\[
	\sup_{v \in \sH} \int_{Z_n} \inorm{\alpha_z}{E}{v} \dd{\nu(z)} < \infty.
\]
By parts (a) and (b) there exists $v_n \in \sH$ such that for every $v \in \sH$ the set
\[
	W_{n,v} \defeq \qty{z \in Z_n: \inorm{\alpha_z}{E}{v} > \inorm{\alpha_z}{E}{v_n}}
\]
has measure zero. By Theorem \ref{thm:sep_spectral_dominator} we know there exists $v_0 \in \sH$ such that $E_{v_n, v_n} \ll E_{v_0, v_0}$ for all $n$. Now fix $v \in \sH$ and let $W_v = \bigcup_n W_{n,v}$. Then $W_v$ has measure zero and if $z \in Z \setminus W_v$, then $z \in Z_n \setminus W_{n,v}$ for some $n$ and
\[
	\inorm{\alpha_z}{E}{v} \leq \inorm{\alpha_z}{E}{v_n} \leq \inorm{\alpha_z}{E}{v_0},
\]
as desired.
\end{proof}

\begin{lemma}\label{lem:tildes}
Let $(Z, \Sigma_Z, \nu)$ be a $\sigma$-finite positive measure space and let $\alpha:X \times Z \to \bbC$ and $\beta:Y \times Z \to \bbC$ be bounded measurable functions such that
\[
	N_{v,w} \defeq \int_Z \inorm{\alpha_z}{E}{v} \inorm{\beta_z}{F}{w} \dd{\nu(z)} < \infty
\]
for all $v \in \sH$ and $w \in \sK$. 
\begin{enumerate}[label=\tn{(\alph*)}]
	\item There exists $v_0 \in \sH$ and $w_0 \in \sK$ such that for all $T \in \HS(\sK, \sH)$, we have
	\begin{equation}\label{eq:normalization_ineq1}
		\inorm{\alpha_z \beta_z}{\cG}{T} \leq \inorm{\alpha_z}{E}{v_0}\inorm{\beta_z}{F}{w_0}
	\end{equation}
	for $\nu$-a.e. $z \in Z$. In particular, for all $v \in \sH$ and $w \in \sK$, we have
	\begin{equation}\label{eq:normalization_ineq2}
	\inorm{\alpha_z}{E}{v}\inorm{\beta_z}{F}{w} \leq \inorm{\alpha_z}{E}{v_0} \inorm{\beta_z}{F}{w_0}
	\end{equation}
	for $\nu$-\ae $z \in Z$.
	\item The function $z \mapsto \alpha(x,z)\beta(y,z)$ is integrable with respect to $\nu$ for $\cG$-\ae $(x,y) \in X \times Y$.
	\item There exists a finite positive measure $\tilde \nu$ on $(Z, \Sigma_Z)$ and measurable functions $\tilde \alpha:X \times Z \to \bbC$ and $\tilde \beta: Y \times Z \to \bbC$ such that $\abs*{\tilde \alpha(x,z)} \leq 1$ and $\abs*{\tilde\beta(y,z)} \leq 1$, such that
	\begin{equation}\label{eq:Nvw_tilde}
		N_{v,w} = \int_Z \inorm{\tilde\alpha_z}{E}{v}\inorm{\tilde\beta_z}{F}{w} \dd{\tilde\nu(z)}
	\end{equation}
	for all $v \in \sH$ and $w \in \sK$, and such that
	\begin{equation}\label{eq:integral_tilde}
		\int_Z \alpha(x,z)\beta(y,z) \dd{\nu(z)} = \int_Z \tilde \alpha(x,z)\tilde \beta(y,z) \dd{\tilde \nu(z)}
	\end{equation}
	for all $\cG$-\ae $(x,y) \in X \times Y$.
\end{enumerate}
\end{lemma}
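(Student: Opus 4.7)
The plan for \textbf{Part (a)} is to apply Lemma~\ref{lem:esssup_ae_dominator}(c) separately to the bounded functions $\alpha$ (with the $\sigma$-finite measure $\nu$) and $\beta$, yielding $v_0 \in \sH$ and $w_0 \in \sK$ with $\norm{\alpha_z}_v \leq \norm{\alpha_z}_{v_0} =: a(z)$ and $\norm{\beta_z}_w \leq \norm{\beta_z}_{w_0} =: b(z)$ for $\nu$-\ae $z$, for every $v \in \sH$ and $w \in \sK$. Multiplying these inequalities yields \eqref{eq:normalization_ineq2}. For \eqref{eq:normalization_ineq1}, given $T \in \HS(\sK, \sH)$, Lemma~\ref{lem:esssup_GTT_dominated_vw} furnishes $v_T \in \sH$ and $w_T \in \sK$ such that $\norm{\alpha_z\beta_z}_T \leq \norm{\alpha_z}_{v_T}\norm{\beta_z}_{w_T}$ pointwise in $z$, and then \eqref{eq:normalization_ineq2} applied with $v = v_T$, $w = w_T$ completes the argument. \textbf{Part (b)} follows from a Tonelli estimate: since $\cG_{T,T}$ has total mass $\norm{T}_\HS^2$,
\[
\int_{X\times Y} \int_Z \abs{\alpha\beta}\dd{\nu}\dd{\cG_{T,T}} = \int_Z \int_{X\times Y} \abs{\alpha_z\beta_z}\dd{\cG_{T,T}}\dd{\nu} \leq \norm{T}_\HS^2 \int_Z \norm{\alpha_z\beta_z}_T \dd{\nu} \leq \norm{T}_\HS^2 N_{v_0,w_0} < \infty,
\]
using (a) in the last step. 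The set $\{(x,y) : \int_Z \abs{\alpha\beta}\dd{\nu} = \infty\}$ is $T$-independent and is $\cG_{T,T}$-null for every $T$, hence $\cG$-null.

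For \textbf{Part (c)}, the pointwise condition $\abs{\tilde\alpha}, \abs{\tilde\beta} \leq 1$ is enforced by truncation. The plan is to set $\alpha'(x,z) := \alpha(x,z) \chi_{\{\abs{\alpha}\leq a\}}(x,z)$ and analogously $\beta'$ — both measurable by Proposition~\ref{prop:esssup_measurable} — so that $\abs{\alpha'} \leq a$ and $\abs{\beta'} \leq b$ pointwise. Then I define $\tilde\alpha := \alpha'/a$ on $\{a > 0\}$ and $0$ elsewhere, similarly $\tilde\beta$, and $\dd{\tilde\nu} := ab\,\dd{\nu}$. By construction $\abs{\tilde\alpha}, \abs{\tilde\beta} \leq 1$ pointwise, and $\tilde\nu(Z) = N_{v_0,w_0} < \infty$ by (a). Verifying \eqref{eq:Nvw_tilde} reduces to showing $\norm{\alpha'_z}_v = \norm{\alpha_z}_v$ for $\nu$-\ae $z$ (and similarly for $\beta$): Lemma~\ref{lem:esssup_ae_dominator}(c) gives $\norm{\alpha_z}_v \leq a(z)$ for $\nu$-\ae $z$, which forces $\alpha_z = \alpha'_z$ in $L^\infty(E_{v,v})$ for these $z$ and the integrals then match.

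For \eqref{eq:integral_tilde}, one checks that $\tilde\alpha\tilde\beta \cdot ab = \alpha'\beta'$ pointwise (the cases $a = 0$ and $b = 0$ are automatic since $\abs{\alpha'} \leq a$ and $\abs{\beta'} \leq b$), so $\int_Z \tilde\alpha\tilde\beta\,\dd{\tilde\nu} = \int_Z \alpha'\beta'\,\dd{\nu}$, and it suffices to show $\int_Z (\alpha\beta - \alpha'\beta')\,\dd{\nu} = 0$ for $\cG$-\ae $(x,y)$. Splitting as $(\alpha - \alpha')\beta + \alpha'(\beta - \beta')$, fixing $T \in \HS(\sK,\sH)$ with associated $v_T, w_T$ from Lemma~\ref{lem:esssup_GTT_dominated_vw}, Tonelli gives
\[
\int_{X\times Y} \abs*{\int_Z (\alpha - \alpha')\beta\,\dd{\nu}}\dd{\cG_{T,T}} \leq \norm{T}_\HS^2 \int_Z \norm{(\alpha-\alpha')_z}_{v_T}\norm{\beta_z}_{w_T}\,\dd{\nu}.
\]
Lemma~\ref{lem:esssup_ae_dominator}(c) applied with this specific $v_T$ gives $\norm{\alpha_z}_{v_T} \leq a(z)$ for $\nu$-\ae $z$, so $(\alpha - \alpha')_z = 0$ in $L^\infty(E_{v_T, v_T})$ for $\nu$-\ae $z$, and the bound vanishes. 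Since $T$ was arbitrary, the desired $\cG$-\ae equality follows; the symmetric term is treated identically.

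The main obstacle is reconciling the pointwise constraint $\abs{\tilde\alpha} \leq 1$ with the $\cG$-\ae integral identity. The naive choice $\tilde\alpha = \alpha/a$ violates the bound on $\{\abs{\alpha} > a\}$, a set that for each $z$ is merely $E_{v_0, v_0}$-null (not $E$-null), and hence can carry positive $\cG_{T,T}$-mass when lifted into $X \times Y$; one cannot in general hope that an $E_{v_0,v_0}$-null modification of $\alpha$ leaves $\int_Z \alpha\beta\,\dd{\nu}$ unchanged pointwise. The truncation to $\alpha'$ repairs the pointwise inequality, and the delicate point is showing that this truncation changes the $z$-integral only on a $\cG$-null set. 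This rests entirely on the $\nu$-\ae universal dominance of $v_0$ over \emph{any} vector $v_T$ supplied by Lemma~\ref{lem:esssup_ae_dominator}(c), which, via Tonelli, converts the $E_{v_0,v_0}$-null modification into a $\nu$-\ae vanishing of $\norm{(\alpha-\alpha')_z}_{v_T}$ and hence into a genuinely $\cG$-null discrepancy.
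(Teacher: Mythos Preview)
Your proof is correct and follows essentially the same approach as the paper. Your construction of $\tilde\alpha$, $\tilde\beta$, $\tilde\nu$ coincides with the paper's (your truncation $\alpha' = \alpha\chi_{\{|\alpha|\le a\}}$ followed by division by $a$ is exactly the paper's piecewise definition), and the verification of both \eqref{eq:Nvw_tilde} and \eqref{eq:integral_tilde} rests on the same mechanism: Lemma~\ref{lem:esssup_ae_dominator}(c) forces $\alpha_z = \alpha'_z$ in $L^\infty(E_{v,v})$ for $\nu$-a.e.\ $z$, for any prescribed $v$.

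The only organizational differences are minor. In part (a) the paper first applies Lemma~\ref{lem:esssup_ae_dominator} to $\cG$ and the product $\alpha_z\beta_z$ to get a dominating $T_0$, then invokes Lemma~\ref{lem:esssup_GTT_dominated_vw} once; you instead apply Lemma~\ref{lem:esssup_ae_dominator}(c) separately to $\alpha$ and $\beta$, which has the advantage that the same $v_0,w_0$ can be reused in part (c) (the paper re-chooses them there). For \eqref{eq:integral_tilde}, the paper swaps the order of integration via Fubini and compares integrands $\cG_{T,T}$-a.e.\ for each fixed $z$, whereas you bound $\int_{X\times Y}|\int_Z(\alpha\beta-\alpha'\beta')\,d\nu|\,d\cG_{T,T}$ directly via Tonelli and the vanishing of $\|(\alpha-\alpha')_z\|_{v_T}$; these are two phrasings of the same computation.
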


\begin{proof}
(a) By Lemma \ref{lem:esssup_ae_dominator} (with $E$ replaced by $\cG$ and $\alpha_z$ replaced by $\alpha_z\beta_z$), there exists $T_0 \in \HS(\sK, \sH)$ such that for all $T \in \HS(\sK, \sH)$, we have
\[
	\inorm{\alpha_z\beta_z}{\cG}{T} \leq \inorm{\alpha_z\beta_z}{\cG}{T_0}
\]
for $\nu$-\ae $z \in Z$. By Lemma \ref{lem:esssup_GTT_dominated_vw}, there exist vectors $v_0 \in \sH$ and $w_0 \in \sK$ such that 
\[
	\inorm{\alpha_z \beta_z}{\cG}{T_0} \leq \inorm{\alpha_z}{E}{v_0}\inorm{\beta_z}{F}{w_0}.
\]
for every $z \in Z$. Combining the above two equations yields \eqref{eq:normalization_ineq1}. Given $v \in \sH$ and $w \in \sK$, setting $T = \ketbra{v}{w}$ and combining the above two equations yields \eqref{eq:normalization_ineq2}. 

(b) By Tonelli's theorem, the function $h:X \times Y \to [0,\infty]$ defined by
\[
	h(x,y) = \int_Z \abs{\alpha(x,z)\beta(y,z)} \dd{\nu(z)}
\]
is measurable. Let $v_0$ and $w_0$ be as in part (a). Given $T \in \HS(\sK, \sH)$, Tonelli's theorem gives
\begin{align*}
\int_{X \times Y} h \dd{\cG_{T,T}} &= \int_Z \int_{X \times Y} \abs{\alpha_z(x)\beta_z(y)} \dd{\cG_{T,T}(x,y)} \dd{\nu(z)}\\
&\leq \norm{T}^2_{\HS}\int_Z \inorm{\alpha_z \beta_z}{\cG}{T} \dd{\nu(z)}\\
&\leq \norm{T}^2_{\HS} \int_Z \inorm{\alpha_z}{E}{v_0}\inorm{\beta_z}{F}{w_0} \dd{\nu(z)} \\
&<\infty.
\end{align*}
This implies that $\cG(\qty{h = \infty}) = 0$ (cf.~\cite[Thm.~13.25]{grandparudin}).

(c) Since $\nu$ is $\sigma$-finite and $\alpha$ is bounded, Lemma \ref{lem:esssup_ae_dominator} yields $v_0 \in \sH$ such that for every $v \in \sH$ we have
\[
	\inorm{\alpha_z}{E}{v} \leq \inorm{\alpha_z}{E}{v_0}
\]
for $\nu$-\ae $z \in Z$. Similarly, Lemma \ref{lem:esssup_ae_dominator} yields $w_0 \in \sK$ such that for every $w \in \sK$ we have
\[
	\inorm{\beta_z}{F}{w} \leq \inorm{\beta_z}{F}{w_0}
\]
for $\nu$-\ae $z \in Z$.

Define $\tilde \nu$ by
\[
	\dd{\tilde \nu(z)} = \inorm{\alpha_z}{E}{v_0} \inorm{\beta_z}{F}{w_0} \dd{\nu(z)}.
\]
Then $\tilde \nu$ is a finite positive measure on $(Z, \Sigma_Z)$. Define $\tilde \alpha : X \times Z \to \bbC$ by
\[
	\tilde \alpha(x,z) = \begin{cases} \alpha(x,z)/\inorm{\alpha_z}{E}{v_0} &\tn{if } 0 < \abs{\alpha(x,z)} \leq \inorm{\alpha_z}{E}{v_0} \\ 0 & \tn{otherwise} \end{cases}
\]
Then $\tilde \alpha$ is a measurable function with $\abs{\tilde \alpha(x,z)} \leq 1$ for all $(x,z) \in X \times Z$. Define $\tilde \beta$ similarly.  Note that
\begin{equation}\label{eq:tilde_alpha_norm_cases}
	\tilde \alpha(x,z)\inorm{\alpha_z}{E}{v_0} = \begin{cases} \alpha(x,z) &\tn{if }  \abs{\alpha(x,z)} \leq \inorm{\alpha_z}{E}{v_0}\\
	0 &\tn{if } \inorm{\alpha_z}{E}{v_0} < \abs{\alpha(x,z)} \end{cases}
\end{equation}
and similarly for $\tilde \beta$. 

Now we prove \eqref{eq:Nvw_tilde}. Fix $v \in \sH$ and $w \in \sK$. Let 
\begin{equation}\label{eq:Z_0}
	Z_0 = \qty{z \in Z:  \inorm{\alpha_z}{E}{v_0} < \inorm{\alpha_z}{E}{v} \tn{ or } \inorm{\beta_z}{F}{w_0} < \inorm{\beta_z}{F}{w} }.
\end{equation}
By definition of $v_0$ and $w_0$ we know $\nu(Z_0) = 0$. If $z \in Z_0^c$, then \eqref{eq:tilde_alpha_norm_cases} implies that $\tilde \alpha(x,z)\inorm{\alpha_z}{E}{v_0} = \alpha(x,z)$ for $E_{v,v}$-\ae $x \in X$, hence
\[
	\inorm{\tilde \alpha_z}{E}{v}\inorm{\alpha_z}{E}{v_0} = \inorm{\alpha_z}{E}{v}
\]
Similar statements apply to $\tilde \beta$ and \eqref{eq:Nvw_tilde} now follows.

Finally, we prove \eqref{eq:integral_tilde}. Define $\varphi:X \times Y \to \bbC$ by
\[
\varphi(x,y) = \int_{Z} \alpha(x,z)\beta(y,z) \dd{\nu(z)}
\]
for all $(x,y) \in X \times Y$ such that $z \mapsto \alpha(x,z)\beta(y,z)$ is integrable and define $\varphi(x,y) = 0$ for all other $(x,y) \in X \times Y$. Then $\varphi$ is measurable and the calculation in part (b) shows that $\varphi$ is essentially bounded with respect to $\cG$. The right hand side of \eqref{eq:integral_tilde} is a bounded measurable function of $(x,y) \in X \times Y$. Therefore it suffices to show that for every $T \in \HS(\sK, \sH)$, we have
\begin{align*}
	\int_{X \times Y} \varphi \dd{\cG_{T,T}} &= \int_{X \times Y} \int_Z \tilde \alpha(x,y)\tilde \beta(y,z) \dd{\tilde \nu(z)} \dd{\cG_{T,T}(x,y)}\\
	&= \int_{X \times Y} \int_Z \tilde \alpha(x,y)\inorm{\alpha_z}{E}{v_0} \tilde \beta(y,z)\inorm{\beta_z}{F}{w_0} \dd{\nu(z)} \dd{\cG_{T,T}(x,y)}
\end{align*}

Fix $T \in \HS(\sK, \sH)$. Using Theorem \ref{thm:G_TT_absolute_cont_product} we may choose $v \in \sH$ and $w \in \sK$ such that $\cG_{T,T} \ll E_{v,v} \times F_{w,w}$. Let $Z_0$ be as in \eqref{eq:Z_0} and recall that $\nu(Z_0) = 0$. By Fubini's theorem, it suffices to show
\begin{equation}\label{eq:fubini_final_WTS}
\begin{aligned}
\int_{Z_0^c} &\int_{X \times Y} \alpha(x,z)\beta(y,z) \dd{\cG_{T,T}(x,y)}\dd{\nu(z)} \\
&= \int_{Z_0^c} \int_{X \times Y} \tilde \alpha(x,y) \inorm{\alpha_z}{E}{v_0} \tilde \beta(y,z)\inorm{\beta_z}{F}{w_0} \dd{\cG_{T,T}(x,y)}\dd{\nu(z)}.
\end{aligned}
\end{equation}
For fixed $z \in Z_0^c$, the sets
\begin{align*}
	M_{z} &= \qty{x \in X: \inorm{\alpha_z}{E}{v} < \abs{\alpha(x,z)}}\\
	N_z &= \qty{y \in Y: \inorm{\beta_z}{F}{w} < \abs{\beta(y,z)}}
\end{align*}
have measure zero with respect to $E_{v,v}$ and $F_{w,w}$, so $M_z \times Y \cup X \times N_z$ has measure zero with respect to $\cG_{T,T}$. 
But for $z \in Z_0^c$ and $(x,y) \in M_z^c \times N_z^c$, we have 
\[
	\abs{\alpha(x,z)} \leq \inorm{\alpha_z}{E}{v} \leq \inorm{\alpha_z}{E}{v_0}
\]
so that $\tilde \alpha(x,z)\inorm{\alpha_z}{E}{v_0} = \alpha(x,z)$ by \eqref{eq:tilde_alpha_norm_cases}. Similar remarks apply to $\beta$, so the integrands in \eqref{eq:fubini_final_WTS} agree $\cG_{T,T}$-\ae for every $z \in Z_0^c$, verifying \eqref{eq:fubini_final_WTS}.
\end{proof}

\begin{definition}\label{def:function_space}
Define $\fM_{00}$ to be the set of functions $\phi:X \times Y \rightarrow \bbC$ for which there exists a finite positive measure space $(Z, \Sigma_Z, \nu)$ and bounded measurable functions $\alpha:X \times Z \rightarrow \bbC$ and $\beta:Y \times Z \rightarrow \bbC$ such that
\begin{equation}\label{eq:decomposition_def}
	\phi(x,y) = \int_Z \alpha(x,z) \beta(y,z) \dd{\nu(z)}.
\end{equation}
for all $(x, y) \in X \times Y$. We call the data $(Z, \Sigma_Z, \nu, \alpha, \beta)$ a \emph{strict decomposition} of $\phi$. It is clear that every $\phi \in \fM_{00}$ is bounded. Furthermore, as part of Fubini's theorem we know every $\phi \in \fM_{00}$ is measurable. Thus, $\fM_{00} \subset \B(X \times Y)$.

The set $\fM_{00}$ is defined without reference to any measure on $X$, $Y$, or $X \times Y$. We would, however, like to ``neglect'' sets of measure zero with respect to $\cG$. Therefore, we let $\pi:\B(X \times Y) \to L^\infty(\cG)$ be the canonical $*$-homomorphism defined by quotienting by functions that are zero almost everywhere with respect to $\cG$, and we define
\[
	\fM \defeq \pi(\fM_{00}).
\] 
It is also helpful to define
\[
	\fM_0 \defeq \pi^{-1}(\fM).
\]
Thus, an element of $\fM_0$ is a bounded measurable function $\phi$ for which there exists a finite positive measure space $(Z, \Sigma_Z, \nu)$ and bounded measurable functions $\alpha:X \times Z \to \bbC$ and $\beta:Y \times Z \to \bbC$ such that \eqref{eq:decomposition_def} holds $\cG$-\ae When \eqref{eq:decomposition_def} holds $\cG$-\ae rather than everywhere, we call $(Z, \Sigma_Z, \nu, \alpha, \beta)$ a \emph{decomposition} of $\phi$ rather than a strict decomposition.

For every $\phi \in \fM_0$, we define 
\begin{equation}\label{eq:M_norm}
	\norm{\phi}_{\fM_0} = \inf_{(Z, \Sigma_Z, \nu, \alpha, \beta)} \sup_{v, w} \int_Z \inorm{\alpha_z}{E}{v} \inorm{\beta_z}{F}{w} \dd{\nu(z)},
\end{equation}
where the infimum is taken over all decompositions of $\phi$ and the supremum is taken over all $v \in \sH$ and $w \in \sK$.
\end{definition}

\begin{remark}\label{rem:complex_measure}
Equivalent definitions of $\fM_{00}$, $\fM_0$, and $\fM$ are obtained if we allow $\nu$ to be a complex measure, as seen by using the polar decomposition of $\nu$. Likewise, in the definition of $\norm{\phi}_{\fM_0}$ one may take the infimum over all decompositions of $\phi$ with $\nu$ as a complex measure if one also takes the integral in \eqref{eq:M_norm} with respect to the total variation $\abs{\nu}$ instead of $\nu$.
\end{remark}

\begin{remark}\label{rem:easy_norm_bounds}
If $(Z, \Sigma_Z, \nu, \alpha, \beta)$ is a decomposition of $\phi \in \fM_0$, then a simple bound on $\norm{\phi}_{\fM_0}$ is given by
\[
	\norm{\phi}_{\fM_0} \leq \norm{\alpha}_{\B(X \times Z)} \norm{\beta}_{\B(Y \times Z)} \nu(Z).
\]
If $z \mapsto \norm{\alpha_z}_{\B(X)}$ and $z \mapsto \norm{\beta_z}_{\B(Y)}$ happen to be measurable, then we also have the bound
\[
	\norm{\phi}_{\fM_0} \leq \int_Z \norm{\alpha_z}_{\B(X)} \norm{\beta_z}_{\B(Y)} \dd{\nu(z)}.
\]
\end{remark}

\begin{remark}\label{rem:separable_case}
If $\sH$ and $\sK$ are separable, then Theorem \ref{thm:sep_spectral_dominator} implies that there exists $v_0 \in \sH$ and $w_0 \in \sK$ such that $E_{v_0,v_0}$ and $F_{w_0,w_0}$ have the same sets of measure zero as $E$ and $F$, respectively. In this case, for any decomposition $(Z, \Sigma_Z, \nu, \alpha, \beta)$ of $\phi \in \fM_0$, we have $\norm{\alpha_z}_v \leq \norm{\alpha_z}_{v_0} = \norm{\alpha_z}_{L^\infty(E)}$ and similarly for $\beta$, hence
\[
	\sup_{v,w}\int_Z \inorm{\alpha_z}{E}{v} \inorm{\beta_z}{F}{w} \dd{\nu(z)} = \int_Z \norm{\alpha_z}_{L^\infty(E)} \norm{\beta_z}_{L^\infty(F)} \dd{\nu(z)}.
\]
The right hand side is the integrand most commonly seen in the literature.
\end{remark}

Below we elucidate the structure of these sets and the function $\norm{\cdot}_{\fM_0}$. The proof will make use of the construction of a disjoint union of positive measure spaces. By definition, if $(Z_i, \Sigma_i, \nu_i)_{i \in I}$ is a family of positive measure spaces indexed by a set $I$, then we may equip the disjoint union $Z \defeq \bigsqcup_{i \in I} Z_i$ with the $\sigma$-algebra 
\begin{equation}\label{eq:C_disjoint_union}
	\Sigma \defeq \qty{E \subset Z: E \cap Z_i \in \Sigma_i \textnormal{ for all $i \in I$}}
\end{equation}
and the measure $\nu: \Sigma \rightarrow [0,\infty]$ defined by
\begin{equation}\label{eq:nu_disjoint_union}
	\nu(E) = \sum_{i \in I} \nu_i(E \cap Z_i).
\end{equation}

\begin{theorem}[{cf. \cite[Prop.~4.1.4]{Nikitopolous}, \cite[Lem.~4.6]{PagterSukochev}}]\label{thm:integral_projective_tensor_product}
The set $\fM_{00}$ is a $*$-subalgebra of $\B(X \times Y)$ containing the constant function $\phi(x,y) = 1$. Hence, $\fM$ is a unital $*$-subalgebra of $L^\infty(\cG)$ and $\fM_0$ is a unital $*$-subalgebra of $\B(X \times Y)$. The function $\norm{\cdot}_{\fM_0}$ is a submultiplicative seminorm on $\fM_0$ invariant under complex conjugation and further satisfies
\begin{equation}\label{eq:norm_comparison}
	\norm{\pi(\phi)}_{L^\infty(\cG)} \leq \norm{\phi}_{\fM_0}
\end{equation}
for every $\phi \in \fM_0$. Thus, $\norm{\phi}_{\fM_0} = 0$ if and only if $\pi(\phi) = 0$, so $\norm{\cdot}_{\fM_0}$ descends to a well-defined norm $\norm{\cdot}_{\fM}$ on $\fM$. Furthermore, $\fM$ is a unital Banach $*$-algebra when equipped with $\norm{\cdot}_{\fM}$.
\end{theorem}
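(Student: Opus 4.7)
The plan is to verify the algebraic and normed structure on $\fM_{00}$ first, push it through the quotient map $\pi$, and then prove completeness by the standard absolutely summable series criterion, using four explicit operations on decompositions: disjoint union, Cartesian product, conjugation, and scalar rescaling.

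First I would show that $\fM_{00}$ is a unital $*$-subalgebra of $\B(X \times Y)$. The constant function $1$ has the strict decomposition $(\{*\}, \delta_{*}, 1, 1)$. Given strict decompositions $(Z_i, \Sigma_i, \nu_i, \alpha_i, \beta_i)$ of $\phi_1, \phi_2 \in \fM_{00}$, addition is realized on the disjoint union $Z_1 \sqcup Z_2$ with the measure defined by \eqref{eq:nu_disjoint_union}; multiplication is realized on the product $Z_1 \times Z_2$ with $\alpha((x,(z_1,z_2))) = \alpha_1(x,z_1)\alpha_2(x,z_2)$ and analogously for $\beta$ (which is a legitimate decomposition by Fubini); scalar multiplication is obtained by rescaling $\nu$; and complex conjugation by conjugating $\alpha$ and $\beta$. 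Since $\pi$ is a $*$-homomorphism, $\fM = \pi(\fM_{00})$ and $\fM_0 = \pi^{-1}(\fM)$ are automatically unital $*$-subalgebras. The same four constructions applied to decompositions that hold only $\cG$-a.e.\ show that $\norm{\cdot}_{\fM_0}$ is a submultiplicative seminorm invariant under complex conjugation; submultiplicativity uses Lemma \ref{lem:separated_esssup} to factor the essential suprema across products. That $\norm{\phi}_{\fM_0}$ depends only on $\pi(\phi)$ is built into the definition, since decompositions need only hold $\cG$-a.e.

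Next I would prove the key inequality $\norm{\pi(\phi)}_{L^\infty(\cG)} \leq \norm{\phi}_{\fM_0}$. Fix a decomposition $(Z, \Sigma_Z, \nu, \alpha, \beta)$ of $\phi$ and any $T \in \HS(\sK, \sH)$; by Theorem \ref{thm:G_TT_absolute_cont_product} there exist $v, w$ with $\cG_{T,T} \ll E_{v,v} \times F_{w,w}$. Applying Tonelli to the set $\{(x,z) : \abs{\alpha(x,z)} > \inorm{\alpha_z}{E}{v}\}$, which has $E_{v,v} \times \nu$-measure zero, gives that for $E_{v,v}$-a.e.\ $x$ the bound $\abs{\alpha(x,z)} \leq \inorm{\alpha_z}{E}{v}$ holds for $\nu$-a.e.\ $z$; analogously for $\beta$. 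Hence for $E_{v,v} \times F_{w,w}$-a.e.\ $(x,y)$,
\[
\abs{\phi(x,y)} \leq \int_Z \abs{\alpha(x,z)\beta(y,z)} \dd{\nu(z)} \leq \int_Z \inorm{\alpha_z}{E}{v} \inorm{\beta_z}{F}{w} \dd{\nu(z)}.
\]
Since $\cG_{T,T} \ll E_{v,v}\times F_{w,w}$, taking the $\cG_{T,T}$-essential supremum and then the supremum over $v, w$ and over $T$ gives \eqref{eq:norm_comparison}. In particular $\norm{\phi}_{\fM_0} = 0 \Rightarrow \pi(\phi) = 0$, and the seminorm descends to a genuine norm on $\fM$.

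Finally, for completeness of $\fM$ I would use the absolutely summable series characterization. Given $\phi_n \in \fM$ with $\sum \norm{\phi_n}_{\fM} < \infty$, select representatives $\tilde\phi_n \in \fM_0$ and decompositions $(Z_n, \Sigma_n, \nu_n, \alpha_n, \beta_n)$ with $\sup_{v,w} \int_{Z_n} \inorm{\alpha_{n,z}}{E}{v} \inorm{\beta_{n,z}}{F}{w} \dd{\nu_n(z)} \leq \norm{\phi_n}_{\fM} + 2^{-n}$, and assemble them into a single decomposition on $Z = \bigsqcup_n Z_n$ via \eqref{eq:nu_disjoint_union}. Summand-wise application of the bound in the previous paragraph shows that for each $T$ the set on which $\int_Z \abs{\alpha \beta}\dd{\nu}$ diverges is $\cG_{T,T}$-null, hence $\cG$-null; defining $\phi$ as the integral off this set and zero on it, then truncating on the $\cG$-null exceptional set where $\abs{\phi}$ would exceed its essential supremum, produces a bounded measurable representative in $\fM_0$. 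The tail $\phi - \sum_{n=1}^N \tilde\phi_n$ inherits the decomposition on $\bigsqcup_{n>N} Z_n$, so its $\fM_0$-seminorm is bounded by $\sum_{n>N}(\norm{\phi_n}_{\fM} + 2^{-n}) \to 0$, giving convergence of $\sum \phi_n$ in $\fM$. The main obstacle is precisely this final step: one must simultaneously produce a bounded measurable representative of the limit and verify that the disjoint-union decomposition actually controls its $\fM_0$-seminorm uniformly in $T$, which is the reason Theorem \ref{thm:G_TT_absolute_cont_product} is essential.
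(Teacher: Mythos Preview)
Your treatment of the algebraic structure and the seminorm properties is essentially the paper's argument, and your route to the inequality \eqref{eq:norm_comparison}---bounding $|\phi(x,y)|$ pointwise $\cG_{T,T}$-a.e.\ via Tonelli and Theorem~\ref{thm:G_TT_absolute_cont_product}---is a valid minor variation on the paper's computation, which instead bounds $\bigl|\int \phi\,d\cG_{T,T}\bigr|$ directly.

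There is, however, a genuine gap in your completeness argument. A decomposition in Definition~\ref{def:function_space} requires a \emph{finite} positive measure and \emph{bounded} $\alpha,\beta$. Your disjoint union $Z=\bigsqcup_n Z_n$ with the measure from \eqref{eq:nu_disjoint_union} is only $\sigma$-finite: even after rescaling each $(\alpha_n,\beta_n,\nu_n)$ so that $|\alpha_n|,|\beta_n|\le 1$, the control $\sup_{v,w}\int_{Z_n}\|\alpha_{n,z}\|_v\|\beta_{n,z}\|_w\,d\nu_n<\varepsilon_n$ does not bound $\nu_n(Z_n)$, so $\sum_n\nu_n(Z_n)$ need not converge. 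Consequently the ``decomposition on $\bigsqcup_{n>N}Z_n$'' you invoke for the tail is not a decomposition in the required sense, and you cannot use it to bound $\|\phi-\sum_{n\le N}\tilde\phi_n\|_{\fM_0}$, since that norm is an infimum over finite-measure decompositions only. Truncating $\phi$ on a $\cG$-null set gives a bounded representative but does nothing to place it in $\fM_0$.

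The paper closes exactly this gap with Lemma~\ref{lem:tildes}(c): after locating dominant vectors $v_0,w_0$ via Lemma~\ref{lem:esssup_ae_dominator}, one replaces $\nu$ by the finite measure $d\tilde\nu(z)=\|\alpha_z\|_{v_0}\|\beta_z\|_{w_0}\,d\nu(z)$ and normalizes $\alpha,\beta$ accordingly, checking that neither the integral representation of $\phi$ nor the quantity $\sup_{v,w}\int\|\alpha_z\|_v\|\beta_z\|_w\,d\nu$ changes. This renormalization is the missing ingredient in your sketch; Theorem~\ref{thm:G_TT_absolute_cont_product} alone is not enough.
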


\begin{proof}
Let $\phi \in \fM_{00}$ and let $(Z, \Sigma_Z, \nu, \alpha, \beta)$ be a strict decomposition of $\phi$. Then it is clear that $(Z, \Sigma_Z, \nu, \alpha^*, \beta^*)$ is a strict decomposition of $\phi^*$, so $\phi^* \in \fM_{00}$. If $\lambda \in \bbC$, then it is clear that $(Z, \Sigma_Z, \nu, \lambda \alpha, \beta)$ is a strict decomposition of $\lambda \phi$, so $\lambda \phi \in \fM_{00}$.

We show $\fM_{00}$ is closed under addition and multiplication. For $i \in \qty{1,2}$, let $\phi_i \in \fM_{00}$ and let $(Z_i, \Sigma_{Z_i}, \nu_i, \alpha_i, \beta_i)$ be a strict decomposition of $\phi_i$. 

To show $\phi_1 + \phi_2 \in \fM_{00}$, let $(Z, \Sigma_Z, \nu)$ be the disjoint union of $(Z_1, \Sigma_{Z_1}, \nu_1)$ and $(Z_2, \Sigma_{Z_2}, \nu_2)$. Define $\alpha:X \times Z \rightarrow \bbC$ by
\[
	\alpha(x,z) = \begin{cases} \alpha_1(x,z) &: z \in Z_1 \\ \alpha_2(x,z) &: z \in Z_2 \end{cases}
\]
and define $\beta:Y \times Z \rightarrow \bbC$ similarly. Then $\alpha$ and $\beta$ are bounded measurable functions and
\begin{align*}
	\int_Z \alpha(x,z) \beta(y,z) \dd{\nu(z)} &= \sum_{i=1}^2 \int_{Z_i} \alpha_i(x,z)\beta_i(y,z) \dd{\nu_i(z)} \\
	&= \sum_{i=1}^2 \phi_i(x,y).
\end{align*}
This proves that $\phi_1 + \phi_2 \in \fM_{00}$.

To show $\phi_1 \cdot \phi_2 \in \fM_{00}$, let $(Z, \Sigma_Z, \nu)$ be the product of $(Z_1, \Sigma_{Z_1}, \nu_1)$ and $(Z_2, \Sigma_{Z_2}, \nu_2)$. Define $\alpha:X \times Z \rightarrow \bbC$ by
\[
	\alpha(x,z_1, z_2) = \alpha_1(x,z_1)\alpha(x,z_2)
\]
and define $\beta:Y \times Z \rightarrow \bbC$ similarly.  Then $\alpha$ and $\beta$ are bounded and measurable and Fubini's theorem gives
\begin{align*}
\int_Z \alpha(x,z_1, z_2) \beta(y,z_1,z_2) \dd{\nu(z_1, z_2)} &= \prod_{i=1}^2 \int_{Z_i} \alpha_i(x,z_i) \beta_i(y, z_i) \dd{\nu(z_i)}\\
&= \prod_{i=1}^2 \phi_i(x,y).
\end{align*}
Thus, $\phi_1 \cdot \phi_2 \in \fM_{00}$. 

Thus, $\fM_{00}$ is a $*$-subalgebra of $\B(X \times Y)$. Since $\fM$ is the image of $\fM_{00}$ under the $*$-homomorphism $\pi$, we know $\fM$ is a $*$-subalgebra of $L^\infty(\cG)$. Since $\fM_0$ is the preimage of $\fM$ under the $*$-homomorphism $\pi$, we know $\fM_0$ is a $*$-subalgebra of $\B(X \times Y)$.

Note that if $\phi, \phi_1, \phi_2 \in \fM_0$ and the decompositions above of $\phi$, $\phi_1$, and $\phi_2$ are not required to be strict, then the constructions above produce (non-strict) decompositions of $\phi^*$, $\lambda \phi$, $\phi_1+\phi_2$, and $\phi_1 \cdot \phi_2$. We will use these constructions to verify the properties of $\norm{\cdot}_{\fM_0}$.

Using the decompositions above, we have:
\begin{align*}
\norm{\phi^*}_{\fM_0} &\leq \sup_{v, w} \int_Z \inorm{\alpha_z^*}{E}{v} \inorm{\beta_z^*}{F}{w} \dd{\nu(z)} \\
&= \sup_{v,w} \int_Z \inorm{\alpha_z}{E}{v} \inorm{\beta_z}{F}{w} \dd{\nu(z)}.
\end{align*}
Since the decomposition of $\phi$ was arbitrary, we see that $\norm{\phi^*}_{\fM_0} \leq \norm{\phi}_{\fM_0}$. Replacing $\phi$ with $\phi^*$ yields the reverse inequality, so $\norm{\phi}_{\fM_0} = \norm{\phi^*}_{\fM_0}$.

It is clear that that $\norm{0}_{\fM_0} = 0$, hence $\norm{\lambda \phi}_{\fM_0} = \abs{\lambda} \norm{\phi}_{\fM_0}$ if $\lambda = 0$. If $\lambda \neq 0$, then we have
\begin{align*}
	\norm{\lambda \phi}_{\fM_0} &\leq \sup_{v,w} \int_Z \inorm{\lambda \alpha_z}{E}{v} \inorm{\beta_z}{F}{w} \dd{\nu(z)} \\
	&= \abs{\lambda} \sup_{v,w} \int_Z \inorm{\alpha_z}{E}{v} \inorm{\beta_z}{F}{w} \dd{\nu(z)}.
\end{align*}
Since the decomposition was arbitrary, we have $\norm{\lambda \phi}_{\fM_0} \leq \abs{\lambda} \norm{\phi}_{\fM_0}$. Replacing $\lambda$ with $\lambda^{-1}$ and $\phi$ with $\lambda \phi$ yields the reverse inequality, so we have homogeneity of the seminorm.

Next, with the decomposition of $\phi_1 + \phi_2$ constructed above, we compute:
\begin{align*}
\norm{\phi_1+\phi_2}_{\fM_0} &\leq \sup_{v,w} \int_Z \inorm{\alpha_z}{E}{v} \inorm{\beta_z}{F}{w} \dd{\nu(z)}\\
&\leq \sum_{i=1}^2 \sup_{v,w} \int_{Z_i} \inorm{\alpha_{i,z}}{E}{v} \inorm{\beta_{i,z}}{F}{w} \dd{\nu(z)}.
\end{align*}
The triangle inequality follows.

With the decomposition of $\phi_1 \cdot \phi_2$ constructed above, we have:
\begin{align*}
\norm{\phi_1\cdot\phi_2}_{\fM_0} &\leq \sup_{v,w} \int_{Z_1 \times Z_2} \inorm{\alpha_{1,z_1}\alpha_{2,z_2}}{E}{v} \inorm{\beta_{1,z_1}\beta_{2,z_2}  }{F}{w} \dd{\nu(z_1,z_2)}\\
&\leq \prod_{i=1}^2 \sup_{v,w} \int_{Z_i} \inorm{\alpha_{i,z_i}}{E}{v} \inorm{\beta_{i,z_i}}{F}{w} \dd{\nu_i(z_i)}.
\end{align*}
It follows that $\norm{\cdot}_{\fM_0}$ is submultiplicative.

Finally, note that $\norm{\pi(\phi)}_{L^\infty(\cG)}$ is the operator norm of the normal operator  $\int_{X \times Y} \phi \dd{\cG}$. Thus, using Fubini's theorem and Theorem \ref{thm:G_TT_absolute_cont_product}, we compute:
\begin{align*}
\norm{\pi(\phi)}_{L^\infty(\cG)} &= \underset{\norm{T}_{\HS} \leq 1}{\sup_{T \in \HS(\sK, \sH)}} \abs{\int_{X \times Y} \phi \dd{\cG_{T,T}}}\\
&\leq \underset{\norm{T}_{\HS} \leq 1}{\sup_{T \in \HS(\sK, \sH)}} \int_Z \int_{X \times Y} \abs{\alpha(x,z)\beta(y,z)} \dd{\cG_{T,T}(x,y)} \dd{\nu(z)}\\
&\leq \underset{\norm{T} \leq 1}{\sup_{T \in \HS(\sK, \sH)}} \int_Z \inorm{\alpha_z\beta_z}{\cG}{T} \dd{\nu(z)}\\
&\leq \sup_{v,w} \int_Z \inorm{\alpha_z}{E}{v} \inorm{\beta_z}{F}{w} \dd{\nu(z)}.
\end{align*}
From this it follows that $\norm{\phi}_{L^\infty(\cG)} \leq \norm{\phi}_{\fM_0}$ since the decomposition was arbitrary.

If $\phi \in \fM_0$ and $\pi(\phi) = 0$, then letting $(Z, \Sigma_Z, \nu)$ be any finite positive measure space and letting $\alpha$ and $\beta$ be identically zero gives a decomposition of $\phi$, from which we see that $\norm{\phi}_{\fM_0} = 0$. Conversely, if $\norm{\phi}_{\fM_0} = 0$, then $\norm{\pi(\phi)}_{L^\infty(\cG)} = 0$, so $\pi(\phi) = 0$. Thus, $\norm{\cdot}_{\fM_0}$ descends to a well-defined norm $\norm{\cdot}_{\fM}$ on $\fM$. Furthermore, $\norm{\cdot}_{\fM}$ is submultiplicative, invariant under complex conjugation, and dominates $\norm{\cdot}_{L^\infty(\cG)}$.

By setting $(Z, \Sigma_Z, \nu)$ to be any positive measure space with $\nu(Z) = 1$ and setting $\alpha(x,z) = \beta(y,z) = 1$ for all $x \in X$, $y  \in Y$, and $z \in Z$, we find that $(Z, \Sigma_Z, \nu, \alpha, \beta)$ is a strict decomposition for the constant function $\phi(x,y) = 1$, so this function is in $\fM_{00}$. From the definition of $\norm{\cdot}_{\fM_0}$ and the strict decompositions above we see that $\norm{\phi}_{\fM_0} \leq 1$. Since $\norm{\pi(\phi)}_{L^\infty(\cG)} = 1$, \eqref{eq:norm_comparison} gives $\norm{\phi}_{\fM_0} = 1$.

All that remains to do is show that $\fM$ is complete. Let $(\pi(\phi_n))_{n \in \bbN}$ be a Cauchy sequence in $\fM$, where $\phi_n \in \fM_0$ for each $n \in \bbN$. Then the sequence $(\pi(\phi_n))_{n \in \bbN}$ is Cauchy in $L^\infty(\cG)$ and therefore has a limit $\pi(\phi) \in L^\infty(\cG)$, where $\phi \in B(X \times Y)$. Furthermore, by passing to a subsequence, we may assume that $\phi_n(x,y) \to \phi(x,y)$ for $\cG$-\ae $(x,y) \in X \times Y$.\footnote{That this may be done for Cauchy sequences in $L^p(\mu)$, $1 \leq p \leq \infty$, and $\mu$ a positive measure is proven in \cite[Thms.~3.11 \& 3.12]{paparudin}. When $p = \infty$, an identical proof works for projection-valued measures.} By passing to another subsequence we may assume that
\[
	\norm{\phi_{n+1} - \phi_n}_{\fM_0} < \frac{1}{2^n}
\]
for all $n \in \bbN$. Let $\phi_0 = 0$ for convenience. For each $n \in \bbN$ we may choose a decomposition $(Z_n, \Sigma_{Z_n}, \nu_n, \alpha_n, \beta_n)$ of $\phi_{n+1} - \phi_n$  such that
\[
	\sup_{v, w} \int_{Z_n} \inorm{\alpha_{n,z}}{E}{v}\inorm{\beta_{n,z}}{F}{w} \dd{\nu_n(z)} < \frac{1}{2^n}.
\]
By scaling the $\alpha_n$, $\beta_n$, and $\nu_n$, we may assume that $\abs{\alpha_n(x,z)} \leq 1$ and $\abs{\beta_n(y,z)} \leq 1$ for all $x \in X$, $y \in Y$, and $z \in Z$. 

Fix $m \in \qty{0} \cup \bbN$. 
Let $(Z, \Sigma_Z, \nu)$ be the disjoint union of the measure spaces $(Z_n, \Sigma_{Z_n}, \nu_n)$ for $n \geq m$. Note that $\nu$ is $\sigma$-finite. Define $\alpha:X \times Z \to \bbC$ by $\alpha(x,z) = \alpha_n(x,z)$ where $n$ is the unique integer such that $z \in Z_n$. Define $\beta:Y \times Z \to \bbC$ similarly. Then $\alpha$ and $\beta$ are bounded and measurable and for every $v \in \sH$ and $w \in \sK$ we have
\[
	\int_Z \inorm{\alpha_z}{E}{v}\inorm{\beta_z}{F}{w} \dd{\nu(z)} = \sum_{n=m}^\infty \int_{Z_n} \inorm{\alpha_{n,z}}{E}{v}\inorm{\beta_{n,z}}{F}{w} \dd{\nu_n(z)} < \infty.
\]
Let $\tilde \nu$, $\tilde \alpha$, and $\tilde \beta$ be the measure and functions obtained by applying part (c) of Lemma \ref{lem:tildes} to $\nu$, $\alpha$, and $\beta$. Then for $\cG$-\ae $(x,y) \in X \times Y$, we have
\begin{align*}
\phi(x,y) - \phi_m(x,y) &= \sum_{n=m}^\infty \phi_{n+1}(x,y) - \phi_n(x,y)\\
&= \sum_{n=m}^\infty \int_{Z_n} \alpha_n(x,z)\beta_n(y,z) \dd{\nu_n(z)}\\
&= \int_Z \alpha(x,z)\beta(y,z) \dd{\nu(z)}\\
&= \int_Z \tilde \alpha(x,z)\tilde \beta(y,z) \dd{\tilde \nu(z)}.
\end{align*}
Thus, $(Z, \Sigma_Z, \tilde \nu, \tilde \alpha, \tilde \beta)$ is a decomposition of $\phi - \phi_m$. Setting $m = 0$ proves that $\phi \in \fM$. For arbitrary $m$, we have
\begin{align*}
\norm{\phi - \phi_m}_{\fM_0} &\leq \sup_{v, w} \int_Z \inorm{\tilde \alpha_z}{E}{v}\inorm{\tilde \beta_z}{F}{w} \dd{\tilde \nu(z)}\\
&= \sup_{v,w} \int_Z \inorm{\alpha_z}{E}{v}\inorm{\beta_z}{F}{w} \dd{\nu(z)}\\
&\leq  \sum_{n=m}^\infty \sup_{v,w} \int_{Z_n} \inorm{\alpha_{n,z}}{E}{v}\inorm{\beta_{n,z}}{F}{w} \dd{\nu_n(z)}\\
&< \frac{1}{2^{m-1}}.
\end{align*}
This proves that $\pi(\phi_m) \to \pi(\phi)$ with respect to $\norm{\cdot}_{\fM}$, so $\fM$ is complete.
\end{proof}

Henceforth we will leave the projection map $\pi:B(X \times Y) \to L^\infty(\cG)$  implicit, as is conventional.
If $\phi \in \fM$, then one can understand the matrix elements  $\ev{S,\int \phi \dd{\cG}T}_{\HS}$ in terms of the projection-valued measures $E$ and $F$.

\begin{theorem}\label{thm:HS_mel_decomposition}
If $\phi \in \fM$ and $(Z, \Sigma_Z, \nu, \alpha, \beta)$ is a decomposition of $\phi$, then for any $S, T \in \HS(\sK, \sH)$ the map
\begin{equation}\label{eq:measurable_mel_decomposition}
	z \mapsto \Tr(S^* \int_X \alpha_z \dd{E}  T  \int_Y \beta_z \dd{F})
\end{equation}
is bounded and measurable and
\begin{equation}\label{eq:mel_integral_expansion}
	\Tr(S^* \int_{X \times Y} \phi \dd{\cG} T) = \int_Z \Tr(S^* \int_X \alpha_z \dd{E} T \int_Y \beta_z \dd{F}) \dd{\nu(z)}
\end{equation}
\end{theorem}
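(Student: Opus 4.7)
The plan is to express both sides of \eqref{eq:mel_integral_expansion} as integrals of a common bounded measurable function against the complex measure $\cG_{S,T}$, and then conclude via Fubini's theorem. First, for fixed $z \in Z$, Propositions \ref{prop:curly_E_F} and \ref{prop:totally_separated_variable_integral} together yield
\[
\int_X \alpha_z \dd{E}\, T \int_Y \beta_z \dd{F} = \left(\int_X \alpha_z \dd{\cE} \int_Y \beta_z \dd{\cF}\right)(T) = \left(\int_{X \times Y} \alpha_z \beta_z \dd{\cG}\right)(T),
\]
so pairing with $S$ in the Hilbert-Schmidt inner product gives
\[
\Tr(S^* \int_X \alpha_z \dd{E}\, T \int_Y \beta_z \dd{F}) = \int_{X \times Y} \alpha(x,z)\beta(y,z) \dd{\cG_{S,T}(x,y)}.
\]
Likewise, the left-hand side of \eqref{eq:mel_integral_expansion} is simply $\int_{X \times Y} \phi \dd{\cG_{S,T}}$ by the defining matrix-element property of the PVM integral $\int \phi \dd{\cG}$.

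Next, we apply Fubini to the jointly measurable function $F(x,y,z) \defeq \alpha(x,z)\beta(y,z)$ on $(X \times Y) \times Z$. Boundedness of $\alpha$ and $\beta$, finiteness of $\nu$, and the standard bound $\abs{\cG_{S,T}}(X \times Y) \leq \norm{S}_{\HS}\norm{T}_{\HS}$ guarantee that $F \in L^1(\abs{\cG_{S,T}} \times \nu)$. To handle Fubini against the complex measure $\cG_{S,T}$, we invoke the polarization identity
\[
\cG_{S,T} = \tfrac{1}{4}\sum_{k=0}^{3} i^{-k}\, \cG_{S + i^k T,\, S+i^k T},
\]
decomposing $\cG_{S,T}$ as a complex linear combination of four positive finite measures. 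Applying the classical Fubini--Tonelli theorem to each summand simultaneously produces both the measurability of $z \mapsto \int_{X \times Y} F(\cdot,\cdot,z) \dd{\cG_{S,T}}$ and the interchange
\[
\int_{X \times Y}\int_Z F \dd{\nu} \dd{\cG_{S,T}} = \int_Z \int_{X \times Y} F \dd{\cG_{S,T}} \dd{\nu}.
\]

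Since $(Z, \Sigma_Z, \nu, \alpha, \beta)$ is a decomposition of $\phi$, the identity $\phi(x,y) = \int_Z F(x,y,z) \dd{\nu(z)}$ holds $\cG$-a.e., and therefore $\cG_{S,T}$-a.e.\ (any $\cG$-null set is $\cG_{R,R}$-null for every $R \in \HS(\sK, \sH)$, hence $\cG_{S,T}$-null by the polarization identity above). Substituting this representation of $\phi$ into the left-hand side of the Fubini identity yields \eqref{eq:mel_integral_expansion}, while boundedness of the integrand in \eqref{eq:measurable_mel_decomposition} follows from the crude estimate $\norm{\alpha}_{\B(X \times Z)} \norm{\beta}_{\B(Y \times Z)} \norm{S}_{\HS}\norm{T}_{\HS}$. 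We anticipate no serious obstacle; the main conceptual point is the polarization decomposition of $\cG_{S,T}$, which simultaneously underwrites the measurability of the $z$-integrand and the interchange of order of integration against a complex measure.
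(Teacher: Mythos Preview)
Your argument is correct. The route differs from the paper's in an instructive way. The paper first reduces to rank-one $S = \ketbra{v_1}{w_1}$ and $T = \ketbra{v_2}{w_2}$, where the explicit product structure $\cG_{\ketbra{v_1}{w_1},\ketbra{v_2}{w_2}} = E_{v_1,v_2} \times F_{w_2,w_1}$ (Proposition~\ref{prop:G_measure_properties}(c)) lets one factor the integrand as $\int_X \alpha_z\, \dd{E_{v_1,v_2}} \cdot \int_Y \beta_z\, \dd{F_{w_2,w_1}}$ and read off measurability from two separate applications of Fubini; it then extends to finite-rank by sesquilinearity and to general Hilbert--Schmidt operators by a density/dominated-convergence argument. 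You instead work directly with arbitrary $S,T$ by polarizing $\cG_{S,T}$ into four finite positive measures $\cG_{R,R}$ and invoking Fubini--Tonelli on each, which eliminates the limiting step entirely. Your approach is shorter and more uniform; the paper's approach has the minor advantage that measurability is exhibited through an explicit factorization rather than appealing to Tonelli on the triple product. Both are valid, and the polarization trick you use is essentially a compressed version of Proposition~\ref{prop:G_measure_properties}(a)--(b).
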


\begin{proof}
It is clear that \eqref{eq:measurable_mel_decomposition} is bounded. It suffices to show measurability when $S$ and $T$ are rank-one operators. Indeed,  this will imply measurability when $S$ and $T$ are finite-rank operators by sesquilinearity of the inner product. For arbitrary Hilbert-Schmidt operators $S$ and $T$ we can then take sequences $(S_n)$ and $(T_n)$ of finite-rank operators converging in $\HS(\sK, \sH)$ to $S$ and $T$  and obtain \eqref{eq:measurable_mel_decomposition} as the pointwise limit of a sequence of measurable functions.

Thus, let $v_1, v_2 \in \sH$, let $w_1, w_2 \in \sK$, let $S = \ketbra{v_1}{w_1}$ and let $T = \ketbra{v_2}{w_2}$. Then
\begin{align*}
\ev{S, \int_X \alpha_z \dd{E}  T  \int_Y \beta_z \dd{F}}_{\HS} &= \int_X \alpha_z \dd{E_{v_1, v_2}} \int_Y \beta_z \dd{F_{w_2, w_1}}
\end{align*}
which is measurable by Fubini's theorem. 

Moreover, by Propositions \ref{prop:curly_E_F} and \ref{prop:totally_separated_variable_integral} we have
\begin{align*}
\ev{S, \int_X \alpha_z \dd{E}  T  \int_Y \beta_z \dd{F}}_{\HS} &= \int_{X \times Y} \alpha_z \beta_z \dd{\cG_{S, T}}.
\end{align*}
Finally, integrating over $Z$ and using Fubini's theorem yields \eqref{eq:mel_integral_expansion} for the case when $S$ and $T$ are rank-one operators. Then \eqref{eq:mel_integral_expansion} follows when $S$ and $T$ are finite-rank operators by sesquilinearity of both sides. For arbitrary Hilbert-Schmidt operators $S$ and $T$, \eqref{eq:mel_integral_expansion} follows by taking sequences $(S_n)$ and $(T_n)$ of finite-rank operators converging to $S$ and $T$ in $\HS(\sK, \sH)$ and using the dominated convergence theorem.
\end{proof}

\section{The Double Operator Integral on \texorpdfstring{$\TC(\sK, \sH)$}{TC} and \texorpdfstring{$\fB(\sK, \sH)$}{B}} \label{sec:double_operator_integral_on_B}

We would now like to extend \eqref{eq:mel_integral_expansion} to the case where $T \in \TC(\sK, \sH)$ and $S \in \fB(\sK, \sH)$. To show the left hand side of \eqref{eq:mel_integral_expansion} is well-defined in this case, we first show that if $\phi \in \fM$, then $\int \phi \dd{\cG}$ maps trace-class operators to trace-class operators. To show the right hand side is well-defined, we will have to show that the integrand is measurable. Note that the absolute value of the integrand on the right-hand side is bounded above by $\norm{S^*}\norm{\alpha}_{\B(X \times Z)} \norm{T}_{\TC} \norm{\beta}_{\B(Y \times Z)}$, so measurability will imply integrability. 

The results of this section, excluding Proposition \ref{prop:SOT_sequential_continuity}, are similar to the main results in \cite[\S4.2-4.3]{Nikitopolous}. We have streamlined the proofs, however.

\begin{theorem}\label{thm:TC_to_TC}
If $\phi \in \fM$, then $\int \phi \dd{\cG}$ maps trace-class operators to trace-class operators and is bounded as an operator $\TC(\sK, \sH) \to \TC(\sK,\sH)$ with norm
\[
	\norm{\int_{X \times Y} \phi \dd{\cG}}_{\TC \to \TC} \leq \norm{\phi}_\fM
\]
\end{theorem}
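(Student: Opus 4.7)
The plan is to bound the trace norm of $R \defeq \int_{X\times Y}\phi\dd{\cG}T$ by a duality argument, using Theorem \ref{thm:HS_mel_decomposition} to expand $\Tr(UR)$ via a decomposition of $\phi$, then controlling each factor $A_z T B_z$ through the singular value decomposition of $T$ and Theorem \ref{thm:sep_spectral_dominator}. Fix $\phi \in \fM$ with a decomposition $(Z,\Sigma_Z,\nu,\alpha,\beta)$, and fix $T \in \TC(\sK,\sH)$. Since $T \in \HS(\sK,\sH)$ and $\int \phi \dd{\cG}$ is bounded on $\HS(\sK,\sH)$, already $R \in \HS(\sK,\sH)$. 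By the standard duality between compact operators and trace-class operators, $\norm{R}_{\TC}$ equals the supremum of $|\Tr(UR)|$ over finite-rank $U \in \fB(\sH,\sK)$ of operator norm at most $1$; thus it suffices to prove
\[
	|\Tr(UR)| \leq \norm{U}\,\norm{\phi}_{\fM}\,\norm{T}_{\TC}
\]
for every finite-rank $U \in \fB(\sH,\sK)$.

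For finite-rank (hence Hilbert--Schmidt) $U$, Theorem \ref{thm:HS_mel_decomposition} applied with $S = U^*$ yields
\[
	\Tr(UR) = \int_Z \Tr(U A_z T B_z)\dd{\nu(z)}, \quad A_z \defeq \int_X \alpha_z \dd{E}, \quad B_z \defeq \int_Y \beta_z \dd{F}.
\]
Write the singular value decomposition $T = \sum_n s_n \ketbra{e_n}{f_n}$ with $(e_n)$ orthonormal in the separable subspace $\overline{T(\sK)} \subset \sH$, $(f_n)$ orthonormal in the separable subspace $(\ker T)^\perp \subset \sK$, and $\sum_n s_n = \norm{T}_{\TC}$. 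Since multiplication by bounded operators is trace-norm continuous, $A_z T B_z = \sum_n s_n \ketbra{A_z e_n}{B_z^* f_n}$ converges in trace norm, and the triangle inequality gives $\norm{A_z T B_z}_{\TC} \leq \sum_n s_n \norm{A_z e_n}\norm{B_z^* f_n}$.

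The key step is to bound $\norm{A_z e_n}$ and $\norm{B_z^* f_n}$ uniformly in $n$ by quantities matching the $\fM$-seminorm. Apply Theorem \ref{thm:sep_spectral_dominator} to the separable subspaces $\overline{T(\sK)}$ and $(\ker T)^\perp$ to produce vectors $v_0 \in \sH$ and $w_0 \in \sK$ such that $E_{e_n,e_n} \ll E_{v_0,v_0}$ and $F_{f_n,f_n} \ll F_{w_0,w_0}$ for every $n$. This absolute continuity forces $\norm{A_z e_n}^2 = \int_X |\alpha_z|^2 \dd{E_{e_n,e_n}} \leq \inorm{\alpha_z}{}{v_0}^2$ and analogously $\norm{B_z^* f_n} \leq \inorm{\beta_z}{}{w_0}$, hence $\norm{A_z T B_z}_{\TC} \leq \norm{T}_{\TC}\inorm{\alpha_z}{}{v_0}\inorm{\beta_z}{}{w_0}$. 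Combining with $|\Tr(U A_z T B_z)| \leq \norm{U}\norm{A_z T B_z}_{\TC}$ and integrating gives
\[
	|\Tr(UR)| \leq \norm{U}\norm{T}_{\TC}\int_Z \inorm{\alpha_z}{}{v_0}\inorm{\beta_z}{}{w_0}\dd{\nu(z)} \leq \norm{U}\norm{T}_{\TC}\sup_{v,w}\int_Z \inorm{\alpha_z}{}{v}\inorm{\beta_z}{}{w}\dd{\nu(z)},
\]
and taking the infimum over decompositions of $\phi$ replaces the supremum by $\norm{\phi}_{\fM}$. The main subtlety is precisely this reduction to a single dominating pair $(v_0,w_0)$: in the separable setting one could bound everything by $\norm{\alpha_z}_{L^\infty(E)}$ and $\norm{\beta_z}_{L^\infty(F)}$ directly, but the $\fM$-seminorm is defined through the vector-dependent essential suprema, and Theorem \ref{thm:sep_spectral_dominator} is what allows the countable family $\{e_n\}\cup\{f_n\}$ arising from the SVD of $T$ to be dominated by a single pair of vectors even when $\sH$ and $\sK$ are non-separable.
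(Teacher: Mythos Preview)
Your proof is correct and follows the same underlying strategy as the paper's: expand via Theorem \ref{thm:HS_mel_decomposition}, exploit the separability of $\overline{T(\sK)}$ and $(\ker T)^\perp$, and invoke Theorem \ref{thm:sep_spectral_dominator} to produce a single dominating pair $(v_0,w_0)$. The organization differs in two minor respects. First, the paper tests trace-class membership via the criterion $\sup \sum_{i=1}^n |\langle v_i, R w_i\rangle| < \infty$ over finite orthonormal families, whereas you use the duality $\norm{R}_{\TC} = \sup_{\norm{U}\le 1,\,U\text{ finite rank}}|\Tr(UR)|$; both are standard. Second, the paper compresses $A_z$ and $B_z$ to separable subspaces enlarged to contain the test vectors $v_i,w_i$ and bounds $\norm{P A_z P^*}\le \norm{\alpha_z}_v$, while you instead bound $\norm{A_z T B_z}_{\TC}$ directly through the singular value decomposition of $T$ and the pointwise estimate $\norm{A_z e_n}\le \norm{\alpha_z}_{v_0}$. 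Your route has the small advantage that the separable subspaces need only contain the singular vectors of $T$, not the test data, which slightly declutters the argument; the paper's compression viewpoint, on the other hand, makes more transparent that one is really computing the norm of $A_z$ restricted to a separable invariant piece of $\sH$.
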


\begin{proof}
Let $T \in \TC(\sK, \sH)$. The theorem will be proven if we can show that for any finite orthonormal families $v_1,\ldots, v_n \in \sH$ and $w_1, \ldots, w_n \in \sK$, we have 
\begin{equation}\label{eq:trace_class_test}
	\sum_{i \in 1}^n \abs{\ev{v_i, \qty(\int \phi \dd{\cG})(T) w_i}} \leq \norm{\phi}_\fM \norm{T}_{\TC} 
\end{equation}
Let $\sH_0$ be the closed subspace of $\sH$ generated by $\qty{v_1, \ldots, v_n} \cup T(\sK)$ and let $\sK_0$ be the closed subspace of $\sK$ generated by $\qty{w_1,\ldots, w_n} \cup (\ker T)^\perp$. Note that $\sH_0$ and $\sK_0$ are separable. Let $P:\sH \to \sH_0$ and $Q:\sK \to \sK_0$ be the  projections onto these subspaces. 

Let $(Z, \Sigma_Z, \nu, \alpha, \beta)$ be a decomposition of $\phi$. Let $\mathrm{LHS}$ be the left hand side of \eqref{eq:trace_class_test} for brevity. By Theorem \ref{thm:HS_mel_decomposition} with $S = \ketbra{v_i}{w_i}$, we have
\begin{align*}
\mathrm{LHS} &= \sum_{i=1}^n\abs{\int_Z \ev{v_i, \int_X \alpha_z \dd{E} T \int_Y \beta_z \dd{F} w_i} \dd{\nu(z)}}\\
&\leq \int_Z \sum_{i=1}^n \abs{ \ev{v_i, P\int_X \alpha_z \dd{E}P^*P T Q^*Q \int_Y \beta_z \dd{F} Q^* w_i} } \dd{\nu(z)}.
\end{align*}
The integrand is bounded above by
\[
	\norm{P\int_X \alpha_z \dd{E} P^*}\norm{PTQ^*}_\TC  \norm{Q\int_Y \beta_z \dd{F}Q^*}
\]
and we also have $\norm{PTQ^*}_{\TC} \leq \norm{T}_\TC$. By Theorem \ref{thm:sep_spectral_dominator}, there exists $v \in \sH$ such that $E_{u_1, u_2} \ll E_{v,v}$ for all $u_1, u_2 \in \sH_0$.  Thus,
\begin{align*}
\norm{P\int_X \alpha_z \dd{E}P^*} &= \underset{\norm{u_1}, \norm{u_2} \leq 1}{\sup_{u_1, u_2 \in \sH_0}} \abs{\int_X \alpha_z \dd{E_{u_1, u_2}}} \leq \norm{\alpha_z}_v.
\end{align*}
Similarly, there exists $w \in \sK$ such that 
\[
	\norm{Q \int_Y \beta_z \dd{F}Q^*} \leq \norm{\beta_z}_w.
\]
Thus, we have
\[
	\mathrm{LHS} \leq \norm{T}_{\TC} \int_Z \norm{\alpha_z}_v \norm{\beta_z}_w \dd{\nu(z)}.
\]
The inequality \eqref{eq:trace_class_test} now follows by definition of $\norm{\phi}_\fM$.
\end{proof}

We now show that the integrand on the right hand side of \eqref{eq:mel_integral_expansion} is measurable.

\begin{proposition}\label{prop:ultraweakly_meas}
Let $(Z, \Sigma_Z)$ be a measurable space, let $\alpha:X \times Z \to \bbC$ and $\beta:Y \times Z \to \bbC$ be bounded measurable functions, and let $S \in \cB(\sK, \sH)$, and let $T \in \TC(\sK, \sH)$. The function
\begin{equation}\label{eq:B_TC_trace_func}
	z \mapsto \Tr\qty(S^*\int_X \alpha_z \dd{E} T \int_Y \beta_z \dd{F})
\end{equation}
is measurable.
\end{proposition}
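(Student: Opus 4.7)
My plan is to first use the Schmidt decomposition of $T$ to reduce the measurability of the trace to the measurability of scalar matrix elements of the form $z \mapsto \ev{v, B_z S^* A_z u}$ for fixed vectors $u \in \sH$, $v \in \sK$, and then to establish the latter via a monotone class argument in the variable $\alpha$. Set $A_z \defeq \int_X \alpha_z \dd{E}$ and $B_z \defeq \int_Y \beta_z \dd{F}$. Writing $T = \sum_n s_n \ketbra{e_n}{f_n}$ for the Schmidt decomposition of $T \in \TC(\sK, \sH)$ (with orthonormal $(e_n) \subset \sH$, $(f_n) \subset \sK$ and $\sum_n s_n = \norm{T}_{\TC}$), cycling the trace and computing term by term yields
\[
	\Tr\qty(S^* A_z T B_z) = \Tr\qty(B_z S^* A_z T) = \sum_n s_n \ev{f_n, B_z S^* A_z e_n}.
\]
Each summand is bounded uniformly in $n$ and $z$ by $\norm{\alpha}_{\B(X \times Z)} \norm{\beta}_{\B(Y \times Z)} \norm{S}$, so the series converges uniformly in $z$ and measurability of the trace follows once each summand is measurable.

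To prove measurability of $z \mapsto \ev{v, B_z S^* A_z u}$ for fixed $u \in \sH$ and $v \in \sK$, fix $\beta$, $S$, $u$, $v$ and let $\cC$ be the set of bounded measurable $\alpha : X \times Z \to \bbC$ for which the conclusion holds. When $\alpha = \chi_{\Gamma \times M}$ is the indicator of a measurable rectangle, $A_z = \chi_M(z) E(\Gamma)$, whence
\[
	\ev{v, B_z S^* A_z u} = \chi_M(z) \int_Y \beta(y,z) \dd{F_{v,\, S^* E(\Gamma) u}(y)},
\]
which is measurable in $z$ by Fubini's theorem applied to the (complex) measure $F_{v,\, S^* E(\Gamma) u}$ and the bounded measurable function $\beta$. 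Hence $\cC$ contains the indicator of every measurable rectangle. The set $\cC$ is clearly a complex vector space containing constants, and it is closed under bounded pointwise convergence: if $\alpha^{(k)} \in \cC$, $\abs{\alpha^{(k)}} \leq C$, and $\alpha^{(k)} \to \alpha$ pointwise, then $A_z^{(k)} u \to A_z u$ in $\sH$ by dominated convergence for integration against the projection-valued measure $E$, so applying the bounded operators $S^*$ and $B_z$ and pairing with $v$ gives $\ev{v, B_z S^* A_z^{(k)} u} \to \ev{v, B_z S^* A_z u}$ pointwise, and the pointwise limit of measurable functions is measurable. The functional monotone class theorem applied to the $\pi$-system of measurable rectangles, which generates $\Sigma_{X \times Z}$, then gives $\cC = \B(X \times Z)$, completing the argument.

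The main obstacle, and the reason Fubini's theorem cannot be applied directly to the trace, is that both $A_z u$ and $B_z^* v$ depend on $z$, so there is no single complex measure on $X$ or on $Y$ against which one simply integrates. The monotone class step circumvents this by reducing to rectangles, on which the $z$-dependence of $A_z$ collapses into a scalar factor and only a single complex measure on $Y$ (namely $F_{v,\, S^* E(\Gamma) u}$) appears, after which the standard Fubini statement suffices.
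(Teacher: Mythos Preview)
Your proof is correct and follows essentially the same architecture as the paper's: reduce to rank-one $T$ (you do this via the Schmidt decomposition, the paper by direct approximation with finite-rank operators), rewrite the trace as a scalar matrix element $\ev{v, B_z S^* A_z u}$, and then run a monotone-class/$\pi$-$\lambda$ argument to reduce $\alpha$ to indicators of measurable rectangles.

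The one noteworthy organizational difference is that you handle $\beta$ in a single stroke: once $\alpha = \chi_{\Gamma \times M}$, the $z$-dependence of $A_z$ collapses to a scalar factor and the remaining expression $\int_Y \beta(y,z)\,\dd F_{v,\,S^*E(\Gamma)u}(y)$ is measurable by ordinary Fubini for a fixed complex measure. The paper instead first approximates both $\alpha$ and $\beta$ uniformly by simple functions and then runs the Dynkin $\pi$-$\lambda$ theorem twice (once in $N$ with $M$ a rectangle, then in $M$ with $N$ arbitrary). Your route is a bit more economical---one functional monotone class pass instead of two $\pi$-$\lambda$ passes---at the cost of invoking the functional version rather than the set-version of the theorem.
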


\begin{proof}
It suffices to prove measurability of \eqref{eq:B_TC_trace_func} when $T$ is a rank-one operator. The result will then follow for finite-rank $T$ by linearity, and the result will follow for arbitrary trace-class $T$ by approximating with a sequence of finite-rank operators and realizing \eqref{eq:B_TC_trace_func} as the pointwise limit of a sequence of measurable functions.

Fix $v \in \sH$ and $w \in \sK$ and set $T = \ketbra{v}{w}$. Then \eqref{eq:B_TC_trace_func} becomes
\[
	z \mapsto \ev{w, \int_Y \beta_z \dd{F} S^* \int_X \alpha_z \dd{E} v}.
\]
Since $\alpha$ and $\beta$ can be approximated uniformly by sequences of simple functions, the above function can be written as a pointwise limit of linear combinations of functions of the form
\begin{equation}\label{eq:char_ultraweakly_meas}
	z \mapsto \ev{v, \int_X \chi_M(x,z) \dd{E(x)} A \int_Y \chi_N(y,z) \dd{F(y)}w}
\end{equation}
where $M \in \Sigma_{X \times Y}$ and $N \in \Sigma_{Y \times Z}$. It therefore suffices to show that each function of the form  \eqref{eq:char_ultraweakly_meas} is measurable.

Suppose $M = M_1 \times M_2$ for measurable sets $M_1 \subset X$ and $M_2 \subset Z$. If also $N = N_1 \times N_2$ where $N_1 \subset Y$ and $N_2 \subset Z$ are measurable, then \eqref{eq:char_ultraweakly_meas} reduces to $z \mapsto \ev{v, E(M_1)AF(N_1)w} \chi_{M_2 \cap N_2}(z)$, which is clearly measurable. Consider the set $\Lambda$ of all sets $N \in \Sigma_{Y \times Z}$ such that \eqref{eq:char_ultraweakly_meas} is measurable, still assuming $M = M_1 \times M_2$. Then $\Lambda$ contains every measurable rectangle and it is clear that $\Lambda$ is closed under taking complements. If $(N_i)_{i \in \bbN}$ is a disjoint collection of elements of $\Lambda$ and $N = \bigcup_{i \in \bbN} N_i$, then
\[
	\int_Y \chi_N(y,z) \dd{F(y)}w = \lim_{n \to \infty} \sum_{i=1}^n \int_Y \chi_{N_i}(y,z) \dd{F(y)}w
\]
from which it follows that \eqref{eq:char_ultraweakly_meas} is measurable as a pointwise limit of measurable functions. Thus, $\Lambda$ is a $\lambda$-system containing the $\pi$-system of measurable rectangles in $Y \times Z$, so $\Lambda = \Sigma_{Y \times Z}$ by the Dynkin $\pi$-$\lambda$ theorem.

We have shown that \eqref{eq:char_ultraweakly_meas} is measurable when $M$ is a measurable rectangle and $N \in \Sigma_{Y \times Z}$ is arbitrary. Fixing an arbitrary $N$, the same argument using the Dynkin $\pi$-$\lambda$ theorem, now applied to $M$, shows that \eqref{eq:char_ultraweakly_meas} is measurable for arbitrary $M \in \Sigma_{X \times Z}$. 
\end{proof}

\begin{theorem}\label{thm:doi_TC_mel}
Let $T \in \TC(\sK, \sH)$, let $S \in \fB(\sK, \sH)$, let $(Z, \Sigma_Z, \nu)$ be a finite positive measure space, and let $\alpha:X \times Z \to \bbC$ and $\beta:Y \times Z \to \bbC$ be bounded measurable functions. Define $\phi:X \times Y \to \bbC$ by 
\[
	\phi(x,y) = \int_Z \alpha(x,z) \beta(y,z) \dd{\nu(z)}.
\]
Then:
\begin{equation}\label{eq:trace_decomposition}
\Tr\qty(S^* \int_{X \times Y} \phi \dd{\cG}T) = \int_Z \Tr\qty(S^*\int_X \alpha_z \dd{E} T \int_Y \beta_z \dd{F}) \dd{\nu(z)}.
\end{equation}
\end{theorem}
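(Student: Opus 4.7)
The strategy is to expand the trace on the left-hand side over an orthonormal basis of $\sH$ and reduce to the matrix-element identity obtained by specializing Theorem \ref{thm:HS_mel_decomposition} to rank-one Hilbert--Schmidt operators.

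First I would verify well-definedness. By Theorem \ref{thm:TC_to_TC}, $R := \int_{X \times Y} \phi \, d\cG \, T \in \TC(\sK, \sH)$, so $S^* R \in \TC(\sK)$ and the LHS is defined. Writing $A_z := \int_X \alpha_z \, dE$ and $B_z := \int_Y \beta_z \, dF$, the ideal property of $\TC$ in $\fB$ gives $A_z T B_z \in \TC(\sK, \sH)$, and Proposition \ref{prop:ultraweakly_meas} supplies measurability of $z \mapsto \Tr(S^* A_z T B_z)$. The uniform bound $\abs{\Tr(S^* A_z T B_z)} \le \norm{S} \norm{\alpha}_{\B(X \times Z)} \norm{\beta}_{\B(Y \times Z)} \norm{T}_\TC$ together with $\nu(Z) < \infty$ gives integrability of the RHS.

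The key step is to specialize Theorem \ref{thm:HS_mel_decomposition} to $S' = \ketbra{v}{w}$ and $T' = T$, both Hilbert--Schmidt. Using $\Tr(\ketbra{w}{v} X) = \ev{v, Xw}$ for suitable $X$, this yields the matrix-element identity
\[
	\ev{v, Rw} = \int_Z \ev{v, A_z T B_z w} \, d\nu(z), \qquad v \in \sH, \ w \in \sK.
\]
By cyclicity, $\Tr(S^* R) = \Tr(RS^*)$ with $RS^* \in \TC(\sH)$. Choosing an orthonormal basis $(h_j)_{j \in J}$ of $\sH$ and applying the identity above with $v = h_j$ and $w = S^* h_j$ gives
\[
	\Tr(S^* R) = \sum_{j \in J} \ev{h_j, RS^* h_j} = \sum_{j \in J} \int_Z \ev{h_j, A_z T B_z S^* h_j} \, d\nu(z).
\]
Swapping sum and integral, and using cyclicity once more to identify the inner sum as $\Tr(S^* A_z T B_z)$, would complete the proof.

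The main obstacle is justifying the Fubini swap, especially when $\sH$ is non-separable and $J$ is uncountable. The crucial estimate is $\sum_{j \in J} \abs{\ev{h_j, W h_j}} \le \norm{W}_\TC$ for any $W \in \TC(\sH)$, which I would prove via the polar decomposition $W = U\abs{W}$ and two applications of Cauchy--Schwarz to $\ev{h_j, W h_j} = \ev{\abs{W}^{1/2} U^* h_j,\, \abs{W}^{1/2} h_j}$. Applied to $W = A_z T B_z S^*$, this yields $\sum_{j \in J} \abs{\ev{h_j, A_z T B_z S^* h_j}} \le \norm{\alpha}_{\B(X \times Z)} \norm{\beta}_{\B(Y \times Z)} \norm{T}_\TC \norm{S}$ uniformly in $z$, so the double integral is absolutely convergent. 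The non-$\sigma$-finiteness of counting measure on an uncountable $J$ is circumvented by the observation that absolute convergence forces all but countably many $j \in J$ to satisfy $\int_Z \abs{\ev{h_j, A_z T B_z S^* h_j}} \, d\nu(z) = 0$, reducing the swap to a standard Fubini argument on a countable sub-ONB.
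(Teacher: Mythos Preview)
Your approach is natural and works when $\sH$ is separable, but the nonseparable case has a genuine gap that your proposed fix does not close. You correctly observe that $\sum_{j\in J} a_j < \infty$ with $a_j = \int_Z \abs{\ev{h_j, A_zTB_zS^*h_j}}\,d\nu(z)$, so $J' = \{j : a_j > 0\}$ is countable, and hence $\sum_{j\in J}\int_Z b_j\,d\nu = \sum_{j\in J'}\int_Z b_j\,d\nu$ where $b_j(z) = \ev{h_j, A_zTB_zS^*h_j}$. After Fubini on $J'$ this becomes $\int_Z \sum_{j\in J'} b_j(z)\,d\nu(z)$. But to identify this with the right-hand side $\int_Z \Tr(A_zTB_zS^*)\,d\nu(z)$ you would need $\sum_{j\in J'} b_j(z) = \sum_{j\in J} b_j(z)$ for $\nu$-almost every $z$, i.e.\ $\sum_{j\notin J'} b_j(z) = 0$ almost everywhere. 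You only know that for each individual $j\notin J'$ one has $b_j(z) = 0$ off a null set $N_j$, but the union $\bigcup_{j\notin J'} N_j$ is uncountable and need not be null. Abstractly this can fail: with $Z = J = [0,1]$ and $b_j(z) = \1_{\{j=z\}}$ one has $a_j = 0$ for every $j$, yet $\sum_j b_j(z) = 1$ for all $z$. The specific structure here may well rule this out, but establishing that requires an additional argument you have not given.

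The paper avoids this issue entirely by a different route: it observes that both sides of \eqref{eq:trace_decomposition} are bounded bilinear in $(\alpha,\beta)\in \B(X\times Z)\times \B(Y\times Z)$, reduces to characteristic functions $\alpha=\chi_M$, $\beta=\chi_N$, verifies the identity directly on measurable rectangles, and then bootstraps to arbitrary $M$ and $N$ via two applications of the Dynkin $\pi$--$\lambda$ theorem. The only limiting step is closure under countable disjoint unions, which is handled by dominated convergence on the finite measure $\nu$, so no uncountable sums ever appear.
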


\begin{proof}
Fixing $T$, $S$, and $(Z, \Sigma_Z, \nu)$, both the left and right hand sides of \eqref{eq:trace_decomposition} are bounded bilinear functions of $(\alpha, \beta) \in \B(X \times Z) \times B(Y \times Z)$, where the boundedness of the left hand side follows from Theorem \ref{thm:TC_to_TC} and Remark \ref{rem:easy_norm_bounds}.  It therefore suffices to prove \eqref{eq:trace_decomposition} when $\alpha$ and $\beta$ are characteristic functions $\alpha = \chi_M$ and $\beta = \chi_N$ for some measurable sets $M \in \Sigma_{X \times Z}$ and $N \in \Sigma_{Y \times Z}$. We will again use the Dynkin $\pi$-$\lambda$ theorem.

Suppose $M = M_1 \times M_2$ for measurable sets $M_1 \subset X$ and $M_2 \subset Z$.  If $N = N_1 \times N_2$ for measurable sets $N_1 \subset Y$ and $N_2 \subset Z$. Then both sides of \eqref{eq:trace_decomposition} reduce to 
\[
	\Tr(S^*E(M_1)TF(N_1))\nu(M_2 \cap N_2).
\]
Consider the set $\Lambda$ of all sets $N \in \Sigma_{Y \times Z}$ such that \eqref{eq:trace_decomposition} holds, still assuming $M = M_1 \times M_2$. Then $\Lambda$ contains every measurable rectangle and it is clear that $\Lambda$ is closed under taking complements. Suppose $(N_i)_{i \in \bbN}$ is a disjoint collection of elements of $\Lambda$ and let $N = \bigcup_{i \in \bbN} N_i$. Define 
\[
	\beta_n(y,z) \defeq \sum_{i=1}^n \chi_{N_i}(y,z) \qqtext{and} \phi_n(x,y) \defeq \int_Z \alpha(x,z)\beta_n(y,z) \dd{\nu(z)}
\]
Then define:
\[
R_n \defeq S^*\int_{X \times Y} (\phi - \phi_n)\dd{\cG}T
\]
Each operator $R_n$ is trace-class by Theorem \ref{thm:TC_to_TC}. Then for each $n \in \bbN$ there exists a countable orthonormal family $(e_{i})_{i \in I}$ such that 
\[
	\Tr(R_n) = \sum_{i \in I} \ev{e_{i}, R_n e_{i}}.
\]
Taking absolute values and using Theorem \ref{thm:HS_mel_decomposition}, we estimate:
\begin{align}
\abs{\Tr(R_n)} &\leq \sum_{i \in I} \int_Z  \abs{\ev{e_{i}, S^*\int_X \alpha_z \dd{E} T \int_Y (\beta - \beta_n)_z \dd{F} e_{i}} } \dd{\nu(z) }\\
&= \int_Z \sum_{i \in I} \abs{\ev{e_{i}, S^* \int_X \alpha_z \dd{E} T \int_Y(\beta - \beta_n)_z \dd{F} e_{i}} } \dd{\nu(z)} \label{eq:sum_int_Fubini}
\end{align}
where the interchange of sum and integral is justified since the index set $I$ is countable. The integrand is bounded above by
\begin{align*}
\norm{S^* \int_X \alpha_z \dd{E} T \int_Y (\beta - \beta_n)_z \dd{F}}_{\TC} &\leq \norm{S^*}\norm{T \int_Y (\beta - \beta_n)_z \dd{F}}_{\TC}\\
&\leq \norm{S^*} \norm{T}_{\TC}.
\end{align*}
Finally,  note that
\begin{align*}
\norm{T\int_Y(\beta - \beta_n)_z \dd{F}}_{\TC} &\leq \norm{\abs{T}^{1/2}}_{\HS} \norm{\int_Y(\beta - \beta_n)_z \dd{\cF} \abs{T}^{1/2} }_{\HS},
\end{align*}
as can be proven by identifying $\TC(\sK, \sH)^* \cong \fB(\sH, \sK)$ and using the polar decomposition of $T$.
Since $\beta_n \to \beta$ pointwise and $\beta - \beta_n$ is uniformly bounded above by $1$, we know the above converges to $0$ as $n \to \infty$. Thus, the integrand of \eqref{eq:sum_int_Fubini} converges pointwise to $0$ and is uniformly bounded above by $\norm{S^*}\norm{T}_\TC$; by the dominated convergence theorem we have
\[
	\lim_{n \to \infty} \abs{\Tr(R_n)} = 0.
\]
Therefore the left hand side of \eqref{eq:trace_decomposition} is the limit of the same expression with $\phi$ replaced by $\phi_n$.

We must now show that the right hand side of \eqref{eq:trace_decomposition} is the limit of the same expression with $\beta$ replaced by $\beta_n$. The idea is essentially the same. First observe that
\begin{align*}
	\lim_{n \to \infty} \abs{T}^{1/2} \int_Y (\beta_n)_z \dd{F} &= \lim_{n \to \infty} \int_Y (\beta_n)_z \dd{\cF} \abs{T}^{1/2}  \\
	&= \int_Y \beta_z \dd{\cF} \abs{T}^{1/2}  = \abs{T}^{1/2} \int_Y \beta_z \dd{F},
\end{align*}
where the convergence is with respect to the Hilbert-Schmidt norm. Using the polar decomposition of $T$, it follows immediately that
\[
	\lim_{n \to \infty} \Tr(S^* \int_X \alpha_z \dd{E} T \int_Y (\beta_n)_z \dd{F}) = \Tr(S^* \int_X \alpha_z \dd{E} T \int_Y \beta_z \dd{F}) 
\]
As functions of $z$, the expressions above are bounded above by the constant function with value $\norm{S^*}\norm{T}_{\TC}$, so the dominated convergence theorem implies that the limit above commutes with the integral over $Z$, which is the desired result.

We have proven that $\Lambda$ is a $\lambda$-system containing the $\pi$-system $\Sigma_X \times \Sigma_Y$, and therefore $\Lambda = \Sigma_{X \times Y}$ by the Dynkin $\pi$-$\lambda$ theorem. We conclude that \eqref{eq:trace_decomposition} holds whenever $\alpha$ is the characteristic function of a measurable rectangle and $\beta$ is any characteristic function. Now fixing $\beta$ to be an arbitrary characteristic function, a similar application of the Dynkin $\pi$-$\lambda$ theorem applied to $\alpha$ proves that \eqref{eq:trace_decomposition} holds whenever $\alpha$ is an arbitrary characteristic function. Thus, \eqref{eq:trace_decomposition} holds in generality.
\end{proof}

Given $\phi \in \fM$, we can use Theorems \ref{thm:TC_to_TC} and \ref{thm:doi_TC_mel} to extend the definition of $\int_{X \times Y} \phi \dd{\cG} T$ to the case where $T \in \fB(\sK, \sH)$. First we establish some notation.

\begin{definition}
Let $\widetilde \cG:\Sigma_{Y \times X} \to \fB(\HS(\sH, \sK))$ be the unique projection-valued measure such that
\[
	\widetilde \cG(\Delta \times \Gamma)R = F(\Delta)R E(\Gamma)
\]
for all $\Delta \in \Sigma_Y$, $\Gamma \in \Sigma_X$, and $R \in \HS(\sH, \sK)$. Let $\tau:Y \times X \to X \times Y$ be the map $\tau(y,x) = (x,y)$. Note that the formula
\[
	\cG_{S,T}(\Lambda) = \widetilde \cG_{T^*, S^*}(\tau^{-1}(\Lambda))
\]
is easily checked for all measurable rectangles $\Lambda$ and $S, T \in \HS(\sK, \sH)$, and therefore holds for all measurable sets $\Lambda \in \Sigma_{X \times Y}$. In particular, if $\Lambda \in \Sigma_{X \times Y}$ and $\cG(\Lambda) = 0$, then $\widetilde \cG(\tau^{-1}(\Lambda)) = 0$. 

Let $\widetilde \fM_{00}$, $\widetilde \fM_0$, and $\widetilde \fM$ be the spaces as defined in Definition \ref{def:function_space}, but constructed using $\widetilde \cG$ instead of $\cG$. Given $\phi:X \times Y \to \bbC$, let $\widetilde \phi = \phi \circ \tau$, so
\[
\widetilde \phi(y,x) = \phi(x,y)
\]
for all $(y,x) \in Y \times X$. We observe that $\phi \in \fM_{0}$ (resp.\ $\phi \in \fM_{00}$) with $(Z, \Sigma_Z, \nu, \alpha, \beta)$ as a decomposition (resp.\ a strict decomposition) if and only if $\widetilde \phi \in \widetilde \fM_0$ (resp.\ $\phi \in \widetilde \fM_{00}$) with $(Z, \Sigma_Z, \nu, \beta, \alpha)$ as a decomposition (resp.\ a strict decomposition). Thus, the map $\phi \mapsto \widetilde \phi$ defines a $*$-isomorphism $\fM_{00} \to \widetilde \fM_{00}$ and defines an isometric $*$-isomorphism $\fM_0 \to \widetilde \fM_{0}$ that induces a well-defined isometric $*$-isomorphism $\fM \to \widetilde \fM$.
\end{definition}

\begin{theorem}\label{thm:doi_bounded_T}
If $\phi \in \fM$ and $T \in \fB(\sK, \sH)$, then there exists a unique operator $\int_{X \times Y} \phi \dd{\cG}T \in \fB(\sK, \sH)$ such that
\begin{equation}\label{eq:doi_bounded_T_def}
	\Tr(R\int_{X \times Y} \phi \dd{\cG} T) = \Tr(T \int_{Y \times X} \widetilde \phi \dd{\widetilde \cG} R)
\end{equation}
holds for all $R \in \TC(\sH, \sK)$. Furthermore,
\begin{equation}\label{eq:bounded_T_norms}
	\norm{\int_{X \times Y} \phi \dd{\cG} T} \leq \norm{\phi}_\fM \norm{T}.
\end{equation}
For any decomposition $(Z, \Sigma_Z, \nu, \alpha, \beta)$ of $\phi$ and $S \in \TC(\sK, \sH)$, we have
\begin{equation}\label{eq:doi_mel_bounded_T}
\Tr(S^* \int_{X \times Y} \phi \dd{\cG} T) = \int_Z \Tr(S^* \int_X \alpha_z \dd{E} T \int_Y \beta_z \dd{F} ) \dd{\nu(z)}.
\end{equation}
If $T \in \HS(\sK, \sH)$, then this operator agrees with $\int_{X \times Y} \phi \dd{\cG} T$ as originally defined. 
\end{theorem}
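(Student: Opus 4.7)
The plan is to construct $\int_{X \times Y} \phi \dd{\cG} T$ by means of the duality $\fB(\sK, \sH) \cong \TC(\sH, \sK)^*$ reviewed in the introduction, where $A \in \fB(\sK,\sH)$ corresponds to the functional $R \mapsto \Tr(RA)$ on $\TC(\sH, \sK)$. First, since $\phi \mapsto \widetilde\phi$ is the isometric $*$-isomorphism $\fM \to \widetilde\fM$, we have $\widetilde\phi \in \widetilde\fM$ with $\norm{\widetilde\phi}_{\widetilde\fM} = \norm{\phi}_\fM$. Applying Theorem \ref{thm:TC_to_TC} to the tilded setup, $\int_{Y \times X} \widetilde\phi \dd{\widetilde\cG}$ maps $\TC(\sH, \sK)$ into itself with operator norm at most $\norm{\phi}_\fM$. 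The linear functional
\[
\Phi_T: \TC(\sH, \sK) \to \bbC, \qquad \Phi_T(R) \defeq \Tr\qty(T \int_{Y \times X} \widetilde\phi \dd{\widetilde\cG} R),
\]
is therefore bounded with $\norm{\Phi_T} \leq \norm{T}\norm{\phi}_\fM$, and the duality produces a unique $A \in \fB(\sK, \sH)$ with $\norm{A} \leq \norm{T}\norm{\phi}_\fM$ satisfying $\Tr(RA) = \Phi_T(R)$ for all $R \in \TC(\sH, \sK)$, i.e., \eqref{eq:doi_bounded_T_def} and \eqref{eq:bounded_T_norms}. I set $\int_{X \times Y} \phi \dd{\cG} T \defeq A$.

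To obtain the matrix-element formula \eqref{eq:doi_mel_bounded_T}, note that a decomposition $(Z, \Sigma_Z, \nu, \alpha, \beta)$ of $\phi$ yields a decomposition $(Z, \Sigma_Z, \nu, \beta, \alpha)$ of $\widetilde\phi$. Given $S \in \TC(\sK, \sH)$, I substitute $R = S^*$ in the defining relation and then apply Theorem \ref{thm:doi_TC_mel} to the tilded setup (with trace-class operator $S^*$ and bounded operator $T^*$) to get
\[
\Tr\qty(T \int_{Y \times X} \widetilde\phi \dd{\widetilde\cG} S^*) = \int_Z \Tr\qty(T \int_Y \beta_z \dd{F} \cdot S^* \cdot \int_X \alpha_z \dd{E}) \dd{\nu(z)}.
\]
Cyclicity of the trace rewrites each integrand as $\Tr(S^* \int_X \alpha_z \dd{E} \cdot T \cdot \int_Y \beta_z \dd{F})$, yielding \eqref{eq:doi_mel_bounded_T}.

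Finally, to verify consistency when $T \in \HS(\sK, \sH)$, let $A_\HS$ denote $\int \phi \dd{\cG} T$ as defined through the functional calculus of $\cG$ on $\HS(\sK, \sH)$; by uniqueness it suffices to show $A_\HS$ satisfies \eqref{eq:doi_bounded_T_def}. For any $R \in \TC(\sH, \sK) \subset \HS(\sH, \sK)$, Theorem \ref{thm:HS_mel_decomposition} with $S = R^*$ gives
\[
\Tr(R A_\HS) = \int_Z \Tr\qty(R \int_X \alpha_z \dd{E} \cdot T \cdot \int_Y \beta_z \dd{F}) \dd{\nu(z)},
\]
while the tilded version of that theorem, applied with $\tilde S = T^* \in \HS(\sH, \sK)$ and $\tilde T = R$, gives
\[
\Tr\qty(T \int_{Y \times X} \widetilde\phi \dd{\widetilde\cG} R) = \int_Z \Tr\qty(T \int_Y \beta_z \dd{F} \cdot R \cdot \int_X \alpha_z \dd{E}) \dd{\nu(z)}.
\]
Cyclicity of the trace identifies the two integrands, so $A_\HS$ satisfies the defining relation and therefore coincides with the new $A$. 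The main obstacle is essentially bookkeeping: keeping straight the symmetric roles of $\phi$ versus $\widetilde\phi$, $\cG$ versus $\widetilde\cG$, and $\alpha$ versus $\beta$, together with repeated use of cyclicity of trace to reconcile the two sides; beyond this, the proof is a direct application of duality combined with the previously established Theorems \ref{thm:TC_to_TC}, \ref{thm:HS_mel_decomposition}, and \ref{thm:doi_TC_mel}.
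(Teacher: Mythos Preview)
Your proof is correct and follows essentially the same approach as the paper: construct the operator via the duality $\fB(\sK,\sH)\cong\TC(\sH,\sK)^*$ applied to $R\mapsto\Tr(T\int\widetilde\phi\,\dd{\widetilde\cG}R)$, derive \eqref{eq:doi_mel_bounded_T} by applying Theorem~\ref{thm:doi_TC_mel} in the tilded setup with $R=S^*$, and check consistency with the Hilbert--Schmidt definition via Theorem~\ref{thm:HS_mel_decomposition}. The only cosmetic differences are that you make the uses of trace cyclicity explicit (the paper leaves them implicit) and, for the consistency step, you verify the defining relation \eqref{eq:doi_bounded_T_def} directly rather than comparing \eqref{eq:doi_mel_bounded_T} to Theorem~\ref{thm:HS_mel_decomposition}; both routes are equivalent.
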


\begin{proof}
By Theorem \ref{thm:TC_to_TC}, the linear map
\[
	\TC(\sH, \sK) \to \bbC, \quad R \mapsto \Tr\qty(T \int_{Y \times X} \widetilde \phi \dd{\widetilde \cG} R)
\]
is bounded.  From the isometric isomorphism $\TC(\sH, \sK)^* \cong \fB(\sK, \sH)$, we see that there exists a unique operator $\int \phi \dd{\cG}T \in \fB(\sK, \sH)$ such that \eqref{eq:doi_bounded_T_def} holds 
for all $R \in \TC(\sH, \sK)$. Furthermore, this operator satisfies
\begin{align*}
	\norm{\int_{X \times Y} \phi \dd{\cG}T} = \underset{\norm{R}_\TC \leq 1}{\sup_{R \in \TC(\sH, \sK)}} \abs{\Tr\qty(T \int_{Y \times X} \widetilde \phi \dd{\widetilde \cG} R)}.
\end{align*}
Using Theorem \ref{thm:TC_to_TC} and the fact that $\norm*{\widetilde \phi}_{\widetilde \fM} = \norm{\phi}_\fM$, we obtain
\[
	\norm{\int_{X \times Y} \phi \dd{\cG}T} \leq \norm{T}\norm{\phi}_\fM.
\]

If $(Z, \Sigma_Z, \nu, \alpha, \beta)$ is any decomposition of $\phi$, then $(Z, \Sigma_Z, \nu, \beta, \alpha)$ is a decomposition of $\widetilde \phi$, so setting $R = S^*$ in \eqref{eq:doi_bounded_T_def} and applying Theorem \ref{thm:doi_TC_mel} yields \eqref{eq:doi_mel_bounded_T} 
for all $S \in \TC(\sK, \sH)$. From \eqref{eq:doi_mel_bounded_T} and Theorem \ref{thm:HS_mel_decomposition} we see that this definition of $\int \phi \dd{\cG}T$ agrees with the original definition when $T \in \HS(\sK, \sH)$. 
\end{proof}

Note that in the special case where $S = \ketbra{v}{w}$ for some $v \in \sH$ and $w \in \sK$, \eqref{eq:doi_mel_bounded_T} reduces to:
\begin{equation}\label{eq:doi_simple_mel_bounded_T}
	\ev{v, \int \phi \dd{\cG}T\,w} = \int_Z \ev{v, \int_X \alpha_z \dd{E} T \int_Y \beta_z \dd{F} w} \dd{\nu(z)}.
\end{equation}
The operator $\int \phi \dd{\cG} T$ in Theorem \ref{thm:doi_bounded_T} is our final construction of a double operator integral. It may be denoted as
\[
	\iint \phi(x, y) \dd{E(x)} T \dd{F(y)} \defeq \int \phi \dd{\cG} T 
\]
when convenient. It is also clear from \eqref{eq:doi_simple_mel_bounded_T} that if $\phi(x,y) = \alpha(x)\beta(y)$ for some bounded measurable functions $\alpha$ and $\beta$, then we may take $Z$ to be a one-point set with measure $\nu(Z) = 1$, whence
\[
	\iint \alpha(x)\beta(y) \dd{E(x)}T \dd{F(y)} = \int \alpha \dd{E} T \int \beta \dd{F}.
\]

\begin{proposition}\label{prop:doi_continuity_B(B(K,H))}
Given $\phi \in \fM$, the map 
\[
	\int_{X \times Y} \phi \dd{\cG} :\fB(\sK, \sH) \to \fB(\sK, \sH)
\] is bounded and linear. Furthermore, the map
\begin{equation}\label{eq:phi_to_integral_map}
	\fM \to \fB(\fB(\sK, \sH)), \quad \phi \mapsto \int_{X \times Y} \phi \dd{\cG}
\end{equation}
is an injective unital algebra homomorphism and
\begin{equation}\label{eq:T_bounded_integral_norm}
	\norm{\int_{X \times Y} \phi \dd{\cG}} \leq \norm{\phi}_\fM
\end{equation}
for all $\phi \in \fM$.
\end{proposition}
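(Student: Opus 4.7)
My plan is to prove the four claims of the proposition in order: boundedness/linearity of $\int\phi\dd{\cG}$, linearity of $\phi \mapsto \int \phi \dd\cG$, the unital multiplicative property, and injectivity. The first two and unitality are formal; multiplicativity is the real work.

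For boundedness and linearity of $\int\phi\dd\cG : \fB(\sK,\sH) \to \fB(\sK,\sH)$, I would argue directly from the defining identity \eqref{eq:doi_bounded_T_def}. The right-hand side is linear in $T$, and since the isomorphism $\fB(\sK,\sH) \cong \TC(\sH,\sK)^*$ is linear, so is $T \mapsto \int \phi \dd\cG T$. The norm bound \eqref{eq:T_bounded_integral_norm} is just a restatement of \eqref{eq:bounded_T_norms}. Linearity of the map $\phi \mapsto \int\phi\dd\cG$ follows by the same argument: the operation $\phi \mapsto \widetilde\phi$ is linear, PVM integration $\int \cdot \dd{\widetilde\cG}$ is linear on $L^\infty(\widetilde\cG)$, so the RHS of \eqref{eq:doi_bounded_T_def} depends linearly on $\phi$, which forces the LHS to as well. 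To see the map is unital, take the trivial strict decomposition of the constant function $\mathbf{1}$ with $Z$ a one-point space of total mass $1$ and $\alpha = \beta = 1$; then \eqref{eq:doi_mel_bounded_T} yields $\Tr(S^*\int \mathbf{1}\dd\cG T) = \Tr(S^*T)$ for every $S \in \TC(\sK,\sH)$ and $T \in \fB(\sK,\sH)$, so $\int\mathbf{1}\dd\cG T = T$.

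The main obstacle is multiplicativity: showing that $\int(\phi_1\phi_2)\dd\cG = \int\phi_1 \dd\cG \circ \int\phi_2\dd\cG$ as operators on $\fB(\sK,\sH)$. Fix decompositions $(Z_i, \Sigma_{Z_i}, \nu_i, \alpha_i, \beta_i)$ of $\phi_i$ for $i=1,2$. As in the proof of Theorem \ref{thm:integral_projective_tensor_product}, the product $\phi_1\phi_2$ has a decomposition on $(Z_1 \times Z_2, \nu_1\times\nu_2)$ with separable-variable functions $\alpha(x,z_1,z_2) = \alpha_1(x,z_1)\alpha_2(x,z_2)$ and $\beta(y,z_1,z_2)=\beta_1(y,z_1)\beta_2(y,z_2)$. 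Since bounded operators on $\fB(\sK,\sH)$ are determined by their action paired with trace-class operators (using $\fB(\sK,\sH) \cong \TC(\sH,\sK)^*$), it suffices to check that for every $S \in \TC(\sK,\sH)$ and $T \in \fB(\sK,\sH)$,
\begin{equation}\label{eq:mult_WTS}
\Tr\bigl(S^*\int(\phi_1\phi_2)\dd\cG\, T\bigr) = \Tr\bigl(S^*\int\phi_1\dd\cG\bigl(\int\phi_2\dd\cG\, T\bigr)\bigr).
\end{equation}
For the LHS I would apply \eqref{eq:doi_mel_bounded_T} with the product decomposition and Fubini's theorem to get an integral over $Z_1 \times Z_2$ of $\Tr(S^*\int\alpha_{1,z_1}\alpha_{2,z_2}\dd E\, T \int\beta_{1,z_1}\beta_{2,z_2}\dd F)$. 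For the RHS I would apply \eqref{eq:doi_mel_bounded_T} first with $\phi_1$ (treating $U \defeq \int\phi_2\dd\cG\, T \in \fB(\sK,\sH)$ as the bounded operator) to obtain $\int_{Z_1} \Tr(S^*\int\alpha_{1,z_1}\dd E \, U\int\beta_{1,z_1}\dd F)\dd\nu_1$. The key move is then cyclicity of the trace: since $S^*$ is trace-class, $\tilde S_{z_1}^* \defeq \int\beta_{1,z_1}\dd F\cdot S^*\cdot \int\alpha_{1,z_1}\dd E$ is trace-class, and
\[
\Tr\bigl(S^*\int\alpha_{1,z_1}\dd E\cdot U\cdot\int\beta_{1,z_1}\dd F\bigr) = \Tr(\tilde S_{z_1}^* U).
\]
Now apply \eqref{eq:doi_mel_bounded_T} to $\int\phi_2\dd\cG\, T$ tested against $\tilde S_{z_1}$, use cyclicity once more, and invoke that $\int\cdot\dd E$ and $\int\cdot\dd F$ are $*$-homomorphisms on bounded functions (so $\int\alpha_{1,z_1}\dd E \int\alpha_{2,z_2}\dd E = \int\alpha_{1,z_1}\alpha_{2,z_2}\dd E$, and likewise for $F$). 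Finally Fubini reassembles the iterated integral into a single integral over $Z_1 \times Z_2$, matching the LHS of \eqref{eq:mult_WTS}. The integrability needed to apply Fubini is ensured by the uniform trace-norm bound $\norm{S^*\int\alpha\dd E\, T\int\beta\dd F}_{\TC} \le \norm{S^*}_{\TC}\norm\alpha_\infty\norm T\,\norm\beta_\infty$ together with finiteness of $\nu_1\times\nu_2$.

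Finally, for injectivity, suppose $\phi \in \fM$ with $\int\phi\dd\cG = 0$ as an operator on $\fB(\sK,\sH)$. Restricting to the subspace $\HS(\sK,\sH)$, the operator $\int\phi\dd\cG$ coincides with the conventional PVM integral of $\phi$ against $\cG$ (this is how the double operator integral was originally defined before extension to $\fB(\sK,\sH)$, and Theorem \ref{thm:doi_bounded_T} confirms the extension agrees on $\HS$). The standard PVM functional calculus $L^\infty(\cG) \to \fB(\HS(\sK,\sH))$ is an isometric $*$-homomorphism, hence injective, so $\phi = 0$ in $L^\infty(\cG)$, i.e.\ $\phi = 0$ in $\fM$.
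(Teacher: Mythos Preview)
Your proof is correct. The linearity, boundedness, unitality, and injectivity arguments are essentially the same as the paper's, but your proof of multiplicativity takes a genuinely different route.

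For multiplicativity, you work on the $\cG$-side: you pick decompositions of $\phi_1$ and $\phi_2$, expand both sides of \eqref{eq:mult_WTS} via the integral formula \eqref{eq:doi_mel_bounded_T}, and then use trace-cyclicity together with the multiplicativity of the scalar functional calculi $\int\cdot\dd E$ and $\int\cdot\dd F$, reassembling with Fubini. The paper instead works on the $\widetilde\cG$-side: it never chooses a decomposition, but applies the defining relation \eqref{eq:doi_bounded_T_def} three times and invokes the already-known multiplicativity of the PVM integral $\int\cdot\dd{\widetilde\cG}$ on $\HS(\sH,\sK)$ (restricted to trace-class via Theorem~\ref{thm:TC_to_TC}), together with one use of trace-cyclicity. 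The paper's chain is
\[
\Tr\bigl(R\textstyle\int\phi_1\phi_2\dd\cG\,T\bigr)
= \Tr\bigl(T\textstyle\int\widetilde\phi_1\dd{\widetilde\cG}\int\widetilde\phi_2\dd{\widetilde\cG}\,R\bigr)
= \Tr\bigl(\textstyle\int\widetilde\phi_2\dd{\widetilde\cG}\,R\cdot\int\phi_1\dd\cG\,T\bigr)
= \Tr\bigl(R\textstyle\int\phi_2\dd\cG\int\phi_1\dd\cG\,T\bigr).
\]
The paper's approach is shorter and avoids Fubini and the bookkeeping of product decompositions; your approach is more explicit and shows directly, via \eqref{eq:doi_mel_bounded_T}, how the product structure of $\fM$ interacts with the $E$- and $F$-functional calculi. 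Both are valid.
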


\begin{proof}
The linearity of $\int \phi \dd{\cG}$ follows from \eqref{eq:doi_bounded_T_def} and the uniquenes clause of Theorem \ref{thm:doi_bounded_T}.
Boundedness of $\int \phi \dd{\cG}$ is given in \eqref{eq:bounded_T_norms}. Thus, \eqref{eq:phi_to_integral_map} is well-defined. Linearity of \eqref{eq:phi_to_integral_map} again follows from \eqref{eq:doi_bounded_T_def} and the uniqueness clause of Theorem \ref{thm:doi_bounded_T}, and the fact that $\phi \mapsto \widetilde \phi$ is linear. The estimate \eqref{eq:T_bounded_integral_norm} follows immediately from \eqref{eq:bounded_T_norms}.

If $\phi(x,y) = 1$ for all $(x,y) \in X \times Y$, then $\int \widetilde \phi \dd{\widetilde \cG}$ is the identity operator on $\HS(\sK, \sH)$, hence $\int \phi \dd{\cG}$ is the identity operator by uniqueness in Theorem \ref{thm:doi_bounded_T} and the cyclic property of the trace.  

If $\phi \in \fM$ and $\int \phi \dd{\cG} = 0$, then $\int \phi \dd{\cG} T = 0$ for all $T \in \HS(\sK, \sH)$, and this implies that $\phi = 0$ since integration against $\cG$ is injective as a map $L^\infty(\cG) \to \cB(\HS(\sK, \sH))$.

If $\phi_1, \phi_2 \in \fM$, then for all $T \in \fB(\sK, \sH)$ and $R \in \TC(\sH, \sK)$, we have
\begin{align*}
\Tr\qty(R\int \phi_1 \phi_2 \dd{\cG} T) &= \Tr\qty(T \int \widetilde{\phi_1} \widetilde{\phi_2} \dd{\widetilde \cG} R)\\
&= \Tr\qty(T \int \widetilde \phi_1 \dd{\widetilde \cG}\qty(\int \widetilde \phi_2 \dd{\widetilde \cG} R))\\
&= \Tr\qty(\int \widetilde \phi_2 \dd{\widetilde \cG} R \int \phi_1 \dd{\cG} T)\\
&= \Tr\qty(\int \phi_1 \dd{\cG} T \int {\widetilde \phi_2} \dd{\widetilde \cG} R)\\
&= \Tr\qty(R \int \phi_2 \dd{\cG}\qty(\int \phi_1 \dd{\cG} T)).
\end{align*}
Now from uniqueness in Theorem \ref{thm:doi_bounded_T} it follows that \eqref{eq:phi_to_integral_map} respects multiplication.
\end{proof}

\begin{proposition}
If $\phi \in \fM$ and $T \in \fB(\sK, \sH)$, then
\[
	\qty(\int_{X \times Y} \phi \dd{\cG} T)^* = \int_{Y \times X} \widetilde{\phi}^* \dd{\widetilde \cG} T^*.
\]
\end{proposition}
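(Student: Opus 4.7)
The plan is to prove the identity by comparing matrix elements. Set $A \defeq \int_{X \times Y} \phi \dd{\cG} T$ and $B \defeq \int_{Y \times X} \widetilde{\phi}^* \dd{\widetilde\cG} T^*$; it suffices to verify $\ev{w, A^* v} = \ev{w, B v}$ for every $v \in \sH$ and $w \in \sK$. Using the sesquilinearity identity $\ev{w, A^* v} = \ev{v, Aw}^*$, this reduces to applying the matrix-element formula \eqref{eq:doi_simple_mel_bounded_T} to $A$, conjugating, and recognizing the result as the output of the same formula for $B$.

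First I fix a decomposition $(Z, \Sigma_Z, \nu, \alpha, \beta)$ of $\phi$ and verify that $(Z, \Sigma_Z, \nu, \beta^*, \alpha^*)$ is a decomposition of $\widetilde{\phi}^*$: the integral identity comes from conjugating \eqref{eq:decomposition_def} termwise, and the almost-everywhere qualifier transports from $\cG$ on $X \times Y$ to $\widetilde\cG$ on $Y \times X$ via the implication $\cG(\Lambda) = 0 \Rightarrow \widetilde\cG(\tau^{-1}(\Lambda)) = 0$ noted right before the definition of $\widetilde \fM$.

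Applying \eqref{eq:doi_simple_mel_bounded_T} to $A$ gives
\[
\ev{v, Aw} = \int_Z \ev{v, \int_X \alpha_z \dd{E}\, T \int_Y \beta_z \dd{F}\, w} \dd{\nu(z)}.
\]
Conjugating both sides, pulling the conjugation inside the outer integral (justified by $\bbR$-linearity since $\alpha$, $\beta$ are bounded and $\nu$ is finite), and invoking the elementary identities $\qty(\int f \dd{E})^* = \int f^* \dd{E}$ and $(CD)^* = D^* C^*$ yields
\[
\ev{w, A^* v} = \int_Z \ev{w, \int_Y \beta_z^* \dd{F}\, T^* \int_X \alpha_z^* \dd{E}\, v} \dd{\nu(z)}.
\]
The right-hand side is precisely what \eqref{eq:doi_simple_mel_bounded_T} produces for $\ev{w, Bv}$ using the decomposition of $\widetilde{\phi}^*$ found above (with the roles of $X$ and $Y$, $\sH$ and $\sK$, and $E$ and $F$ interchanged). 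Since $v$ and $w$ are arbitrary, $A^* = B$. The argument is essentially bookkeeping once \eqref{eq:doi_simple_mel_bounded_T} is in hand; the only small items to verify are the validity of the conjugated decomposition modulo $\widetilde\cG$-null sets (handled by the explicit null-set correspondence under $\tau$) and the interchange of complex conjugation with the outer integral.
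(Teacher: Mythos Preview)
Your proof is correct and follows essentially the same approach as the paper's: both compute matrix elements via \eqref{eq:doi_simple_mel_bounded_T}, conjugate, and recognize the result as the corresponding matrix element for $\int \widetilde\phi^* \dd{\widetilde\cG} T^*$ using the decomposition $(Z,\Sigma_Z,\nu,\beta^*,\alpha^*)$. Your explicit verification that this decomposition is valid modulo $\widetilde\cG$-null sets is a detail the paper leaves implicit (it having earlier noted that $\phi \mapsto \widetilde\phi$ induces an isometric $*$-isomorphism $\fM \to \widetilde\fM$), but otherwise the arguments are the same.
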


\begin{proof}
Let $v \in \sH$ and $w \in \sK$ and let $(Z, \Sigma_Z, \nu, \alpha, \beta)$ be a decomposition of $\phi$. Then
\begin{align*}
\ev{w, \qty(\int \phi \dd{\cG} T)^* v} &= \ev{v, \qty(\int \phi \dd{\cG} T)  w}^*\\
&= \int_Z \ev{v, \int \alpha_z \dd{E} T \int \beta_z \dd{F} w}^* \dd{\nu(z)}\\
&= \int_Z \ev{w, \int \beta_z^* \dd{F} T^* \int \alpha_z^* \dd{E} v} \dd{\nu(z)}\\
&= \ev{w, \qty(\int \widetilde \phi^* \dd{\widetilde \cG} T^*) v},
\end{align*}
as desired.
\end{proof}

Given $\phi \in \fM$, it is obvious from \eqref{eq:doi_bounded_T_def} that $\int \phi \dd{\cG}:\fB(\sK, \sH) \to \fB(\sK, \sH)$ is continuous with respect to the ultraweak topologies on the domain and codomain. We now show that it is sequentially continuous with respect to the strong operator topologies. This generalizes \cite[Prop.~4.9.i]{AzamovCareyDoddsSukochev}.

\begin{proposition}\label{prop:SOT_sequential_continuity}
Let $\phi \in \fM$. If $(T_n)_{n \in \bbN}$ is a sequence of operators in $\fB(\sK, \sH)$  converging to an operator $T \in \fB(\sK, \sH)$ in the strong operator topology, then $\int \phi \dd{\cG} T_n \to \int \phi \dd{\cG} T$ in the strong operator topology.
\end{proposition}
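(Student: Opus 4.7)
The plan is to reduce to a pointwise dominated-convergence argument. Setting $S_n := T_n - T$, by linearity of $\int \phi \dd{\cG}$ (Proposition \ref{prop:doi_continuity_B(B(K,H))}) it suffices to show that $(\int \phi \dd{\cG} S_n) w \to 0$ in norm for every $w \in \sK$, where $S_n \to 0$ in the strong operator topology and $M := \sup_n \|S_n\| < \infty$ by the uniform boundedness principle. Fix $w \in \sK$ and a decomposition $(Z, \Sigma_Z, \nu, \alpha, \beta)$ of $\phi$, and set $u_n := (\int \phi \dd{\cG} S_n) w \in \sH$ and
\[
r_n(z) := \int_X \alpha_z \dd{E} \cdot S_n \cdot \int_Y \beta_z \dd{F} \cdot w \in \sH.
\]
By \eqref{eq:doi_simple_mel_bounded_T}, $\ev{v, u_n} = \int_Z \ev{v, r_n(z)} \dd{\nu(z)}$ for every $v \in \sH$, so Cauchy-Schwarz yields
\[
\|u_n\| = \sup_{\|v\| \leq 1} |\ev{v, u_n}| \leq \int_Z \|r_n(z)\| \dd{\nu(z)}
\]
provided the integrand is $\nu$-measurable.

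For every $z \in Z$, the vector $\int_Y \beta_z \dd{F}\, w \in \sK$ is fixed, so $S_n (\int_Y \beta_z \dd{F}\, w) \to 0$ in norm by SOT convergence, and postcomposing with the bounded operator $\int_X \alpha_z \dd{E}$ preserves norm convergence; hence $\|r_n(z)\| \to 0$ pointwise in $z$. Moreover, $\|r_n(z)\| \leq \|\alpha\|_\infty M \|\beta\|_\infty \|w\|$ uniformly in $z$ and $n$, which is $\nu$-integrable since $\nu$ is finite. The dominated convergence theorem then yields $\int_Z \|r_n(z)\| \dd{\nu(z)} \to 0$, and hence $\|u_n\| \to 0$, as required.

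The principal obstacle is verifying $\nu$-measurability of $z \mapsto \|r_n(z)\|$ in the potentially nonseparable setting: the vector-valued map $r_n(\cdot)$ is weakly measurable but need not be essentially separably valued, so the Pettis measurability theorem does not apply directly. I would handle this by writing
\[
\|r_n(z)\|^2 = \ev{w, \int_Y \beta_z^* \dd{F}\, S_n^* \int_X |\alpha_z|^2 \dd{E}\, S_n \int_Y \beta_z \dd{F}\, w}
\]
and approximating the scalar functions $\beta$ and $|\alpha|^2$ uniformly by simple measurable functions on $Y \times Z$ and $X \times Z$, respectively. Each simple approximation reduces the right-hand side to a finite linear combination of terms of the form $z \mapsto \ev{w, F(B_z)\, S_n^* E(A_z)\, S_n\, F(D_z) w}$, where $A \in \Sigma_{X \times Z}$, $B, D \in \Sigma_{Y \times Z}$, and $A_z, B_z, D_z$ denote the $z$-sections. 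A Dynkin $\pi$-$\lambda$ argument analogous to the one in Proposition \ref{prop:ultraweakly_meas} establishes measurability in $z$ for each such term: the class of measurable sets $A \in \Sigma_{X \times Z}$ for which the expression is measurable (with $B, D$ fixed as measurable rectangles) trivially contains all measurable rectangles, is closed under complements by linearity, and is closed under countable disjoint unions by the norm convergence of sums of orthogonal spectral projections applied to a fixed vector. Iterating the Dynkin argument in $B$ and then $D$ and passing to the uniform limit of the approximations yields measurability of $\|r_n(\cdot)\|^2$, and hence of $\|r_n(\cdot)\|$.
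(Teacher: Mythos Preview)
Your argument is correct, including the iterated Dynkin $\pi$-$\lambda$ verification of the measurability of $z \mapsto \|r_n(z)\|^2$. However, the paper's proof bypasses this extra work with a simple trick worth noting. Instead of bounding $\|u_n\| \leq \int_Z \|r_n(z)\|\,\dd\nu(z)$ (which forces you to confront measurability of $\|r_n(\cdot)\|$), the paper writes
\[
\|u_n\|^2 = \ev{u_n, u_n} = \int_Z \ev{u_n, r_n(z)}\,\dd\nu(z),
\]
obtained by applying \eqref{eq:doi_simple_mel_bounded_T} with the single vector $v = u_n$. The integrand $z \mapsto \ev{u_n, r_n(z)}$ is measurable for free by the same result, converges pointwise to zero (since $\|u_n\|$ is uniformly bounded and $\|r_n(z)\| \to 0$ as you observed), and is dominated by a constant; dominated convergence finishes immediately. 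Your route is conceptually more direct (it gives the natural $L^1$ bound), but it costs you a measurability lemma that essentially duplicates the content of Proposition~\ref{prop:ultraweakly_meas}. The paper's trick of pairing $u_n$ with itself reuses the already-established weak measurability and avoids all of that.
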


\begin{proof}
Fix $w \in \sK$ and let
\[
w_n = \qty[\int \phi \dd{\cG}(T_n - T)]w.
\]
We want to show that $w_n \to 0$. First note that since $(T_n - T)_{n \in \bbN}$ is pointwise bounded, the uniform boundedness principle yields $M \geq 0$ such that $\norm{T_n - T} \leq M$ for all $n \in \bbN$. Thus,
\[
	\norm{w_n} \leq \norm{\phi}_\fM M \norm{w}
\]
for all $n \in \bbN$.

Let $(Z, \Sigma_Z, \nu, \alpha, \beta)$ be a decomposition of $\phi$. Observe that
\begin{align*}
\norm{w_n}^2 &= \int_Z \ev{w_n, \int_X \alpha_z \dd{E} (T_n - T) \int_Y \beta_z \dd{F} w} \dd{\nu(z)}.
\end{align*}
The integrand is bounded with bound:
\begin{align*}
	\abs{\tn{integrand}} &\leq \norm{\phi}_\fM M \norm{w} \norm{\alpha}_{\B(X \times Z)} \norm{(T_n - T)\int \beta_z \dd{F} w}\\
	&\leq \norm{\phi}_\fM M^2 \norm{w}^2 \norm{\alpha}_{\B(X \times Z)} \norm{\beta}_{\B(Y \times Z)}.
\end{align*}
From the first line and the strong convergence $T_n \to T$ we see that the integrand converges pointwise to zero as $n \to \infty$. From the second line we see that the integrand is uniformly bounded by a constant, which is integrable since $\nu$ is finite. The dominated convergence theorem now implies that $w_n \to 0$.
\end{proof}

\section{The Daletskii-Krein Formula}\label{sec:DK}

Throughout this section we continue to let $\sH$ be a Hilbert space. We will consider the case where $X = Y = \bbR$, equipped with the Borel $\sigma$-algebra. For any $t \in \bbR$, define $\phi_t :\bbR \times \bbR  \to \bbC$ by 
\[
	\phi_t(x,y) = \begin{cases} \frac{e^{it x} - e^{it y}}{x - y} &: x \neq y \\ it e^{it x} &: x = y \end{cases}
\]
Then observe that we have a strict decomposition:
\[
	\phi_t(x,y) = \int_0^1 it e^{it x r} e^{it y(1 - r)} \dd{r}
\]
so $\phi_t \in \fM_{00}$. For any two projection-valued measures on $\bbR$, it is clear from the definition of the norm on $\fM$ that
\[
	\norm{\phi_t}_\fM \leq \abs{t}
\]
for all $t \in \bbR$.

\begin{lemma}\label{lem:exponential_difference}
Let $A$ be a densely defined self-adjoint operator on $\sH$, let $\Phi \in \fB(\sH)$ be self-adjoint, and let $B = A + \Phi$. Let $E$ and $F$ be the spectral measures associated to $A$ and $B$, respectively. Then
\begin{equation}\label{eq:exponential_difference}
e^{itB} - e^{itA} = \iint \phi_t(x,y) \dd{E(x)} \Phi \dd{F(y)}
\end{equation} 
for all $t \in \bbR$. In particular,
\[
	\norm{e^{itB} - e^{itA}} \leq \abs{t}\norm{\Phi}.
\]
\end{lemma}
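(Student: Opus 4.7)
The plan is to combine Duhamel's principle with the strict decomposition of $\phi_t$ displayed immediately before the lemma. First, via the change of variables $s = tr$ (assume $t > 0$; the case $t < 0$ is analogous and $t = 0$ is trivial), the decomposition
\[
	\phi_t(x,y) = \int_0^1 ite^{itxr}e^{ity(1-r)}\dd{r}
\]
becomes
\[
	\phi_t(x,y) = \int_0^t ie^{isx}e^{i(t-s)y}\dd{s}.
\]
This provides a strict decomposition of $\phi_t$ in the sense of Definition \ref{def:function_space}, with $(Z,\Sigma_Z,\nu) = ([0,t],\text{Borel},\text{Lebesgue})$, $\alpha(x,s) = ie^{isx}$, and $\beta(y,s) = e^{i(t-s)y}$.

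Next I would establish the Duhamel identity
\[
	(e^{itB} - e^{itA})\psi = i\int_0^t e^{isA}\Phi e^{i(t-s)B}\psi\dd{s} \qquad (\psi \in \sH),
\]
where the integral is a Bochner integral in $\sH$. For $\psi \in \mathrm{Dom}(A) = \mathrm{Dom}(B)$ (the domains coincide because $\Phi$ is bounded), set $g(s) \defeq e^{isA}e^{i(t-s)B}\psi$. Using that $e^{i(t-s)B}\psi \in \mathrm{Dom}(B) = \mathrm{Dom}(A)$ and applying Stone's theorem to $A$ and $B$ separately, an $\epsilon/3$ argument gives
\[
	g'(s) = ie^{isA}Ae^{i(t-s)B}\psi - ie^{isA}Be^{i(t-s)B}\psi = -ie^{isA}\Phi e^{i(t-s)B}\psi.
\]
Since $g$ is then $C^1$, the fundamental theorem of calculus applied to $g(t) - g(0) = e^{itA}\psi - e^{itB}\psi$ yields the Duhamel identity on the dense subspace $\mathrm{Dom}(A)$. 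The right-hand side defines a bounded operator on $\sH$ of norm at most $|t|\norm{\Phi}$ by the standard Bochner integral estimate, so the identity extends by continuity to all of $\sH$.

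Finally, I would apply equation \eqref{eq:doi_simple_mel_bounded_T} from Theorem \ref{thm:doi_bounded_T} using the decomposition from the first step. The functional calculus gives $\int e^{isx}\dd{E(x)} = e^{isA}$ and $\int e^{i(t-s)y}\dd{F(y)} = e^{i(t-s)B}$, so for any $v,w \in \sH$,
\[
	\ev{v, \iint \phi_t(x,y)\dd{E(x)}\Phi\dd{F(y)}w} = \int_0^t \ev{v, ie^{isA}\Phi e^{i(t-s)B}w}\dd{s} = \ev{v,(e^{itB}-e^{itA})w},
\]
where the last equality uses the Duhamel identity together with continuity of $\ev{v,\cdot}$ under the Bochner integral. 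Since $v,w$ are arbitrary, \eqref{eq:exponential_difference} follows. The norm bound is then immediate from \eqref{eq:T_bounded_integral_norm} in Proposition \ref{prop:doi_continuity_B(B(K,H))} combined with the noted inequality $\norm{\phi_t}_\fM \leq |t|$.

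The main technical obstacle is the differentiation of $g$ in the Duhamel step, since $A$ is only densely defined and neither $e^{isA}$ nor $e^{i(t-s)B}$ is norm-continuous in $s$. The crucial input is $\mathrm{Dom}(A) = \mathrm{Dom}(B)$ (a consequence of $\Phi$ being bounded), which ensures that $e^{i(t-s)B}\psi$ stays in $\mathrm{Dom}(A)$, so that Stone's theorem legitimately applies to both unitary groups on the same vector and $B-A = \Phi$ can be invoked without domain issues.
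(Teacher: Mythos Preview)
Your proof is correct and follows essentially the same approach as the paper: both arguments combine the strict decomposition of $\phi_t$ with the matrix-element formula \eqref{eq:doi_simple_mel_bounded_T} and reduce to the Duhamel identity, derived by differentiating $s \mapsto e^{isA}e^{i(t-s)B}$ on the common domain $\mathrm{Dom}(A) = \mathrm{Dom}(B)$. The only cosmetic differences are that the paper keeps the $r \in [0,1]$ parametrization (rather than substituting $s = tr$) and carries out the Duhamel computation weakly, differentiating the scalar function $r \mapsto \langle e^{-itrA}v, e^{it(1-r)B}w\rangle$ for $v,w \in \sD$, instead of the vector-valued $g(s)$ you use.
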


\begin{proof}
Let $C$ be the operator on the right hand side of \eqref{eq:exponential_difference}. Let $\sD$ be the domain of $A$ and note that this is also the domain of $B$.
For any $v, w \in \sD$, we have
\begin{align*}
\ev{v, C w} &= \int_0^1 it \ev{v, \int e^{itxr} \dd{E(x)} \Phi \int e^{ity(1 - r)} \dd{F(y)} w   } \dd{r} \\
&= \int_0^1 it\ev{e^{-itrA}v, B e^{it(1 - r)B}w} - it \ev{e^{-itrA}v, Ae^{it(1 - r)B}w} \dd{r}. \\
&= \int_0^1 \qty(-\frac{d}{dr} \ev{e^{-itrA}v, e^{it(1 - r)B}w}) \dd{r}\\
&= \ev{v, e^{itB}w} - \ev{v, e^{itA}w},
\end{align*}
proving \eqref{eq:exponential_difference}. The estimate on the norm comes from $\norm{\phi_t}_\fM \leq \abs{t}$ and Proposition \ref{prop:doi_continuity_B(B(K,H))}.
\end{proof}

We can extend \eqref{eq:exponential_difference} to functions other than $e^{itx}$ by Fourier analysis. 

\begin{definition}\label{def:Wiener_space}
The \emph{first Wiener space} $W_1(\bbR)$ is defined to be the set of functions $f:\bbR \to \bbC$ for which there exists a complex Borel measure $\mu$ on $\bbR$ such that $\int_\bbR \abs{t} \dd{\abs{\mu}(t)} < \infty$ and
\begin{equation}\label{eq:Fourier_mu}
	f(x) = \int_\bbR e^{itx} \dd{\mu(t)}
\end{equation}
for all $x \in \bbR$.
\end{definition}

For example, if $f :\bbR \to \bbC$ is continuous, integrable and $\int_\bbR \abs*{t \hat f(t)} \dd{t} < \infty$, then by the Fourier inversion formula and the continuity of $f$ we know
\[
	f(x) = \frac{1}{2\pi} \int_\bbR e^{itx} \hat f(t) \dd{t}
\]
for all $x \in \bbR$, hence $f \in W_1(\bbR)$ with the measure $\dd{\mu(t)} = \hat f(t) \dd{t}/2\pi$ satisfying \eqref{eq:Fourier_mu}. Here our convention for the Fourier transform of $f \in L^1(\bbR)$ is
\[
	\hat f(t) := \int_\bbR e^{-itx} f(x) \dd{x}.
\]

In general, if $f \in W_1(\bbR)$ and $\mu$ satisfies \eqref{eq:Fourier_mu}, then it follows from the dominated convergence theorem that $f$ is differentiable and 
\[
	f'(x) = \int_\bbR ite^{itx} \dd{\mu(t)}
\]
for all $x \in \bbR$.

Given $f \in W_1(\bbR)$, we claim the function
\[
	\phi_f(x,y) = \begin{cases} \frac{f(x) - f(y)}{x - y} &: x \neq y \\ f'(x) &: x = y \end{cases}
\]
is in $\fM_{00}$. Observe that for $x \neq y$, we have
\begin{align*}
\frac{f(x) - f(y)}{x - y} &=  \int_\bbR \frac{e^{itx} - e^{ity}}{x - y} \dd{\mu(t)}\\
&= \int_{\bbR}\qty(\int_0^1 it e^{itxr} e^{ity(1 - r)} \dd{r}) \dd{\mu(t)}\\
&= \int_{\bbR \times [0,1]}  e^{itxr}e^{ity(1 - r)} \dd{\nu(t,r)},
\end{align*}
where $\nu$ is the product $\dd{\nu(t,r)} = it\dd{\mu(t)} \times \dd{r}$. When $x = y$, the final line above yields $f'(x)$. Thus, the final line above gives a strict decomposition of $\phi_f$ (see Remark \ref{rem:complex_measure}). It is clear from this decomposition that
\begin{equation}\label{eq:phi_f_norm}
	\norm{\phi_f}_\fM \leq \int_\bbR \abs{t} \dd{\abs{\mu}(t)}.
\end{equation}
for any two projection-valued measures on $\bbR$.

\begin{theorem}\label{thm:f_difference}
Let $A$ be a densely defined self-adjoint operator on $\sH$, let $\Phi \in \fB(\sH)$ be self-adjoint, and let $B = A + \Phi$. Let $E$ and $F$ be the spectral measures associated to $A$ and $B$, respectively. If $f \in W_1(\bbR)$, then
\begin{equation}\label{eq:f_difference}
f(B) - f(A) = \iint \phi_f(x,y) \dd{E(x)} \Phi \dd{F(y)}.
\end{equation}
In particular,
\begin{equation}\label{eq:f_difference_norm}
	\norm{f(B) - f(A)} \leq \norm{\Phi} \int_\bbR \abs*{t} \dd{\abs{\mu}(t)}.
\end{equation}
\end{theorem}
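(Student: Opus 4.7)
The plan is to lift Lemma \ref{lem:exponential_difference} to all of $W_1(\bbR)$ by integrating against the complex measure $\mu$ representing $f$. The key tool will be the matrix element formula \eqref{eq:doi_simple_mel_bounded_T} together with the strict decomposition
\[
	\phi_f(x,y) = \int_{\bbR \times [0,1]} e^{itxr}\, e^{ity(1-r)} \dd{\nu(t,r)}, \qquad \dd{\nu(t,r)} = it \dd{\mu(t)} \times \dd{r},
\]
established in the excerpt. The norm estimate \eqref{eq:f_difference_norm} will then be immediate from Proposition \ref{prop:doi_continuity_B(B(K,H))} and the bound \eqref{eq:phi_f_norm} on $\norm{\phi_f}_\fM$, so the substantive task is proving the operator identity \eqref{eq:f_difference}.

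I will verify \eqref{eq:f_difference} at the level of matrix elements. First, applying \eqref{eq:doi_simple_mel_bounded_T} (extended to complex $\nu$ via Remark \ref{rem:complex_measure}) gives
\[
	\ev{v, \iint \phi_f \dd{E} \Phi \dd{F} w} = \int_\bbR \left( \int_0^1 \ev{v, e^{itrA} \Phi\, e^{it(1-r)B} w}\, it \dd{r} \right) \dd{\mu(t)},
\]
where the reordering is legal because the integrand is bounded in modulus by $\abs{t}\norm{v}\norm{\Phi}\norm{w}$ and $\int_\bbR \abs{t} \dd{\abs{\mu}(t)} < \infty$. The inner $r$-integral is precisely the object computed in the proof of Lemma \ref{lem:exponential_difference} and equals $\ev{v,(e^{itB} - e^{itA}) w}$; although that proof works with $v,w$ in the domain $\sD$ of $A$, the resulting operator identity (hence matrix element identity) extends to arbitrary $v, w \in \sH$ by boundedness, so I may feed arbitrary $v, w \in \sH$ into the previous display.

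Second, I will identify $\int_\bbR \ev{v, e^{itA} w} \dd{\mu(t)}$ with $\ev{v, f(A) w}$, and similarly for $B$. This is a standard application of the Borel functional calculus together with Fubini: $\ev{v, f(A) w} = \int_\bbR f(x) \dd{E_{v,w}(x)} = \int_\bbR \int_\bbR e^{itx} \dd{\mu(t)} \dd{E_{v,w}(x)}$, and the interchange is legal because $\abs{e^{itx}} = 1$ while $\mu$ and $E_{v,w}$ are both finite complex measures. Subtracting the representations for $B$ and $A$ and combining with the previous paragraph yields $\ev{v, \iint \phi_f \dd{E} \Phi \dd{F} w} = \ev{v, (f(B) - f(A)) w}$ for all $v, w \in \sH$, which is \eqref{eq:f_difference}.

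I do not anticipate a genuinely hard obstacle: the main care lies in justifying the two Fubini interchanges, both of which reduce to finiteness of $\int_\bbR \abs{t} \dd{\abs{\mu}(t)}$ and norm-one bounds on $e^{isA}$ and $e^{isB}$. An alternative formulation would be to write $f(A) = \int_\bbR e^{itA} \dd{\mu(t)}$ and $f(B) = \int_\bbR e^{itB} \dd{\mu(t)}$ as strong integrals and subtract, invoking Lemma \ref{lem:exponential_difference} pointwise in $t$, but routing everything through matrix elements keeps the argument strictly inside the framework already built up in Theorem \ref{thm:doi_bounded_T}.
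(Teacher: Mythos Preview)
Your proposal is correct and follows essentially the same route as the paper: both apply \eqref{eq:doi_simple_mel_bounded_T} with the strict decomposition of $\phi_f$, invoke Lemma \ref{lem:exponential_difference} to evaluate the inner $r$-integral, and then use Fubini against $E_{v,w}$ to identify $\int_\bbR \ev{v,e^{itA}w}\dd{\mu(t)}$ with $\ev{v,f(A)w}$. Your write-up is slightly more careful than the paper's in justifying the Fubini interchanges and the extension of the matrix element identity from $v,w\in\sD$ to all of $\sH$, but the argument is the same.
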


\begin{proof}
Let $D$ be the operator on the right hand side of \eqref{eq:f_difference} and let $\mu$ be a complex Borel measure on $\bbR$ such that \eqref{eq:Fourier_mu} holds. Then for any $v, w \in \sH$ we have
\begin{align*}
\ev{v, D w} &= \int_\bbR\int_0^1 it  \ev{v, e^{itrA} \Phi e^{it(1 - r)B} w} \dd{r} \dd{\mu(t)}.
\end{align*}
We see from Lemma \ref{lem:exponential_difference} that  this is
\[
	\ev{v,D w} =\int_\bbR  \ev{v,e^{itB}w} \dd{\mu(t)} - \int_\bbR \ev{v, e^{itA}w} \dd{\mu(t)}.
\]
By Fubini's theorem we have
\begin{align*}
	\int_\bbR \ev{v, e^{itA}w} \dd{\mu(t)} &= \int_\bbR \int_\bbR  e^{itx} \dd{E_{v,w}(x)} \dd{\mu(t)}\\
	&= \int_\bbR f(x) \dd{E_{v,w}(x)} = f(A)
\end{align*}
and similarly for $B$. This proves \eqref{eq:f_difference} and \eqref{eq:f_difference_norm} follows from \eqref{eq:phi_f_norm} and Proposition \ref{prop:doi_continuity_B(B(K,H))}.
\end{proof}

\begin{remark}\label{rem:PVM_swap}
An interesting feature of the formulas \eqref{eq:exponential_difference} and \eqref{eq:f_difference} is that the spectral measures can be interchanged. Indeed,
\begin{align*}
	f(B) - f(A) &= -[f(A) - f(B)] \\
	&= -\iint \phi_f(x,y) \dd{F(x)} \qty(-\Phi) \dd{E(y)}\\
	&= \iint \phi_f(x,y) \dd{F(x)} \Phi \dd{E(y)}
\end{align*}
We will use this version in the proof of the Daletskii-Krein formula.
\end{remark}

To state the kinds of operator-valued maps we will be differentiating in the Daletskii-Krein formula, we make the following definition.

\begin{defn}\label{defn:strongly_diff}
Let $J \subset \bbR$ be an interval. 
We say a map $\Phi:J \to \fB(\sH)$ is \emph{strongly differentiable} if for every $v \in \sH$, the map $s \mapsto \Phi(s)v$ is Fr\'echet differentiable. If $\Phi$ is strongly differentiable, then for each $s \in J$, the linear map $\Phi'(s):\sH \to \sH$ defined by
\[
	\Phi'(s)v = \frac{d}{dt} \Phi(t)v \bigg|_{t = s}
\]
is automatically bounded since it can be seen as a pointwise limit of a sequence of difference quotients $[\Phi(s_n) - \Phi(s)]/(s_n - s)$ where $s_n \to s$, each of which is a bounded linear map. 

Furthermore, if $\Phi:J \to \fB(\sH)$ is strongly differentiable, then in fact $\Phi$ is norm-continuous since for any $s \in J$ and any sequence $(s_n)$ in $J \setminus \qty{s}$ approaching $s$, the difference quotients $[\Phi(s_n) - \Phi(s)]/(s_n - s)$ are pointwise bounded, hence the uniform boundedness principle yields  $M > 0$ such that
\[
	\norm{\frac{\Phi(s_n) - \Phi(s)}{s_n - s}} \leq M
\]
for all $n \in \bbN$. Multiplying by $\abs{s_n - s}$ and taking the limit as $n \to \infty$ shows that $\norm{\Phi(s_n) - \Phi(s)} \to 0$.
\end{defn}

\begin{theorem}[Daletskii-Krein formula]
Let $J \subset \bbR$ be an interval and let $\Phi:J \to \fB(\sH)$ be strongly differentiable and pointwise self-adjoint, i.e., $\Phi(s)^* = \Phi(s)$ for all $s \in J$. Let $A_0$ be a densely defined self-adjoint operator and let $A(s) = A_0 + \Phi(s)$ for all $s \in J$. Let $E_s$ be the spectral measure associated to $A(s)$. If $f \in W_1(\bbR)$, then $t \mapsto f(A(t))$ is strongly differentiable with derivative 
\begin{equation}\label{eq:DK}
	\frac{d}{ds}f(A(s)) = \iint \phi_f(x,y) \dd{E_s(x)} \Phi'(s) \dd{E_s(y)}
\end{equation}
for all $s \in J$. 
\end{theorem}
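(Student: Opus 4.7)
The plan is to differentiate the double operator integral representation of $f(A(t)) - f(A(s))$ under the integral sign, using the explicit strict decomposition of $\phi_f$. Since $A(t) - A(s) = \Phi(t) - \Phi(s)$, Theorem \ref{thm:f_difference} (applied with the roles of $A$ and $B$ there taken to be $A(s)$ and $A(t)$, respectively) yields
\[
f(A(t)) - f(A(s)) = \iint \phi_f(x,y) \dd{E_s(x)} \bigl(\Phi(t) - \Phi(s)\bigr) \dd{E_t(y)}.
\]
Substituting the strict decomposition $\phi_f(x,y) = \int_{\bbR \times [0,1]} i\tau\, e^{i\tau x r} e^{i\tau y(1-r)} \dd{\mu(\tau)}\dd{r}$ into \eqref{eq:doi_simple_mel_bounded_T} (allowing complex $\nu$ as in Remark \ref{rem:complex_measure}) and dividing by $t-s$, I obtain for every $v, w \in \sH$
\[
\ev{v, \tfrac{f(A(t)) - f(A(s))}{t - s} w} = \int_{\bbR \times [0,1]} i\tau \ev{v, e^{i\tau r A(s)} Q(t) e^{i\tau(1-r) A(t)} w} \dd{\mu(\tau)} \dd{r},
\]
where $Q(t) \defeq (\Phi(t) - \Phi(s))/(t - s)$ is the difference quotient of $\Phi$. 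Writing $D(s) \defeq \iint \phi_f \dd{E_s} \Phi'(s) \dd{E_s}$, the same formula holds for $\ev{v, D(s) w}$ with $Q(t)$ replaced by $\Phi'(s)$ and $A(t)$ replaced by $A(s)$.

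Subtracting the two expressions, bounding $\abs{\ev{v,\cdot}}$ by $\norm{v}$ times the norm of the vector, and taking the supremum over $\norm{v} \leq 1$, I arrive at the key estimate
\[
\norm{\tfrac{f(A(t)) - f(A(s))}{t - s} w - D(s) w} \leq \int_{\bbR \times [0,1]} \abs{\tau}\, \norm{Q(t) e^{i\tau(1-r)A(t)} w - \Phi'(s) e^{i\tau(1-r)A(s)} w} \dd{\abs{\mu}(\tau)} \dd{r},
\]
where I used that $e^{i\tau r A(s)}$ is unitary. The goal is then to show that the right-hand side tends to zero as $t \to s$ via the dominated convergence theorem.

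For the dominating function, the uniform boundedness principle applied to the strongly convergent family $Q(t) \to \Phi'(s)$ produces a neighborhood $U \subset J$ of $s$ and a constant $C > 0$ with $\norm{Q(t)} \leq C$ for all $t \in U$, yielding the integrable bound $\abs{\tau}\,(C + \norm{\Phi'(s)})\norm{w}$ on the integrand. For pointwise convergence in $(\tau, r)$, Lemma \ref{lem:exponential_difference} together with the norm continuity of $\Phi$ from Definition \ref{defn:strongly_diff} gives $\norm{e^{i\tau(1-r)A(t)} - e^{i\tau(1-r)A(s)}} \leq \abs{\tau}\norm{\Phi(t) - \Phi(s)} \to 0$. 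Combining this with the elementary splitting
\[
Q(t) v_t - \Phi'(s) v_0 = Q(t)(v_t - v_0) + \bigl(Q(t) - \Phi'(s)\bigr) v_0,
\]
where $v_t \defeq e^{i\tau(1-r)A(t)} w$ and $v_0 \defeq e^{i\tau(1-r)A(s)} w$, both summands vanish as $t \to s$: the first because $v_t \to v_0$ in norm while $\norm{Q(t)} \leq C$, and the second by the strong convergence of $Q(t)$ at the fixed vector $v_0$. Dominated convergence then delivers the formula \eqref{eq:DK}.

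The main delicate point is exactly this combination of only strong (not norm) convergence $Q(t) \to \Phi'(s)$ with the $t$-dependence of the factor $e^{i\tau(1-r)A(t)} w$ that $Q(t)$ is applied to; the uniform boundedness principle and the norm estimate from Lemma \ref{lem:exponential_difference} are precisely the two ingredients needed to reconcile them inside the dominated convergence argument.
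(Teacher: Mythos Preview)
Your proof is correct and follows the same overall strategy as the paper---express the difference quotient via Theorem~\ref{thm:f_difference}, insert the explicit decomposition of $\phi_f$, and apply dominated convergence---but you make two streamlining choices that collapse the paper's argument considerably.

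First, you apply Theorem~\ref{thm:f_difference} directly, keeping the $t$-independent spectral measure $E_s$ on the \emph{left}; the paper instead invokes Remark~\ref{rem:PVM_swap} to put $E_{s_n}$ on the left and $E_s$ on the right. Your ordering makes the left unitary factor $e^{i\tau r A(s)}$ independent of the running parameter, so it simply disappears when passing to norms. Second, you bound $\norm{\cdot\,w}$ directly via $\sup_{\norm{v}\le 1}\abs{\ev{v,\cdot}}$ and pull the supremum inside the integral, whereas the paper splits into two pieces and evaluates $\norm{w_n}^2$ (for the $\Delta_n-\Phi'(s)$ part) and $\norm{v_n}^2$, $\ev{v,v_n}$ (for the $\Phi'(s)$ part) by feeding the vectors back into the integral representation. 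Your two choices together replace the paper's two separate dominated-convergence arguments by a single one, and avoid the $\norm{v_n-v}^2$ computation entirely.

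One small point worth making explicit: the uniform boundedness principle gives you, for each \emph{sequence} $t_n\to s$, a uniform bound on $\norm{Q(t_n)}$; the neighborhood bound $\sup_{t\in U}\norm{Q(t)}\le C$ then follows by the obvious contradiction argument (as in the norm-continuity discussion in Definition~\ref{defn:strongly_diff}). This is routine and not a gap.
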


\begin{proof}
Let $(s_n)$ be a sequence in $J \setminus \qty{s}$ such that $s_n \to s$. For ease of notation, let
\[
	\Delta_n = \frac{\Phi(s_n) - \Phi(s)}{s_n - s}
\]
and let $\cG_n:\Sigma_{\bbR^2} \to \fB(\HS(\sH))$ be the projection-valued measure determined by $\cG_n(\Lambda_1 \times \Lambda_2)T = E_{s_n}(\Lambda_1)TE_s(\Lambda_2)$. 

Let $w \in \sH$. By Theorem \ref{thm:f_difference} and Remark \ref{rem:PVM_swap}, we have
\begin{align*}
\frac{f(A(s_n))w - f(A(s))w}{s_n - s} &= \iint \phi_f \dd{\cG_n}(\Delta_n)w 
\end{align*}
We rewrite the right hand side as
\begin{align*}
\iint \phi_f \dd{\cG_n}(\Delta_n)w &= \iint \phi_f \dd{\cG_n} (\Delta_n - \Phi'(s))w + \iint \phi_f \dd{\cG_n}(\Phi'(s))w.
\end{align*}
We claim that the first term on the right converges to zero and the second term converges to the right hand side of \eqref{eq:DK} applied to $w$.

Let $\mu$ be a complex Borel measure such that \eqref{eq:Fourier_mu} holds. Let $Z = \bbR \times [0,1]$, let $\nu$ be the product measure $\dd{\nu(t,r)} = it\dd{\mu(t)} \times \dd{r}$, let $\alpha(x,t,r) = e^{itrx}$, and let $\beta(y,t,r) = e^{it(1-r)y}$. We have seen that this gives a strict decomposition of $\phi_f$.

Define
\[
	w_n = \iint \phi_f \dd{\cG_n}(\Delta_n - \Phi'(s)) w.
\]
Since $\Phi$ is strongly differentiable, the sequence $\Delta_n - \Phi'(s)$ converges pointwise to zero. By the uniform boundedness principle, there exists $M > 0$ such that
$\norm{\Delta_n - \Phi'(s)} \leq M$
for all $n$. Thus,
\[
	\norm{w_n} \leq M \norm{w}\norm{\phi_f}_\fM \leq  M \norm{w} \int_\bbR \abs{t} \dd{\abs{\mu}(t)}
\]
for all $n \in \bbN$. Then:
\begin{align*}
\norm{w_n}^2 = \int_{\bbR \times [0,1]} \ev{w_n, e^{itrA(s_n)} \qty(\Delta_n -  \Phi'(s)) e^{it(1-r)A(s)}w   } \dd{\nu(t,r)}
\end{align*}
Since the integrand is bounded above by a constant independent of $n$ and goes to zero pointwise since $\Delta_n - \Phi'(s) \to 0$ strongly, we see that $\norm{w_n}^2 \to 0$  by the dominated convergence theorem. This proves the first part of the claim.

For the second part, let
\[
	v_n := \iint \phi_f \dd{\cG_n}(\Phi'(s))w
\]
and
\[
 v:= \iint \phi_f(x,y) \dd{E_s(x)} \Phi'(s) \dd{E_s(y)}w.
\]
It remains to show that $v_n \to v$. Note that 
\[
	\norm{v_n} \leq\norm*{\Phi'(s)} \norm{w} \norm{\phi_f}_\fM \leq \norm*{\Phi'(s)} \norm{w} \int_\bbR \abs{t} \dd{\abs{\mu}(t)}.
\] 
for all $n$. Now given $n \in \bbN$ and $(t,r) \in \bbR \times [0,1]$, define
\[
u_{n,t,r} = e^{itrA(s_n)}\Phi'(s)e^{it(1-r)A(s)}w.
\]
Observe that $\norm{u_{n,t,r}} \leq \norm{\Phi'(s)}\norm{w}$ for all $n$, $t$, and $r$. Then Lemma \ref{lem:exponential_difference} shows that
\[
	\norm{e^{itrA(s_n)} - e^{itrA(s)}} \leq \abs{tr}\norm{\Phi(s_n) - \Phi(s)}.
\]
As $n \to \infty$, this goes to zero in the norm topology of $\fB(\sH)$ by norm-continuity of $\Phi$. Thus,
\[
	u_{n,t,r} \to u_{t,r} \defeq e^{itrA(s)}\Phi'(s)e^{it(1-r)A(s)}w.
\]

Now, we observe that
\begin{align*}
	\norm{v_n}^2 &= \int_{\bbR \times [0,1]} \ev{v_n, u_{n,t,r}} \dd{\nu(t,r)} \\
	&= \int_{\bbR \times [0,1]} \int_{\bbR \times [0,1]} \ev{u_{n,t',r'}, u_{n,t,r}} \dd{\nu^*(t',r')} \dd{\nu(t,r)}.
\end{align*}
By the dominated convergence theorem, we have
\[
	\norm{v_n}^2 \to \int_{\bbR \times [0,1]} \int_{\bbR \times [0,1]} \ev{u_{t',r'}, u_{t,r}} \dd{\nu^*(t',r')} \dd{\nu(t,r)} = \norm{v}^2.
\]
Similarly, we have
\[
	\ev{v, v_n} = \int_{\bbR \times [0,1]} \ev{v, u_{n,t,r}} \dd{\nu(t,r)}  \to \int_{\bbR \times [0,1]} \ev{v, u_{t,r}} \dd{\nu(t,r)} = \norm{v}^2.
\]
It follows that $\norm{v_n - v}^2 \to 0$, as desired.
\end{proof}

It is interesting to note that in the case where $f(x) = e^{itx}$, the Daletskii-Krein formula reduces to the \textit{Duhamel formula}:
\[
	\frac{d}{ds} e^{itA(s)} = \int_0^1 it  e^{itrA(s)} \Phi'(s) e^{it(1-r)A(s)} \dd{r}.
\]

\section{An Application from Quantum Statistical Mechanics}\label{sec:application}

We now show how the theory of double operator integrals can be used to prove a theorem that is important in quantum statistical mechanics. The main results of this section were originally proven in \cite[\S2]{BMNS_automorphic_equivalence} and were expanded upon in \cite[\S2 \& \S6]{NSY_quasilocality1}. We will follow their arguments quite closely but highlight where and how the theory of double operator integrals is used.

Before getting into details, we state a few elementary results that will be used in the proof of the main theorems. The first result below is essentially \cite[Thm.~III.6.17]{Kato}, in the special case of a self-adjoint operator on a Hilbert space. A simpler proof than the one in \cite{Kato} may be given in this special case, so we have provided it.  Note that throughout this section, we let $\sH$ be a Hilbert space.

\begin{theorem}[{c.f.~\cite[Thm.~III.6.17]{Kato}}]\label{thm:Kato_integral}
Let $H$ be a densely defined self-adjoint operator on $\sH$. Let $\Gamma$ be a positively oriented piecewise $C^1$ simple closed curve in $\bbC$, disjoint from $\sigma(H)$, and let $P$ be the spectral projection corresponding to the portion of $\sigma(H)$ contained in the bounded component of the complement of the image of $\Gamma$. With $R(z) = (H - z)^{-1}$ for $z \in \bbC \setminus \sigma(H)$ denoting the resolvent of $H$, the projection $P$ is given by
\[
	P = -\frac{1}{2\pi i} \int_\Gamma R(z) \dd{z}
\]
where the integral is a Bochner integral.
\end{theorem}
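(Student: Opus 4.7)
The plan is to reduce the identity to a scalar integral via the spectral theorem and then invoke Cauchy's integral formula pointwise on the spectrum. Let $E$ denote the spectral measure of $H$. First I would verify that the Bochner integral makes sense: since $\Gamma$ is compact and disjoint from $\sigma(H)$, the map $z \mapsto R(z)$ is norm-continuous (indeed norm-analytic) on a neighborhood of $\Gamma$, with $\|R(z)\| \leq 1/\operatorname{dist}(z, \sigma(H))$ uniformly bounded on $\Gamma$; combined with the piecewise $C^1$ parametrization of $\Gamma$, this yields Bochner integrability. Call the resulting operator
\[
Q \defeq -\frac{1}{2\pi i}\int_\Gamma R(z)\dd{z} \in \cB(\sH).
\]

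Next, for any $v, w \in \sH$, I would compute $\ev{v, Qw}$ by interchanging the bounded linear functional $\ev{v, \,\cdot\, w}$ with the Bochner integral. This gives
\[
\ev{v, Qw} = -\frac{1}{2\pi i}\int_\Gamma \ev{v, R(z)w}\dd{z} = -\frac{1}{2\pi i}\int_\Gamma \int_{\sigma(H)} \frac{1}{x - z}\dd{E_{v,w}(x)}\dd{z},
\]
where I used the functional calculus identity $\ev{v, R(z)w} = \int (x-z)^{-1}\dd{E_{v,w}(x)}$. Since $E_{v,w}$ is a complex measure of finite total variation and the integrand $(x,z) \mapsto (x-z)^{-1}$ is bounded and continuous on $\sigma(H) \times \Gamma$ (as these sets are disjoint and $\Gamma$ is compact), Fubini's theorem applies and I may swap the order of integration.

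For each fixed $x \in \sigma(H)$, the Cauchy integral formula gives
\[
-\frac{1}{2\pi i}\int_\Gamma \frac{1}{x - z}\dd{z} = \frac{1}{2\pi i}\int_\Gamma \frac{1}{z - x}\dd{z} = \chi_\Omega(x),
\]
where $\Omega$ is the bounded component of $\bbC \setminus \Gamma$. After Fubini, this yields $\ev{v, Qw} = \int_{\sigma(H)} \chi_\Omega(x)\dd{E_{v,w}(x)} = \ev{v, E(\Omega \cap \sigma(H)) w} = \ev{v, Pw}$. Since $v, w$ were arbitrary, $Q = P$. The main obstacle is the technical bookkeeping for the Fubini step and the Bochner integrability; once one confirms that $(x,z) \mapsto (x-z)^{-1}$ is uniformly bounded on $\sigma(H) \times \Gamma$ (using $\operatorname{dist}(\sigma(H), \Gamma) > 0$) and that $|E_{v,w}|$ is finite, both are routine.
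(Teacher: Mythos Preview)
Your proposal is correct and follows essentially the same route as the paper's proof: both verify Bochner integrability via norm-continuity and the distance bound $\norm{R(z)} \leq 1/\operatorname{dist}(z,\sigma(H))$, then pass to matrix elements $\ev{v,\cdot\,w}$, swap the $\Gamma$ and $E_{v,w}$ integrals by Fubini, and identify the inner contour integral as $\chi_{\sigma_1}(x)$ via Cauchy's formula. The paper is slightly terser about the Fubini justification, but the argument is the same.
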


\begin{proof}
Note that the integral exists since $R(z)$ is a norm-continuous function of $z \in \bbC \setminus \sigma(H)$. Indeed, if $z, z_0 \in \bbC \setminus \sigma(H)$, then 
\[
	R(z) - R(z_0) = (z - z_0)R(z) R(z_0).
\]
For any $z \in \bbC \setminus \sigma(H)$, we have
\[
	\norm{R(z)} \leq \frac{1}{\min\limits_{\lambda \in \sigma(H)} \abs{\lambda - z}}.
\]
Since the right hand side is bounded above in a neighborhood of $z_0$, we see that $\norm{R(z) - R(z_0)} \to 0$ as $z \to z_0$.

Let $E$ be the spectral measure of $H$ and let $\sigma_1$ be the portion of $\sigma(H)$ contained in the bounded component of $\bbC \setminus \Gamma$. 
For any $v,w \in \sH$ we have
\[\langle v, P w \rangle = \int_{\sigma(H)} \chi_{\sigma_1} \dd{E_{v,w}}\]
and
\begin{align*}
\left\langle v, \left(-\frac{1}{2\pi i} \int_\Gamma R(z) \dd{z}\right) w \right\rangle &= -\frac{1}{2\pi i} \int_\Gamma \langle v, R(z) w \rangle \dd{z}\\
&= -\frac{1}{2\pi i} \int_\Gamma \int_{\sigma(H)} \frac{1}{x - z} \dd{E_{v,w}(x)} \dd{z} \\ 
&= \int_{\sigma(H)} \left( \frac{1}{2\pi i}\int_\Gamma \frac{1}{z - x} \dd{z} \right) \dd{E_{v,w}(x)}.
\end{align*}
So it suffices to observe
\[\chi_{\sigma_1}(x) = \frac{1}{2\pi i}\int_\Gamma \frac{1}{z - x} \dd{z}\]
by elementary complex analysis.
\end{proof}

For our next preliminary results, let us prove the product and quotient rules for strongly differentiable functions. Recall that the notion of strong differentiability was defined and discussed in Definition \ref{defn:strongly_diff}. 

\begin{prop}
Let $J \subset \bbR$ be an interval. Given strongly differentiable maps $A:J \to \fB(\sH)$ and $B:J \to \fB(\sH)$, the map $C:J \to  \fB(\sH)$ given by $C(s) = A(s)B(s)$ is strongly differentiable and 
\[
C'(s) = A'(s)B(s) + A(s)B'(s)
\]
for all $s \in J$.
\end{prop}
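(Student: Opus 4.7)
The plan is to verify strong differentiability of $C$ pointwise by applying the classical product-rule decomposition of the difference quotient, then passing to the limit using the two facts highlighted in Definition \ref{defn:strongly_diff}: strong differentiability implies pointwise strong convergence of difference quotients, and also implies norm-continuity of the underlying operator-valued map.

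Fix $s \in J$ and $v \in \sH$, and consider $h \in \bbR \setminus \{0\}$ with $s + h \in J$. First I would write the standard decomposition
\[
\frac{C(s+h)v - C(s)v}{h} = A(s+h)\left[\frac{B(s+h)v - B(s)v}{h}\right] + \left[\frac{A(s+h) - A(s)}{h}\right]B(s)v.
\]
The second summand converges to $A'(s)B(s)v$ as $h \to 0$ by the strong differentiability of $A$ applied to the fixed vector $B(s)v \in \sH$.

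For the first summand, I would add and subtract $A(s)B'(s)v$ and write
\[
A(s+h)\left[\frac{B(s+h)v - B(s)v}{h}\right] - A(s)B'(s)v = A(s+h)\!\left[\frac{B(s+h)v - B(s)v}{h} - B'(s)v\right] + \bigl[A(s+h) - A(s)\bigr]B'(s)v.
\]
The second term on the right tends to zero because Definition \ref{defn:strongly_diff} guarantees that strong differentiability of $A$ implies norm-continuity of $A$, so $\|A(s+h) - A(s)\| \to 0$. For the first term on the right, strong differentiability of $B$ at $v$ ensures that the bracketed vector tends to zero in norm, while the norm-continuity of $A$ also gives a local bound $\sup_{|h|\le \delta}\|A(s+h)\| < \infty$; hence this term tends to zero as well.

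The main subtlety, and the step I would be careful to justify explicitly, is precisely this control of $\|A(s+h)\|$ as $h \to 0$: we cannot merely use strong continuity of $A$, since we are applying $A(s+h)$ to a varying (rather than fixed) sequence of vectors. The resolution is exactly the norm-continuity observation already packaged into Definition \ref{defn:strongly_diff}, which was itself obtained from the uniform boundedness principle. Combining these limits yields $C'(s)v = A'(s)B(s)v + A(s)B'(s)v$, and since $v$ was arbitrary this verifies both strong differentiability of $C$ and the claimed formula.
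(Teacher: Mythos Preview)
Your proof is correct and follows essentially the same approach as the paper: the same decomposition of the difference quotient, the same use of strong differentiability of $A$ at the fixed vector $B(s)v$, and the same appeal to the norm-continuity of $A$ (from Definition \ref{defn:strongly_diff}) to control both $\|A(s+h)\|$ and $[A(s+h)-A(s)]B'(s)v$. The only cosmetic difference is that the paper adds and subtracts $A(s+h)B'(s)v$ rather than $A(s)B'(s)v$, which amounts to the same thing.
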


\begin{proof}
Let $v \in \sH$ and fix $s \in J$. For any $t \in J \setminus \{s\}$ we have
\begin{equation}
\frac{C(t)v - C(s) v}{t-s} = \frac{A(t)B(t)v - A(t)B(s) v}{t-s} + \frac{A(t)B(s)v - A(s)B(s) v}{t-s}\,.
\end{equation}
The second term converges to $A'(s) B(s) v$ as $t \to s$ by strong differentiability of $A$. The first term can be rewritten as
\[\qty[\frac{A(t)B(t)v - A(t)B(s) v}{t-s} - A(t) B'(s) v ] + A(t) B'(s) v\]
As $t \to s$, the third term converges to $A(s) B'(s) v$. The norm of the term in square brackets is bounded above by
\[\norm{A(t)} \norm{\frac{B(t)v - B(s) v}{t-s} - B'(s) v }.\]
This converges to $0$ as $t \to s$ by strong differentiability of $B(s)$ and since $\norm{A(t)}$ is bounded for $t$ in a neighborhood of $s$. Thus, we have from our original equation
 \[\lim_{t\to s}\frac{C(t)v - C(s) v}{t-s} = A'(s) B(s) v + A(s) B'(s) v\,.\]
The statement follows since $v$ and $s$ were chosen arbitrarily. 
\end{proof}

\begin{prop}
Let $J \subset \bbR$ be an interval. Suppose $A:J \to \fB(\sH)$ is strongly differentiable and  $A(s)$ is invertible for each $s \in J$. Then the function $A^{-1}:J \to  \fB(\sH)$ given by $A^{-1}(s) = A(s)^{-1}$ is strongly differentiable and 
\[
	\left(A^{-1}\right)'(s) = -A(s)^{-1} A'(s) A(s)^{-1}
\]
for each $s \in J$.
\end{prop}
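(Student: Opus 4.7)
The plan is to exploit the resolvent-type identity
\[
A(t)^{-1} - A(s)^{-1} = A(t)^{-1}\bigl(A(s) - A(t)\bigr) A(s)^{-1},
\]
which is verified by multiplying both sides on the left by $A(t)$ and on the right by $A(s)$. Dividing by $t - s$ for $t \ne s$ and applying to a vector $v \in \sH$, we set $w \defeq A(s)^{-1} v$ and rewrite
\[
\frac{A(t)^{-1} v - A(s)^{-1} v}{t - s} = -A(t)^{-1}\,\frac{A(t) w - A(s) w}{t - s}.
\]
To isolate the limit, decompose the right-hand side as
\[
-A(t)^{-1}\!\left[\frac{A(t) w - A(s) w}{t - s} - A'(s) w\right] \;-\; A(t)^{-1} A'(s) w.
\]

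The first preparatory step is to show that $A(t)^{-1}$ stays uniformly bounded in operator norm for $t$ in a neighborhood of $s$, and in fact converges in norm to $A(s)^{-1}$. As noted immediately after Definition \ref{defn:strongly_diff}, strong differentiability of $A$ forces $A$ to be norm-continuous, so $A(t) \to A(s)$ in $\fB(\sH)$. Since $A(s)^{-1} \in \fB(\sH)$ (by the open mapping theorem), for $t$ close enough to $s$ we have $\norm{A(s)^{-1}(A(t) - A(s))} < 1$, whence $A(t) = A(s)\bigl(\1 + A(s)^{-1}(A(t) - A(s))\bigr)$ is invertible with
\[
A(t)^{-1} = \bigl(\1 + A(s)^{-1}(A(t) - A(s))\bigr)^{-1} A(s)^{-1}
\]
by a Neumann-series expansion, and the resulting norm bound shows $\norm{A(t)^{-1}} \leq M$ for some $M > 0$ and $A(t)^{-1} \to A(s)^{-1}$ in norm.

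With these estimates in hand, the second summand in the decomposition converges in norm to $-A(s)^{-1} A'(s) w = -A(s)^{-1} A'(s) A(s)^{-1} v$, the claimed derivative. For the first summand, strong differentiability of $A$ applied to the vector $w$ says that $[A(t)w - A(s)w]/(t - s) - A'(s) w \to 0$ in $\sH$, and multiplying by the bounded family $A(t)^{-1}$ (bounded by $M$) shows this summand tends to $0$. The only real obstacle is controlling $A(t)^{-1}$ near $s$, and that is dispatched by combining the norm-continuity of $A$ already recorded in the excerpt with the Neumann-series stability of invertibility; the rest of the argument mirrors the classical quotient rule.
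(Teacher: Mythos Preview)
Your proof is correct and follows essentially the same approach as the paper: both start from the resolvent identity $A(t)^{-1}-A(s)^{-1}=-A(t)^{-1}(A(t)-A(s))A(s)^{-1}$, split into a main term plus a remainder, and control the remainder using norm-continuity of $A$ (hence of $A^{-1}$). The only cosmetic difference is the choice of splitting---you add and subtract $A'(s)w$ while the paper adds and subtracts $A(s)^{-1}$ in place of $A(t)^{-1}$---and you supply the Neumann-series justification for norm-continuity of $A^{-1}$ that the paper leaves implicit.
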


\begin{proof}
Let $v \in \sH$ and fix $s \in J$. For any $t \in J \setminus \qty{s}$, we have
\begin{align*}
\frac{A^{-1}(t)v - A^{-1}(s)v}{t - s} &= -\frac{A(t)^{-1}[A(t) - A(s)]A(s)^{-1}v}{t - s} \\
&= -\frac{[A(t)^{-1} - A(s)^{-1}][A(t) - A(s)] A(s)^{-1}v}{t - s}\\
&\qquad - \frac{A(s)^{-1}[A(t) - A(s)]A(s)^{-1}v}{t - s}.
\end{align*}
The last term converges to $-A(s)^{-1}A'(s)A(s)^{-1}v$ as $t \to s$ by strong differentiability of $A$. The first term has norm bounded above by
\[
	\norm{A(t)^{-1} - A(s)^{-1}}\norm{\frac{A(t) - A(s)}{t - s} A(s)^{-1}v}.
\]
As $t \to s$, the first norm above converges to $0$ by norm-continuity of $A$, while the second term converges to $A'(s)A(s)^{-1}v$. Therefore the product converges to $0$, concluding the proof.
\end{proof}

Let us now explain the setup of the main theorems. 

\begin{definition}
Let $J \subset \bbR$ be an interval.
We say $\Phi:J \to \fB(\sH)$ is \emph{strongly $C^1$} if $\Phi$ is strongly differentiable and the derivative map $\Phi':J \to \fB(\sH)$ is strongly continuous. 

Note that if a map $\Phi:J \to \fB(\sH)$ is strongly continuous, then the function $J \to \bbR$, $s \mapsto \norm{\Phi(s)}$ is lower semicontinuous as it is the supremum of the continuous functions $s \mapsto \norm{\Phi(s)v}$ over all $v \in \sH$ with $\norm{v} \leq 1$. In addition, if $K$ is a compact subset of $J$, then $\norm{\Phi(s)}$ is bounded on $K$ as a consequence of the uniform boundedness principle. Replacing $\Phi$ with $\Phi'$, we see that if $\Phi$ is strongly $C^1$, then $\norm{\Phi'(s)}$ is lower semicontinuous and bounded on compact subsets of $J$. 
\end{definition}

\begin{assumptions}\label{assmpt1}
Let $J \subset \bbR$ be an interval. Let $\Phi:J \to \fB(\sH)$ be strongly $C^1$ and pointwise self-adjoint. We let $H_0$ be a densely defined and self-adjoint operator on $\sH$ and we define
\[
H(s) = H_0 + \Phi(s)
\]
for all $s \in J$. We denote the spectrum of $H(s)$ by $\sigma(s)$. We assume there exist compact intervals $I(s)$ defined for all $s \in J$, with endpoints depending continuously on $s$, such that:
\begin{enumerate}
	\item $\sigma_1(s) \defeq I(s) \cap \sigma(s)$ and $\sigma_2(s) \defeq \sigma(s) \setminus \sigma_1(s)$ are nonempty for all $s$,
	\item there exists $\gamma > 0$ such that 
	\[
		\inf\qty{\abs{x_1 - x_2} : x_1 \in I(s), x_2 \in \sigma_2(s)} \geq \gamma
	\]
	for all $s$.
\end{enumerate}
Finally, to establish notation, we let $E_s$ denote the spectral measure of $H(s)$ and we let $P(s)$ denote the spectral projection onto the first isolated component of the spectrum $\sigma_1(s)$, i.e.,
\[
	P(s) = \int_{\sigma_1(s)} \dd{E_s}.
\]
Given $z \in \bbC \setminus \sigma(s)$, we let
\[
	R(z,s) = (H(s) - z)^{-1}
\]
be the resolvent of $H(s)$ at $z \in \bbC \setminus \sigma(s)$. We let $\Gamma(s) : [0,2\pi] \to \bbC$ be the circle
\begin{equation}
	\Gamma(s)(\theta) = c(s) + \frac{1}{2}d(s) e^{i\theta} 
\end{equation}
where
\[
	c(s) = \frac{\min I(s) + \max I(s)}{2} \qqtext{and} d(s) = \max I(s) - \min I(s) + \gamma.
\]
\end{assumptions}

Our goal will be to prove strong differentiability of $P$ and moreover that its derivative is given by a formula
\begin{equation}\label{eq:proj_evolution}
P'(s) = i[D(s), P(s)]\,,
\end{equation}
where $D:J \to \fB(\sH)$ is a family of self-adjoint operators depending explicitly on $H(s)$, to be defined later. Each side will be exhibited as a linear combination of double operator integrals, at which point the theory developed in the previous sections can be used to manipulate one into the other.  Before we define $D$, let us focus on the left hand side of \eqref{eq:proj_evolution}. 

\begin{lemma}\label{lem:P'(s)}
Under Assumption \ref{assmpt1}, $P(s)$ is strongly differentiable, with derivative 
\begin{equation}\label{eq:First P'(s)}
P'(s) = \frac{1}{2\pi i} \int_{\Gamma(s)} R(z,s) \Phi'(s) R(z,s) \dd{z}, 
\end{equation}
where the integral is a Bochner integral. 
\end{lemma}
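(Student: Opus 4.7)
The plan is to represent $P(s)$ as a contour integral via Theorem~\ref{thm:Kato_integral} and then differentiate under the integral sign. A convenient approach is to fix $s_0 \in J$ and use the single contour $\Gamma(s_0)$ for all $s$ in a small neighborhood of $s_0$. So the first step would be to show that, for $s$ sufficiently close to $s_0$,
$$P(s) = -\frac{1}{2\pi i}\int_{\Gamma(s_0)} R(z,s) \dd{z}.$$
This requires verifying that $\Gamma(s_0)$ is disjoint from $\sigma(s)$ and separates $\sigma_1(s)$ from $\sigma_2(s)$. For the first point I would use the upper semicontinuity of the spectrum under bounded self-adjoint perturbations, namely $\sigma(H(s)) \subset \sigma(H(s_0)) + [-\norm{\Phi(s)-\Phi(s_0)}, \norm{\Phi(s)-\Phi(s_0)}]$ (proved via the Neumann series for $(H(s_0)-\lambda)^{-1}(\Phi(s)-\Phi(s_0))$), together with the fact that $\Gamma(s_0)$ has distance $\gamma/2$ from $\sigma(s_0)$ by construction. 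For the separation, I would combine the continuity of the endpoints of $I(s)$ (so that $\sigma_1(s) \subset I(s)$ lies strictly inside $\Gamma(s_0)$) with the uniform gap $\gamma$ (so that $\sigma_2(s)$, lying at distance at least $\gamma$ from $I(s)$, stays outside $\Gamma(s_0)$), relying on the norm continuity of $\Phi$ established in Definition~\ref{defn:strongly_diff}.

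Once the fixed-contour representation is available, the next step would be to invoke the resolvent identity
$$R(z,s) - R(z,s_0) = -R(z,s)[\Phi(s)-\Phi(s_0)]R(z,s_0)$$
to rewrite the difference quotient, for each $v \in \sH$ and $s$ near $s_0$, as
$$\frac{P(s)v - P(s_0)v}{s - s_0} = \frac{1}{2\pi i}\int_{\Gamma(s_0)} R(z,s)\qty(\frac{\Phi(s) - \Phi(s_0)}{s-s_0})R(z,s_0)v \dd{z}.$$
The goal is then to pass $s \to s_0$ inside the Bochner integral. Pointwise convergence of the integrand to $R(z,s_0)\Phi'(s_0)R(z,s_0)v$ follows from norm continuity of $z \mapsto R(z,\cdot)$ (itself a consequence of the resolvent identity and norm continuity of $\Phi$) together with the strong differentiability of $\Phi$ at $s_0$.

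For the dominated convergence hypothesis I would use three facts: (i) the difference quotients $(\Phi(s) - \Phi(s_0))/(s-s_0)$ are pointwise convergent to $\Phi'(s_0)v$ for every $v$ and hence uniformly norm-bounded for $s$ near $s_0$ by the uniform boundedness principle; (ii) $\norm{R(z,s)}$ is uniformly bounded in $z \in \Gamma(s_0)$ and $s$ close to $s_0$, via a Neumann series argument based on $\sup_{z \in \Gamma(s_0)}\norm{R(z,s_0)} < \infty$ (the supremum is finite because $\Gamma(s_0)$ is compact and disjoint from $\sigma(s_0)$); and (iii) the contour $\Gamma(s_0)$ has finite length. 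Dominated convergence for Bochner integrals then yields the stated formula. The main obstacle is really the initial step of upgrading from the moving contour $\Gamma(s)$ to the fixed contour $\Gamma(s_0)$ near $s_0$; the rest is a standard exchange of limit and Bochner integral, and as a side benefit the integrand in the final formula is norm continuous in $z$, so the asserted Bochner integral exists without further comment.
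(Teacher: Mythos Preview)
Your proposal is correct and follows essentially the same route as the paper: fix the contour $\Gamma(s_0)$ on a neighborhood of $s_0$, represent $P(s)$ by Theorem~\ref{thm:Kato_integral}, and differentiate under the integral using the resolvent identity together with norm-continuity of $\Phi$ and strong differentiability at $s_0$. The only cosmetic difference is that the paper first establishes joint continuity of $(\theta,s)\mapsto R(\Gamma(\theta),s)$ and of its strong $s$-derivative and then invokes a general differentiation-under-the-integral theorem, whereas you write out the difference quotient explicitly and pass to the limit by dominated convergence; these are the same argument at different levels of packaging.
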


\begin{proof}
Fix $s_0 \in J$ and let $\Gamma = \Gamma(s_0)$ for ease of notation. We first note that since the endpoints of $I(s)$ depend continuously on $s$, there exists a connected set $U$ containing $s_0$ that is relatively open in $J$ and such that for all $s \in U$, we have that $I(s)$ is contained in the bounded component of $\bbC \setminus \Gamma([0,2\pi])$, $\sigma_2(s)$ is contained in the unbounded component of $\bbC \setminus \Gamma([0,2\pi])$, and the distance from $\sigma(s)$ to $\Gamma([0,2\pi])$ is greater than $\gamma/3$. It follows from Theorem \ref{thm:Kato_integral} that
\[
	P(s) = -\frac{1}{2\pi i} \int_\Gamma R(z,s) \dd{z}
\]
for all $s \in U$, i.e., the same contour $\Gamma$ may be used for all $s \in U$. 

Next, we note that $[0,2\pi] \times U \to \fB(\sH)$, $(\theta, s) \mapsto R(\Gamma(\theta), s)$ is jointly norm-continuous in $\theta$ and $s$. Indeed, given pairs $(\theta_1, s_1)$ and $(\theta_2, s_2)$ in $[0,2\pi] \times U$, if we denote $R_i = R(\Gamma(\theta_i), s_i)$ for $i \in \qty{1,2}$, then we have the equation
\[
	R_2 - R_1 = R_2\qty[(\Gamma(\theta_2) - \Gamma(\theta_1)) - (\Phi(s_2) - \Phi(s_1))]R_1
\]
Thus, as $(\theta_2, s_2) \to (\theta_1, s_1)$ we have $R_2 - R_1 \to 0$ since $\Gamma$ is continuous, $\Phi$ is norm-continuous, and $\norm{R(\Gamma(\theta), s)} \leq 3/\gamma$ for all $\theta$ and $s$. In particular, since $z \mapsto R(z,s_0)\Phi'(s_0)R(z,s_0)$ is norm-continuous, the integral in \eqref{eq:First P'(s)} (with $s = s_0$) exists as a Bochner integral.

Moreover, for fixed $\theta$, the above equation and the strong differentiability of $\Phi$ imply that $R(\Gamma(\theta), s)$ is strongly differentiable in $s$ with derivative
\[
	\frac{\partial}{\partial s} R(\Gamma(\theta), s) = -R(\Gamma(\theta), s)\Phi'(s)R(\Gamma(\theta),s).
\]
Since $R(\Gamma(\theta),s)$ is jointly norm-continuous and $\Phi'(s)$ is strongly continuous (hence locally norm-bounded), we see that the above partial derivative is jointly strongly continuous as a function $[0,2\pi] \times U \to \fB(\sH)$. 

Since for any $v \in \sH$, we know that $R(\Gamma(\theta), s)v$ is continuous and has continuous partial derivative with respect to $s$, we may differentiate under the integral sign \cite[Thm.~XIII.8.1]{LangRealAndFunctionalAnalysis} to see that $P(s)v$ is differentiable with derivative at $s_0$ given by
\begin{align*}
P'(s_0)v &= -\frac{1}{2\pi i} \int_\Gamma \frac{\partial}{\partial s} R(z,s)v \bigg|_{s = s_0}  \dd{z} \\
&= \frac{1}{2\pi i } \int_\Gamma R(z, s_0)\Phi'(s_0)R(z,s_0)v \dd{z}
\end{align*}
as desired.
\end{proof}

Let us now consider the right hand side of \eqref{eq:proj_evolution}. The operator $D(s)$ is defined in terms of a ``weight function'' $w_\gamma$, where $\gamma$ is as in Assumption \ref{assmpt1}. 

\begin{assumptions}\label{assmpt2}
Let $w_\gamma \in L^1(\rr)$ be a real valued function such that $\int_\rr w_\gamma(t) dt = 1$, $\int_\bbR \abs{t w_\gamma(t)} \dd{t} < \infty$ and $\supp\, \hat w_\gamma \subset [-\gamma,\gamma]$, where $\hat w_\gamma$ is the Fourier transform of $w_\gamma$.
\end{assumptions}

Recall that our convention for the Fourier transform is:
\[
	\hat w_\gamma(\xi) \defeq \int_\bbR w_\gamma(t) e^{-i\xi t} \dd{t}.
\]

For example, such a function $w_\gamma$ can be obtained by taking a smooth function $f:\bbR \to \bbR$ compactly supported in $[-\gamma, \gamma]$ satisfying $f(k) = f(-k)$ and $f(0) = 1$, and defining
\[
	w_\gamma = \Re \cF^{-1}f
\]
where $\cF^{-1}f$ is the inverse Fourier transform of $f$. For further applications in the context of quantum spin systems it is important to have a more concrete description of $w_\gamma$ with more control over its decay as $\abs{t} \to \infty$, but for our purposes the above description is sufficient. A more explicit construction of a function $w_\gamma$ with explicit bounds on the decay as $\abs{t} \to \infty$ can be found in \cite[\S VI.B]{NSY_quasilocality1}.

Under Assumption \ref{assmpt1} and with $w_\gamma$ as in Assumption \ref{assmpt2}, we define the \textit{Hastings generator} for $s \in J$ as the map $D(s):\sH \to \sH$ defined by
\begin{equation}\label{eq:hastings}
D(s)v = \int_{-\infty}^\infty w_\gamma(t) \int_0^t e^{iuH(s)} \Phi'(s) e^{-iuH(s)}v \dd{u} \dd{t}.
\end{equation}
Note that $u \mapsto e^{iuH(s)}\Phi'(s)e^{-iuH(s)}v$ is continuous and therefore the integral with respect to $u$ is a well-defined Bochner integral with 
\[
	\norm{\int_0^t e^{iuH(s)}\Phi'(s)e^{-iuH(s)}v \dd{u}} \leq \abs{t} \norm{\Phi'(s)} \norm{v}.
\]
It is also clear that the above integral is a continuous function of $t$ and therefore the integral with respect to $t$ in \eqref{eq:hastings} is a well-defined Bochner integral as well. We see that $D(s)$ is a bounded linear operator with 
\[
	\norm{D(s)} \leq \norm{\Phi'(s)} \int_{-\infty}^\infty \abs{t w_\gamma(t)} \dd{t}.
\]
It is also easy to check that $D(s)$ is self-adjoint.

\begin{theorem}\label{thm:Hastings_commutator}
Under Assumption \ref{assmpt1}, with $w_\gamma$ as in Assumption \ref{assmpt2}, and with $D(s)$ as in \eqref{eq:hastings}, we have
\begin{equation}\label{eq:Hastings_commutator_again}
	P'(s) = i[D(s), P(s)]
\end{equation}
for all $s \in J$.
\end{theorem}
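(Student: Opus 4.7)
The plan is to recognize both $P'(s)$ and $i[D(s), P(s)]$ as double operator integrals of the form $\iint \phi \dd{E_s(x)} \Phi'(s) \dd{E_s(y)}$ for suitable symbols $\phi \in \fM_0$, and then check that the two symbols agree pointwise on $\sigma(s) \times \sigma(s)$. Throughout, write $p \defeq \chi_{\sigma_1(s)}$, so that $P(s) = \int p \dd{E_s}$.

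The first step is to rewrite the Bochner integral \eqref{eq:First P'(s)} as a double operator integral. For each $z \in \Gamma(s)$ we have $R(z,s) = \int \frac{1}{x-z}\dd{E_s(x)}$, so Proposition \ref{prop:totally_separated_variable_integral} gives $R(z,s)\Phi'(s)R(z,s) = \iint \frac{1}{(x-z)(y-z)}\dd{E_s(x)}\Phi'(s)\dd{E_s(y)}$. Taking $Z = \Gamma(s)$ with the finite complex measure $\frac{1}{2\pi i}\dd{z}$ (admissible by Remark \ref{rem:complex_measure}), $\alpha(x,z) = \frac{1}{x-z}$, $\beta(y,z) = \frac{1}{y-z}$ (both bounded on $\sigma(s) \times \Gamma(s)$ since the contour lies at distance $\geq \gamma/2$ from $\sigma(s)$) gives a decomposition in $\fM_0$ of
\[
\phi_1(x,y) \defeq \frac{1}{2\pi i}\int_{\Gamma(s)} \frac{\dd{z}}{(x-z)(y-z)}.
\]
Testing matrix elements $\ev{v,\cdot\,w}$ via \eqref{eq:doi_simple_mel_bounded_T} and Fubini shows $P'(s) = \iint \phi_1 \dd{E_s(x)}\Phi'(s)\dd{E_s(y)}$, and an elementary residue computation evaluates $\phi_1(x,y) = \frac{p(x)-p(y)}{x-y}$ for $x \neq y$ and $\phi_1(x,x) = 0$.

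The second step does the analogous analysis for $D(s)$. Writing $e^{iuH(s)}\Phi'(s)e^{-iuH(s)} = \iint e^{iux}e^{-iuy}\dd{E_s(x)}\Phi'(s)\dd{E_s(y)}$ via Proposition \ref{prop:totally_separated_variable_integral}, swapping integrals by Fubini, and collapsing the $(t,u)$-integrations in \eqref{eq:hastings} produces a decomposition over $Z = \bbR$ with the complex measure $g(u)\dd{u}$, where $g(u) = \int_u^\infty w_\gamma(t)\dd{t}$ for $u > 0$ and $g(u) = -\int_{-\infty}^u w_\gamma(t)\dd{t}$ for $u < 0$; the bound $\int|g|\dd{u} \leq \int |tw_\gamma(t)|\dd{t} < \infty$ from Assumption \ref{assmpt2} makes this a valid decomposition. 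This yields $D(s) = \iint \psi \dd{E_s(x)}\Phi'(s)\dd{E_s(y)}$ with $\psi(x,y) = \frac{\hat w_\gamma(y-x) - 1}{i(x-y)}$ for $x \neq y$. Since $P(s)$ commutes with every $\int \alpha \dd{E_s}$, the identities
\[
P(s)\iint \phi \dd{E_s(x)} T \dd{E_s(y)} = \iint p(x)\phi(x,y)\dd{E_s(x)} T \dd{E_s(y)},
\]
and its right-multiplication analogue (yielding $p(y)$ instead of $p(x)$), hold for $\phi \in \fM_0$ and $T \in \fB(\sH)$; these are immediate for rank-one symbols $\alpha(x)\beta(y)$ and extend via \eqref{eq:doi_simple_mel_bounded_T}. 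Subtracting gives $i[D(s), P(s)] = \iint i\psi(x,y)[p(y) - p(x)]\dd{E_s(x)}\Phi'(s)\dd{E_s(y)}$.

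The final step is to verify $\phi_1(x,y) = i\psi(x,y)[p(y)-p(x)]$ pointwise on $\sigma(s)\times\sigma(s)$: if $p(x) = p(y)$ both sides vanish; if $p(x) \neq p(y)$ then $x$ and $y$ lie in different spectral components, so $|x-y| \geq \gamma$, and continuity of $\hat w_\gamma$ together with $\supp \hat w_\gamma \subset [-\gamma,\gamma]$ forces $\hat w_\gamma(y-x) = 0$, giving $i\psi(x,y)[p(y)-p(x)] = \tfrac{p(x)-p(y)}{x-y} = \phi_1(x,y)$. Injectivity of $\phi \mapsto \iint \phi \dd{\cG_s}$ from Proposition \ref{prop:doi_continuity_B(B(K,H))} then yields $P'(s) = i[D(s), P(s)]$. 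The main obstacle is the bookkeeping in the first two steps: constructing explicit decompositions of $\phi_1$ and $\psi$ in $\fM_0$ and verifying at the level of matrix elements that the Bochner and iterated integrals genuinely equal the claimed double operator integrals. This is routine via Fubini once the decompositions are in hand but demands care with complex measures, with the support and continuity properties of $\hat w_\gamma$, and with the fact that $\Phi'(s)$ is only bounded, so Theorem \ref{thm:doi_bounded_T} must be invoked rather than the Hilbert--Schmidt version.
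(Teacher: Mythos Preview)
Your proof is correct and follows essentially the same strategy as the paper's: express both sides of \eqref{eq:Hastings_commutator_again} as double operator integrals of $\Phi'(s)$ and verify that the symbols agree on $\sigma(s)\times\sigma(s)$. The organizational choices differ slightly. The paper first uses $P' = PP'(1-P) + (1-P)P'P$ to split $P'(s)$ into two double operator integrals with symbols supported on $\sigma_1\times\sigma_2$ and $\sigma_2\times\sigma_1$, and likewise splits $[D(s),P(s)]$ via $(1-P)DP - PD(1-P)$ into two pieces, then matches the four symbols pairwise. You instead compute a single symbol $\phi_1(x,y)=\tfrac{p(x)-p(y)}{x-y}$ for $P'(s)$ by a direct residue calculation, obtain a single symbol for $i[D(s),P(s)]$ by multiplying $\psi$ by $p(y)-p(x)$, and compare once. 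You also collapse the paper's two-dimensional auxiliary measure $\chi_Q(t,u)w_\gamma(t)\,\dd t\,\dd u$ to the one-dimensional $g(u)\,\dd u$. These are cosmetic simplifications of the same argument; the key computation---that $\hat w_\gamma(y-x)=0$ whenever $p(x)\neq p(y)$---is identical.

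Two small remarks. First, your final appeal to injectivity in Proposition~\ref{prop:doi_continuity_B(B(K,H))} is unnecessary: once the symbols agree $\cG$-a.e., the double operator integrals coincide by definition. Second, when you cite Proposition~\ref{prop:totally_separated_variable_integral} for $R(z,s)\Phi'(s)R(z,s)$, that proposition is stated for Hilbert--Schmidt $T$; since $\Phi'(s)$ is only bounded you should (as you note at the end) go through Theorem~\ref{thm:doi_bounded_T} and \eqref{eq:doi_simple_mel_bounded_T} from the start.
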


\begin{proof}
The value of $s$ will remain fixed throughout the proof, so we set 
\[
	\Gamma = \Gamma(s), \quad \sigma = \sigma(s), \quad \sigma_1 = \sigma_1(s), \quad \tn{and} \quad \sigma_2 = \sigma_2(s)
\] 
for ease of notation. Define the maps $\alpha, \beta: \sigma \times [0,2\pi] \to \bbC$ by
\[\alpha(x, \theta) = \frac{\chi_{\sigma_1}(x)}{x - \Gamma(\theta)} \qqtext{and}\beta(x, \theta) = \frac{\chi_{\sigma_2}(x)}{x - \Gamma(\theta)}\,.\]
Since $\Gamma$ never intersects $\sigma$, the maps $\alpha$ and $\beta$ are continuous functions on $\sigma \times [0,2\pi]$. In particular, they are bounded and measurable. Then we can define $\phi_1, \phi_2: \sigma \times \sigma \to \bbC$ by strict decompositions
\[\phi_1(x, y) = \int_{\Gamma} \alpha(x,z) \beta(y, z) \dd{z} \qqtext{and} \phi_2(x, y) = \int_{\Gamma} \beta(x, z) \alpha(y,z) \dd{z},\]
where the auxiliary measure space is $[0,2\pi]$ with complex measure $\Gamma'\! \dd{\theta}$ (see Remark \ref{rem:complex_measure}). Thus $\phi_1, \phi_2 \in \fM_{00}$. Note, by an application of the Cauchy integral formula we have
\[\phi_1(x, y) = \begin{cases} 0 &\tn{if } x \notin \sigma_1 \text{ or } y \notin \sigma_2 \\ \frac{2\pi i}{x - y} & \text{otherwise}\end{cases}\]
and
\[\phi_2(x, y) = \begin{cases} 0 &\tn{if } x \notin \sigma_2 \text{ or } y \notin \sigma_1 \\ \frac{2\pi i}{y - x} & \text{otherwise.}\end{cases}\]

Since $P(s) = P(s)^2$, the product rule yields $P'(s) = P(s)P'(s) + P'(s)P(s)$. Multiplying by $P(s)$ on the left and right gives $P(s)P'(s)P(s) = 0$. Combining these two equations gives
\begin{equation}\label{eq:Second P'(s)}
P'(s) = P(s)P'(s)(1-P(s)) + (1-P(s))P'(s)P(s).
\end{equation}
Let $E$ be the spectral measure associated to $H(s)$. By Lemma \ref{lem:P'(s)}  and \eqref{eq:Second P'(s)} we have
\begin{align*}
2\pi i P'(s) &=  \int_{\Gamma} P(s)R(z,s) \Phi'(s) R(z,s) (1-P(s)) \dd{z} \nonumber \\
&\quad \quad +  \int_{\Gamma} (1-P(s)) R(z,s) \Phi'(s) R(z,s) P(s) \dd{z} \nonumber \\
&= \int_{\Gamma} \left[ \left(\int_{\sigma}\alpha(x,z) \dd{E(x)} \right) \Phi'(s)\left(\int_{\sigma} \beta(y,z) \dd{E(y)} \right) \right] \dd{z} \nonumber \\
&\quad \quad +  \int_{\Gamma} \left[\left(\int_{\sigma}\beta(x,z) \dd{E(x)} \right)  \Phi'(s) \left(\int_{\sigma}\alpha(y,z) \dd{E(y)} \right)  \right] \dd{z} \nonumber
\end{align*}
Given our strict decompositions of $\phi_1$ and $\phi_2$ we can write $P'(s)$ in terms of double operator integrals as
\begin{equation} \label{eq:DOI_phi}
P'(s) = \frac{1}{2\pi i} \left(\int_{\sigma \times \sigma} \phi_1 \dd{\cG} \Phi'(s) +  \int_{\sigma \times \sigma} \phi_2 \dd{\cG} \Phi'(s)\right)
\end{equation}
where $\cG$ is the projection valued measure on $\sigma \times \sigma$ constructed from two copies of $E$.

We now rewrite the right hand side of \eqref{eq:Hastings_commutator_again} as a linear combination of double operator integrals. We again begin by defining two functions through strict decompositions. Let
\[
	Q = \qty{(t,u) \in \bbR \times \bbR: \abs{u} \leq \abs{t} \tn{ and } tu \geq 0}.
\]
Note that 
\[
	\int_\bbR \int_\bbR \abs{\chi_{Q}(t,u)w_\gamma(t)} \dd{u} \dd{t} = \int_\bbR \abs{tw_\gamma(t)} \dd{t} < \infty.
\]
Thus, we may define a complex measure by $\dd{\nu(t,u)} = \chi_Q(t,u)w_\gamma(t) \dd{t}\dd{u}$. Now define the maps  $\alpha_j, \beta_j: \sigma \times \bbR \times \bbR \to \bbC$ for $j=1,2$ by
\[\alpha_j(x, t, u) = \chi_{\sigma_2}(x) e^{(-1)^{j+1} iux} \qqtext{and} \beta_j(x, t, u) = \chi_{\sigma_1}(x) e^{(-1)^j iux}\,.\]
Note there is intentionally no dependence on $t$. These are clearly bounded measurable functions. Then we can define functions $\psi_1, \psi_2:\sigma \times \sigma  \to \bbC$ by strict decompositions
\[
	\psi_1(x,y) = \int_{\bbR \times \bbR} \alpha_1(x,t,u)\beta_1(y,t,u) \dd{\nu(t,u)}
\]
and
\[
	\psi_2(x,y) = \int_{\bbR \times \bbR} \beta_2(x,t,u)\alpha_2(y,t,u) \dd{\nu(t,u)}.
\]

Now, observe that 
\[
(1-P(s)) D(s) P(s) - P(s) D(s) (1-P(s)) = [D(s), P(s)]
\]
so that
\begin{align}
[D(s), P(s)]  &= \int_\bbR w_\gamma(t) \int_0^t (1-P(s)) e^{iuH(s)} \Phi'(s) e^{-iuH(s)} P(s) \dd{u} \dd{t} \nonumber \\
&\quad  - \int_\bbR w_\gamma(t) \int_0^t P(s) e^{iuH(s)} \Phi'(s) e^{-iuH(s)} (1-P(s)) \dd{u} \dd{t}. \nonumber 
\end{align}
For any $v,w \in \sH$, acting the first line of the right hand side on $w$ and taking an inner product with $v$ yields
\[
	\int_{\bbR \times \bbR} \ev{v, \int_\sigma \alpha_1(x,t,u)\dd{E(x)}\Phi'(s) \int_\sigma \beta_1(y,t,u) \dd{E(y)} w} \dd{\nu(t,u)}.
\]
Given our strict decomposition of $\psi_1$, we can write this as 
\[
\ev{v, \int_{\sigma \times \sigma} \psi_1 \dd{\cG}\qty(\Phi'(s)) w}.
\] Similar considerations apply to the second term on the right hand side of the equation for $[D(s),P(s)]$, so we get
\begin{equation}\label{eq:doi_[D(s),P(s)]}
	i[D(s),P(s)] = \int_{\sigma \times \sigma} i\psi_1 \dd{\cG} \Phi'(s) - \int_{\sigma \times \sigma} i\psi_2 \dd{\cG} \Phi'(s).
\end{equation}

Comparing \eqref{eq:DOI_phi} and \eqref{eq:doi_[D(s),P(s)]}, we see that the theorem will be proven if we can show 
\[
	\frac{1}{2\pi i} \phi_1 = -i\psi_2 \qqtext{and} \frac{1}{2\pi i} \phi_2 = i\psi_1.
\]
Thus, we compute $\psi_1(x,y)$:
\begin{align*}
i\psi_1(x,y) = i\int_\bbR w_\gamma(t) \int_0^t \chi_{\sigma_2}(x)\chi_{\sigma_1}(y) e^{iu(x-y)} \dd{u} \dd{t}
\end{align*}
We note that $\psi_1(x,y) = \phi_2(x,y) =0$ if $x \notin \sigma_2$ or $y \notin \sigma_1$. If $x \in \sigma_2$ and $y \in \sigma_1$, then we have
\begin{align*}
i\psi_1(x,y) &= \int_\bbR w_\gamma(t) \qty(\frac{e^{it(x-y)} - 1}{x-y}) \dd{t} = \frac{\hat w_\gamma(y-x) - 1 }{x-y}
\end{align*}
Since $\supp\,\hat w_\gamma \subset [-\gamma, \gamma]$ and $\abs{y - x} \geq \gamma$, we have $\hat w_\gamma(y-x) = 0$. Thus,
\[
	i\psi_1(x,y) = \frac{1}{y-x} = \frac{1}{2\pi i} \phi_2(x,y),
\]
as desired. The calculation that $\phi_1/2\pi i = -i\psi_2$ is similar. This concludes the proof.
\end{proof}

The operator $D(s)$ is called a generator because it can be used as an $s$-dependent Hamiltonian generating a family of unitaries. More precisely, it is shown in \cite[Prop.~2.2]{NSY_quasilocality1} that if $A:J \to \fB(\sH)$ is strongly continuous and pointwise self-adjoint and $s_0 \in J$, then there exists a unique strongly differentiable map $U:J \to \fB(\sH)$ such that $U(s_0) = \1$ and
\[
	U'(s) = iA(s)U(s)
\]
for all $s \in J$. Furthermore, $U(s)$ is unitary for all $s \in J$.

In order to apply this result with the Hastings generator in place of $A(s)$, we need to know that $D:J \to \fB(\sH)$ is strongly continuous. Recall from Lemma \ref{lem:exponential_difference} that
\[
	\norm{e^{iuH(s_2)} - e^{iuH(s_1)}} \leq \abs{u}\norm{\Phi(s_2) - \Phi(s_1)}
\]
for all $s_2, s_1 \in J$. In particular, norm-continuity of $s \mapsto e^{iuH(s)}$ follows from norm-continuity of $\Phi$.  Then for any $r \in J$ and $v \in \sH$ we have
\begin{align*}
\lim_{s \to r} D(s) v 
&= \lim_{s \to r} \int_{-\infty}^\infty w_\gamma(t) \int_0^t e^{iuH(s)} \Phi'(s) e^{-iuH(s)}v \dd{u} \dd{t} \\
&=  \int_{-\infty}^\infty w_\gamma(t) \int_0^t e^{iuH(r)} \Phi'(r) e^{-iuH(r)} v \dd{u} \dd{t} = D(r)v
\end{align*}
where the limit is passed through the integrals by the dominated convergence theorem, using the fact that $\int_\bbR \abs{t w_\gamma(t)} \dd{t} < \infty$ and the fact that $\|\Phi'(s)\|$ is uniformly bounded on compact sets.

Thus, \cite[Prop.~2.2]{NSY_quasilocality1} implies that there exists a unique strongly differentiable map $U:J \to \fB(\sH)$
\begin{equation}\label{eq:Hastings_U}
	U(s_0) = \1 \qqtext{and} U'(s) = iD(s)U(s).
\end{equation}
Note that the quotient rule for strong differentiability implies that $U(s)^*$ is also strongly differentiable with strong derivative
\[
	\frac{d}{ds} U(s)^* = -U(s)^*U'(s)U(s)^* = -iU(s)^*D(s).
\]

\begin{thm}
Let Assumption \ref{assmpt1} be given, let $w_\gamma$ be as in Assumption \ref{assmpt2}, let the Hastings generator $D(s)$ be as in \eqref{eq:hastings}, let $s_0 \in J$, and let $U:J \to \fB(\sH)$ be the unique strongly differentiable map satisfying \eqref{eq:Hastings_U}. Then the spectral projections $P(s)$ associated with the isolated portion of the spectrum $\sigma_1(s)$ are given by
\[P(s) = U(s)P(s_0)U(s)^*\]
for all $s \in J$.
\end{thm}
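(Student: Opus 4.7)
The plan is to show that the operator-valued function $Q : J \to \fB(\sH)$ defined by $Q(s) \defeq U(s)^* P(s) U(s)$ is constant, equal to $Q(s_0) = P(s_0)$. Rearranging and using unitarity of $U(s)$ then gives the stated formula.

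First I would verify that $Q$ is strongly differentiable and compute $Q'(s)$ using the product rule for strong differentiability that was proved earlier in this section (applied twice, since $Q$ is a product of three strongly differentiable operator-valued maps). The ingredients are $U'(s) = iD(s)U(s)$, $(U^*)'(s) = -iU(s)^* D(s)$ (obtained by the quotient rule together with $U(s)^{-1} = U(s)^*$, as recalled in the discussion preceding the theorem), and $P'(s) = i[D(s),P(s)]$ from Theorem \ref{thm:Hastings_commutator}. For every $v \in \sH$,
\begin{align*}
Q'(s)v &= -iU(s)^*D(s)P(s)U(s)v + U(s)^*\bigl(i[D(s),P(s)]\bigr)U(s)v \\
&\qquad + iU(s)^*P(s)D(s)U(s)v,
\end{align*}
and expanding the commutator shows that the six resulting terms cancel in pairs, giving $Q'(s)v = 0$ for every $s \in J$ and every $v \in \sH$.

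Next I would argue that $Q$ is itself constant on $J$. For each $v \in \sH$, the map $s \mapsto Q(s)v$ is Fr\'echet differentiable with derivative identically zero; by the standard mean value inequality on an interval of $\bbR$, this map is therefore constant, so $Q(s)v = Q(s_0)v = U(s_0)^* P(s_0) U(s_0) v = P(s_0)v$. Since $v$ is arbitrary, $Q(s) = P(s_0)$ as operators for every $s \in J$. Multiplying on the left by $U(s)$ and on the right by $U(s)^*$, and using that $U(s)U(s)^* = U(s)^*U(s) = \1$ (which holds because $U(s)$ is unitary, as noted after \eqref{eq:Hastings_U}), yields $P(s) = U(s)P(s_0)U(s)^*$.

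The step requiring the most care is the derivative calculation: the product rule is stated for a product of two strongly differentiable maps, so I would first apply it to the pair $(U(s)^*, P(s)U(s))$, then again to $(P(s),U(s))$, keeping track of the pointwise product of bounded operators on the vector $v$. No real obstacle is anticipated beyond this bookkeeping, because all the operators involved are bounded and the needed differentiation rules and the unitarity of $U(s)$ are already in place.
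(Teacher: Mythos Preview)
Your proposal is correct and follows essentially the same approach as the paper: define $Q(s) = U(s)^*P(s)U(s)$, compute $Q'(s) = 0$ via the product rule together with $P'(s) = i[D(s),P(s)]$, conclude $Q$ is constant equal to $P(s_0)$, and then conjugate by $U(s)$.
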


\begin{proof}
Note $U(s_0)^* P(s_0) U(s_0) = P(s_0)$ since $U(s_0)=\mathds{1}$ by our initial condition. Furthermore, $U(s)^*P(s)U(s)$ is strongly differentiable by the product rule, with derivative:
\begin{align*}
\frac{d}{ds}U(s)^*P(s)U(s) &= -iU(s)^*D(s)P(s)U(s) + iU(s)^*[D(s),P(s)]U(s) \\
&\qquad \qquad + iU(s)^*P(s)D(s)U(s)\\
&= 0.
\end{align*}
It follows that $U(s)^*P(s)U(s)$ is constant at $P(s_0)$. The result follows by conjugating with $U(s)$.
\end{proof}

\bibliographystyle{amsalpha}
\bibliography{dbl_op_int.bib}

\end{document}